\def\isarxiv{1}
\def\debug{1}

\ifdefined\isarxiv
\documentclass[11pt]{article}
\else
\documentclass[a4paper,USenglish, autoref, thm-restate]{lipics-v2021}
\nolinenumbers
\fi

\usepackage{amsmath}
\usepackage{amsthm}
\usepackage{cancel}
\usepackage{amssymb}
\usepackage{algorithm}
\usepackage{subfig}
\usepackage{color}
\usepackage{algpseudocode}

\ifdefined\isarxiv
\DeclareMathAlphabet{\mathpzc}{OT1}{pzc}{m}{it}

\usepackage{tikz}
\usepackage{hyperref}
\hypersetup{colorlinks=true,citecolor=red,linkcolor=blue}
\usetikzlibrary{arrows}
\usepackage[margin=1in]{geometry}
\graphicspath{{./figs/}}

\usepackage{mathtools}

\else
\graphicspath{{./figs/}}
\fi

\ifdefined\isarxiv

\newtheorem{theorem}{Theorem}[section]
\newtheorem{lemma}[theorem]{Lemma}
\newtheorem{definition}[theorem]{Definition}

\newtheorem{assumption}[theorem]{Assumption}

\newtheorem{fact}[theorem]{Fact}
\newtheorem{remark}[theorem]{Remark}

\else
\newtheorem{fact}[theorem]{Fact}
\newtheorem{assumption}[theorem]{Assumption}
\fi

\newcommand{\wh}{\widehat}
\newcommand{\wt}{\widetilde}
\newcommand{\ov}{\overline}
\newcommand{\eps}{\epsilon}

\newcommand{\R}{\mathbb{R}}

\renewcommand{\varepsilon}{\epsilon}

\renewcommand{\eps}{\epsilon}
\renewcommand{\d}{\mathrm{d}}

\newcommand{\tmp}{\mathrm{tmp}}

\newcommand{\ipm}{\mathrm{ipm}}

\newcommand{\dmax}{\mathrm{d}_{\mathrm{max}}}
\newcommand{\OPT}{\mathrm{OPT}}
\newcommand{\final}{\mathrm{final}}
\newcommand{\start}{\mathrm{start}}
\newcommand{\init}{\mathrm{init}}

\DeclareMathOperator*{\E}{{\mathbb{E}}}

\DeclareMathOperator*{\Z}{\mathbb{Z}}

\DeclareMathOperator{\supp}{supp}

\DeclareMathOperator{\poly}{poly}

\DeclareMathOperator{\nnz}{nnz}

\DeclareMathOperator{\tr}{tr}

\DeclareMathOperator{\diag}{diag}

\DeclareMathOperator{\new}{new}

\definecolor{b2}{RGB}{51,153,255}
\definecolor{mygreen}{RGB}{80,180,0}
\definecolor{yl}{RGB}{255,80,0}
\definecolor{zhj}{RGB}{255,50,200}
\definecolor{mycy2}{RGB}{255,51,255}

\makeatletter
\newcommand*{\RN}[1]{\expandafter\@slowromancap\romannumeral #1@}
\makeatother
\title{Space-Efficient Interior Point Method, with applications to Linear Programming and Maximum Weight Bipartite Matching\footnote{A preliminary version of this paper appeared in the proceedings of 50th EATCS International Colloquium on Automata,
Languages and Programming (ICALP 2023).}}
\ifdefined\isarxiv

\else
\titlerunning{Space-Efficient IPM for LP and Maximum Weight Bipartite Matching}
\fi

\usepackage{lineno}

\date{}

\ifdefined\isarxiv
\author{ 
S. Cliff Liu\thanks{Carnegie Mellon University, \texttt{cliffliu@andrew.cmu.edu}.}
\and
Zhao Song\thanks{Adobe Research, \texttt{zsong@adobe.com}.}
\and
Hengjie Zhang\thanks{Columbia University, \texttt{hengjie.z@columbia.edu}.}
\and
Lichen Zhang\thanks{Massachusetts Institute of Technology, \texttt{lichenz@mit.edu}. Supported by NSF grant No. CCF-1955217 and NSF grant No. CCF-2022448.}
\and
Tianyi Zhou\thanks{University of California, San Diego, \texttt{t8zhou@ucsd.edu}.}
}
\else 
\author{S. Cliff Liu}{Carnegie Mellon University, PA, USA}{cliffliu@andrew.cmu.edu}{}{}
\author{Zhao Song}{Adobe Research, CA, USA}{zsong@adobe.com}{}{}
\author{Hengjie Zhang}{Columbia University, NY, USA}{hengjie.z@columbia.edu}{}{}
\author{Lichen Zhang}{Massachusetts Institute of Technology, MA, USA}{lichenz@mit.edu}{}{Supported by NSF grant No. CCF-1955217 and NSF grant No. CCF-2022448.}
\author{Tianyi Zhou}{University of California San Diego, CA, USA}{t8zhou@ucsd.edu}{}{}

\authorrunning{S. Liu, Z. Song, H. Zhang, L. Zhang and T. Zhou}
\Copyright{S. Cliff Liu, Zhao Song, Hengjie Zhang, Lichen Zhang and Tianyi Zhou}
\ccsdesc[500]{Theory of computation~Streaming, sublinear and near linear time algorithms}
\ccsdesc[500]{Theory of computation~Linear programming}

\keywords{Convex optimization, interior point method, streaming algorithm}
\ArticleNo{15}
\fi
\begin{document}

\ifdefined\isarxiv
\begin{titlepage}
  \maketitle
 \begin{abstract}
 We study the problem of solving linear program in the streaming model. Given a constraint matrix $A\in \R^{m\times n}$ and vectors $b\in \R^m, c\in \R^n$, we develop a space-efficient interior point method that optimizes solely on the dual program. To this end, we obtain efficient algorithms for various different problems:
\begin{itemize}
    \item For general linear programs, we can solve them in $\wt O(\sqrt n\log(1/\epsilon))$ passes and $\wt O(n^2)$ space for an $\epsilon$-approximate solution. To the best of our knowledge, this is the most efficient LP solver in streaming with no polynomial dependence on $m$ for both space and passes.
    \item For bipartite graphs, we can solve the minimum vertex cover and maximum weight matching problem in $\wt O(\sqrt{m})$ passes and $\wt O(n)$ space.
\end{itemize}

In addition to our space-efficient IPM, we also give algorithms for solving SDD systems and isolation lemma in $\wt O(n)$ spaces, which are the cornerstones for our graph results.

 \end{abstract}
  \thispagestyle{empty}
\end{titlepage}
\newpage 
\else 
 \maketitle
 \begin{abstract}
 
 \end{abstract}
\fi
\ifdefined\isarxiv
{
\hypersetup{linkcolor=black}
\tableofcontents
}
\newpage
\else 

\fi



\section{Introduction}

Given a constraint matrix $A\in \R^{m\times n}$, vectors $b\in \R^m$ and $c\in \R^n$, the linear program problem asks us to solve the primal program $(P)$ or its dual $(D)$:
\begin{align}\label{eq:lp_primal_dual}
    (P)=\max_{A^\top y\leq c, y\geq 0}~b^\top y & ~ \text{and}~(D)=\min_{Ax\geq b}~c^\top x
\end{align}
is one of the most fundamental problems in computer science and operational research. Many efforts have been dedicated to develop time-efficient linear program solvers in the past half a century, such as the simplex method~\cite{d51}, ellipsoid method~\cite{k80} and interior point method~\cite{k84}. In the last few years, speeding up linear program solve via interior point method (IPM) has been heavily studied~\cite{cls19,lsz19,blss20,jswz21,sy21,dly21,y21}. The state-of-the-art IPM has the runtime of $O(m^{2+1/18}+m^\omega)$ when $m\approx n$ and $O(mn+n^3)$ when $m\gg n$. To achieve these impressive improvements, most of these algorithms utilize randomized and dynamic data structures to maintain the primal and dual solutions simultaneously. While these algorithms are time-efficient, it is highly unlikely that they can be implemented in a \emph{space-efficient} manner: maintaining the primal-dual formulation requires $\Omega(m+n^2)$ space, which is particularly unsatisfactory when $m\gg n$.

In this paper, we study the problem of solving a linear program in the streaming model: At each pass, we can query the $i$-th row of $A$ and the corresponding of the $b$. The goal is to design an LP solver that is both space and pass-efficient. By efficient, our objective is to obtain an algorithm with no polynomial dependence on $m$, or more concretely, we present a robust IPM framework that uses only $\wt O(n^2)$ space and $\wt O(\sqrt n \log(1/\epsilon))$ passes.\footnote{We use $\wt O(\cdot)$ notation to hide polylogarithmic dependence on $n$ and $m$.} To the best of our knowledge, this is the most efficient streaming LP algorithm that achieves a space and pass independent of $m$. Current best streaming algorithms for LP either require $\Omega(n)$ passes or $\Omega(n^2+m^2)$ space for $O(\sqrt{n})$ passes. For the regime of tall dense LP ($m\gg n$), our algorithm achieves the best space and passes.

The key ingredient for obtaining these LP algorithms is a paradigm shift from the time-efficient primal-dual IPM to a less time-efficient dual-only IPM~\cite{r88}. From a time perspective, dual-only IPM requires $\wt O(\sqrt{n}\log(1/\epsilon))$ iterations, with each iteration can be computed in $\wt O(mn+\poly(n))$ time. However, it is much more space-efficient than that of primal-dual approach. Specifically, we show that per iteration, it suffices to maintain an $n\times n$ Hessian matrix in place. To obtain $\wt O(\sqrt{n}\log(1/\epsilon))$ passes, we show that non-trivial quantities such as the Lewis weights~\cite{l78,cp15} can be computed recursively, in an in-place fashion with only $\wt O(n^2)$ space.

Now that we have a space and pass-efficient IPM for general LP in the streaming model, we instantiate it with applications for graph problems in the semi-streaming model. In the semi-streaming model, each edge is revealed along with its weight in an online fashion and might subject to an adversarial order, and the algorithm is allowed to make multiples passes over the stream in $\wt O(n)$ space.\footnote{Some authors define the space in the streaming model to be the number of cells, where each cell can hold $O(\log n)$ bits or even a number with infinite precision. Our bounds remain unchanged even if each cell only holds $O(1)$ bits, i.e., when arithmetic only applies to $O(1)$-bits operands.} We particularly focus on the \emph{maximum weight bipartite matching} problem, in which the edges with weights are streamed to us, and the goal is to find a matching that maximizes the total weights in it. While there is a long line of research (\cite{ag11,k13,dno14,ag18,alt20} to name a few) on this problem, most algorithms can only compute an approximate matching, meaning that the weight is at least $(1-\epsilon)$ of the maximum weight. For the case of exact matching, a recent work~\cite{ajj+22} provides an algorithm that takes $n^{4/3+o(1)}$ passes in $\wt O(n)$ space for computing a maximum \emph{cardinality} matching. It remains an open question to compute an exact maximum \emph{weight} bipartite matching in semi-streaming model, with $o(n)$ passes.

We answer this question by presenting a semi-streaming algorithm that uses $\wt O(n)$ space and $\wt O(\sqrt{m})$ passes, this means that as long as the graph is relatively sparse, i.e., $m=o(n^2)$, we achieve $o(n)$ passes. To obtain an $\wt O(n)$ space algorithm for \emph{any graph}, we require additional machinery; more specifically, for each iteration of our dual-only IPM, we need to compute the Newton step via a symmetric diagonal dominant (SDD) solve in $\wt O(n)$ space. Since the seminal work of Spielman and Teng~\cite{st04_sdd}, many efforts have been dedicated in designing a time-efficient SDD system solver~\cite{kmp10,kmp11,kosz13,ckm+14}. This solvers run in $\wt O(m)$ time with improved dependence on the logarithmic terms. However, all of them require $\wt \Theta(m)$ space. To achieve $\wt O(n)$ space, we make use of small-space spectral sparsifiers~\cite{klmss17} as preconditioners to solve the system in a space and pass-efficient manner.

Finally, we note that with $\wt O(n)$ space, we essentially solve the dual problem, which is the \emph{generalized minimum vertex cover} on bipartite graph. To turn a solution on vertices to a solution on edges, we utilize the isolation lemma~\cite{mvv87} and implement it in $\wt O(n)$ bits via a construction due to~\cite{crs95}.

\subsection{Our contribution}

In this section, we showcase three main results of this paper and discuss their consequences.

The first result regards solving a general linear program in the streaming model with $\wt O(n^2)$ space and $\wt O(\sqrt{n}\log(1/\epsilon))$ passes.

\begin{theorem}[
\ifdefined\isarxiv
General LP, informal version of Theorem \ref{thm:nsquare_lsbarrier}
\else
General LP, informal version of Theorem 7.4 in Full version \cite{full}
\fi
]\label{thm:nsquare_lsbarrier_informal}

Given a linear program with $m$ constraints and $n$ variables and $m\geq n$ in the streaming model, there exists an algorithm that outputs an $\epsilon$-approximate solution to the dual program (Eq.~\eqref{eq:lp_primal_dual}) in $\wt O(n^2)$ space and $\wt O(\sqrt{n}\log(1/\epsilon))$ passes.
\end{theorem}

By $\epsilon$-approximate solution, we mean that the algorithm finds $x\in \R^n$ such that $c^\top x-c^\top x^*\leq \epsilon$, where $x^*$ is the optimal solution. The key to obtain our result is a small space implementation of leverage score and Lewis weights, so that we can utilize the Lee-Sidford barrier~\cite{ls14},  with the number of passes depending on the smaller dimension.

In conjunction with an SDD solver in $\wt O(n)$ space, our next result shows that in the semi-streaming model, we can solve the minimum vertex cover problem on a bipartite graph with $\wt O(\sqrt{m})$ passes.

\begin{theorem}[
\ifdefined\isarxiv
Minimum vertex cover, informal version of Theorem \ref{thm:minimum_vertex_cover}
\else 
Minimum vertex cover, informal version of Theorem 9.7 in Full version \cite{full}
\fi 
]
    Given a bipartite graph $G$ with $n$ vertices and $m$ edges, there exists a streaming algorithm that computes a minimum vertex cover of $G$ in $\wt{O}(\sqrt{m})$ passes and $\wt{O}(n)$ space with probability $1-1/\poly(n)$.\footnote{We can actually solve a \emph{generalized} version of the minimum vertex cover problem in bipartite graph: each edge $e$ needs to be \emph{covered} for at least $b_e \in \Z^{+}$ times, where the case of $b = \mathbf{1}_m$ is the classic minimum vertex cover.}
\end{theorem}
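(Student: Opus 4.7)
The plan is to invoke the streaming IPM machinery developed for Theorem~\ref{thm:main_theorem_in_intro}. The generalized minimum vertex cover in a bipartite graph is the LP
\[
\min_{x \ge 0}\ \sum_{v \in V} x_v \quad \text{s.t.} \quad x_u + x_v \ge b_e \ \text{ for all } e=(u,v)\in E,
\]
which is exactly the LP dual of the generalized maximum weight bipartite matching LP that is already solved inside Theorem~\ref{thm:main_theorem_in_intro}. Its constraint matrix is the bipartite vertex-edge incidence matrix, hence totally unimodular, and the right-hand sides $b_e \in \Z^{+}$ are integer, so every basic feasible solution is integer-valued and the LP optimum coincides with the integer optimum.

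First, I would run the space-efficient path-following IPM of Theorem~\ref{thm:main_theorem_in_intro} directly on this LP for $\wt{O}(\sqrt{m})$ iterations, each of which uses $O(\log n)$ passes and $\wt{O}(n)$ space via the streaming SDD solver. This produces a $1/\poly(n)$-accurate fractional iterate $\wt{x}$ of the vertex cover LP. Compared to the main theorem, the reduction is strictly easier: we keep the vertex cover iterate $\wt{x}$ itself rather than converting it back to a primal matching, which incurs no additional passes or space.

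To recover the exact integer optimum from $\wt{x}$, I would reuse the isolation-lemma-style perturbation of the objective employed in the main algorithm, instantiated via the $\wt{O}(n)$-random-bit construction of \cite{crs95} so that the streaming space bound is preserved. The perturbation forces a unique optimal vertex $x^{\star}$ of the LP polytope; since the polytope is integral and the perturbation separates the optimal objective value from every other vertex value by at least $1/\poly(n)$, standard IPM crossover from the $1/\poly(n)$-accurate $\wt{x}$ exactly recovers the integer vertex $x^{\star}$ with probability $1-1/\poly(n)$.

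The main obstacle is synchronizing the IPM accuracy, the perturbation scale, and the pass/space budget so that the final rounding succeeds with high probability while the isolation lemma is carried out with only $\wt{O}(n)$ random bits in a streaming fashion. However, this is essentially the same tension already resolved inside the dual-to-primal reduction of Theorem~\ref{thm:main_theorem_in_intro}, so that analysis carries over directly; the $\wt{O}(\sqrt{m})$ pass and $\wt{O}(n)$ space bounds are inherited verbatim.
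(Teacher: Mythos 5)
Your proposal is correct and follows essentially the same route as the paper: run the streaming IPM directly on the (generalized) fractional vertex cover LP, use total unimodularity of the bipartite incidence matrix for integrality, perturb the objective to isolate a unique optimal vertex, and round the $1/\poly(n)$-accurate iterate to that integral vertex. The only minor difference is that the paper perturbs the $n$-dimensional objective via Lemma 43 of~\cite{ls14} (whose $n$ random integers fit in memory directly), so the $\wt{O}(n)$-random-bit construction of~\cite{crs95} that you invoke is not actually needed for this theorem --- in the paper it is reserved for the $m$-dimensional edge-weight perturbation used in recovering the primal matching.
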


The reason we end up with $\wt O(\sqrt{m})$ passes instead of $\wt O(\sqrt{n})$ passes is that to compute some fundamental quantities such as leverage scores or Lewis weights, we need to solve $\Theta(m)$ SDD systems and result in a total of $\wt O(m\sqrt{n})$ passes. By using the logarithmic barrier, we only need to solve $O(1)$ SDD systems per iteration, which gives the $\wt O(\sqrt{m})$ passes.

We are now ready to present our result for bipartite matching in $\wt O(\sqrt{m})$ passes, which solves the longstanding problem of whether maximum weight matching can be solved in $o(n)$ passes for any $m=n^{2-c}$ with $c>0$.

\begin{theorem}[
\ifdefined\isarxiv
Maximum weight bipartite matching, informal version of Theorem~\ref{thm:main_theorem}
\else
Maximum weight bipartite matching, informal version of Theorem 10.1 in Full version \cite{full}
\fi
]\label{thm:main_theorem_in_intro}
    Given a bipartite graph $G$ with $n$ vertices and $m$ edges, there exists a streaming algorithm that computes an (exact) maximum weight matching of $G$ in $\wt{O}(\sqrt{m})$ passes and $\wt{O}(n)$ space with  probability $1-1/\poly(n)$.
\end{theorem}

Our matching result relies on turning the solution to the dual minimum vertex cover problem, to a primal solution for the maximum weight matching. We achieve so by an $\wt O(n)$ space implementation of the isolation lemma~\cite{mvv87,crs95}.

\subsection{Related work} \label{sec:related_work}

{\bf Interior point method for solving LP.}
The interior point method was originally proposed by Karmarkar \cite{k84} for solving linear program. Since then, there is a long line of work on speeding up interior point method for solving classical optimization problems, e.g., linear program \cite{v87,r88,v89_lp,nn89,ds08,ls13_path2,ls14,ls15,cls19,lsz19,ls19,b20,blss20,y21,jswz21,dly21,sy21,gs22}. 
In 1987, the running time of LP solver becomes $O(n^3)$ \cite{v87,r88}.  
In 1989, Vaidya proposed an $O(n^{2.5})$ LP solver based on a specific implementation of IPMs, known as the \emph{central path algorithm} \cite{v87, v89_lp}. Lee and Sidford show how to solve LP in $\sqrt{n}(\nnz(A) + n^{\omega})$ time \cite{ls13_path1,ls13_path2,ls14}, where $\omega$ is the exponent of matrix multiplication \cite{w12,l14,aw21}\footnote{Currently, $\omega \approx 2.37$.} (the first $\sqrt{n}$-iteration IPM).  In 2019, 
\cite{cls19} show how to solve LP in $n^{\omega} + n^{2.5 - \alpha} + n^{2+1/6}$, where $\alpha$ is the dual exponent of matrix multiplication \cite{gu18}\footnote{Currently, $\alpha \approx 0.31$.}. This is the first breakthrough result improving $O(n^{2.5})$ from 30 years ago. 
Later, 
\cite{jswz21} improved that running time to $n^{\omega} + n^{2.5 - \alpha} + n^{2+1/18}$ by maintaining two layers of data-structure instead of one layer of data-structure as \cite{cls19}'s algorithm. 
In 2020, \cite{blss20} improved the running time of LP solver on tall matrices to $mn$ when $m \geq \poly(n)$. Another line of work focuses on solving linear program with small treewidth~\cite{dly21,y21} in time $\wt O(m\tau^2)$.

{\bf Small space algorithms for solving LP.}
Simplex algorithm is another popular approach to solve linear programs. It has an even better compatibility with streaming algorithms. 
For instance, \cite{cc07} shows that the non-recursive implementation of Clarkson's algorithm \cite{c95} gives a streaming LP solver that uses $O(n)$ passes and $\wt{O}(n \sqrt{m})$ space.
They also show that the recursive implementation gives a streaming LP solver that uses $n^{O(1/\delta)}$ passes and $(n^2 + m^{\delta}) \poly(1/\delta)$ space.
\cite{akz19} proposes a streaming algorithm for solving $n$-dimensional LP that uses $O(n r)$ pass and $O(m^{1/r} ) \poly(n, \log m)$ space, where $r \geq 1$ is a parameter. All above algorithms needs space depending on $m$.

{\bf Streaming algorithms for approximate matching.}
Maximum matching has been extensively studied in the streaming model for decades, where almost all of them fall into the category of approximation algorithms. 
For algorithms that only make one pass over the edges stream, researchers make continuous progress on pushing the constant approximation ratio above $1/2$, which is under the assumption that the edges are arrived in a uniform random order
\cite{kks14, abb+19, fhm+20, b20_matching}. 
The random-order assumption makes the problem easier (at least algorithmically). 
A more general setting is multi-pass streaming with adversarial edge arriving.
Under this setting, the first streaming algorithm that beats the $1/2$-approximation of bipartite cardinality matching is \cite{fkm+04}, giving a $2/3 \cdot (1-\eps)$-approximation in $1/\eps \cdot \log(1 / \eps)$ passes.
The first to achieve a $(1-\eps)$-approximation is \cite{m05}, which takes $(1/\eps)^{1/\eps}$ passes.\footnote{For the weighted case, there is a $(1/2-\eps)$-approximation algorithm that only takes one pass \cite{ps17}.}
Since then, there is a long line of research in proving upper bounds and lower bounds on the number of passes to compute a maximum matching in the streaming model
\cite{ag11, ekms12, gkk12, ekms12, k13, dno14, ag18, aksy20, ar20, alt20} (see next subsection for more details).
Notably, \cite{ag11, ag18} use linear programming and duality theory (see the next subsection for more details). 

However, all the algorithms above can only compute an approximate maximum matching: to compute a matching whose size is at least $(1 - \epsilon)$ times the optimal, one needs to spend $\text{poly}(1/\epsilon)$ passes (see \cite{dno14, ag18} and the references therein). For readers who are interested in the previous techniques for solving matching, we refer to 
\ifdefined\isarxiv
Section~\ref{sec:summary_non_ipm}
\else
Section~A in full version \cite{full}.
\fi
which contains a brief summary.

{\bf Recent developments for exact matching.}
Recently, \cite{ajj+22} proposes an algorithm that computes a $(1-\eps)$-approximate maximum \emph{cardinality} matching in $O(\eps^{-1}\log n \log\eps^{-1})$ passes and $\wt{O}(n)$ space. 
Their method leverages recent advances in $\ell_1$-regression with several ideas for implementing it in small space, leading to a streaming algorithm with no dependence on $\eps$ in the space usage, and thus improving over \cite{ag18}. 
The resulted semi-streaming algorithm computes an exact maximum \emph{cardinality} matching (not for weighted) in $n^{3/4 + o(1)}$ passes.

{\bf Streaming spectral sparsifer.}
Initialized by the study of cut sparsifier in the streaming model \cite{ag09}, a simple one-pass semi-streaming algorithm for computing a spectral sparsifier of any weighted graph is given in \cite{kl11}, which suffices for our applications in this paper.
The problem becomes more challenging in a dynamic setting, i.e., both insertion and deletion of edges from the graph are allowed. 
Using the idea of linear sketching, \cite{klmss17} gives a single-pass semi-streaming algorithm for computing the spectral sparsifier in the dynamic setting. 
However, their brute-force approach to recover the sparsifier from the sketching uses $\Omega(n^2)$ time. 
An improved recover time is given in \cite{kmm+19} but requires more spaces, e.g., $\epsilon^{-2} n^{1.5} \log^{O(1)} n$.
Finally, \cite{knst19} proposes a single-pass semi-streaming algorithm that uses $\eps^{-2} n\log^{O(1)} n$ space and $\eps^{-2} n\log^{O(1)} n$ recover time to compute an $\eps$-spectral sparsifier which has $O(\epsilon^{-2} n \log n)$ edges. 
Note that $\Omega( \epsilon^{-2}  n \log n)$ space is necessary for this problem \cite{ckst19}.

{\bf SDD solver.}
There is a long line of work focusing on fast SDD solvers \cite{st04,kmp10,kmp11,kosz13,ckm+14,ps14,ks16}. 
Spielman and Teng give the first nearly-linear time SDD solver, which is simplified with a better running time in later works. 
The current fastest SDD solver runs in $O(m \log^{1/2} n \poly(\log\log n) \log(1/\eps))$ time \cite{ckm+14}. 
All of them require $\wt{\Theta}(m)$ space.

\section{Technical overview}

We start with an overview of our IPM framework. We first note that many recent fast IPM algorithms do not fit into $\wt O(n^2)$ space. Algorithms such as~\cite{ls14,jswz21,blss20} need to maintain \emph{both} primal and dual solutions, thus require $\Omega(m)$ space. In fact, any algorithms that rely on the primal formulation will need $\Omega(m)$ space to maintain the solution. To bypass this issue, we draw inspiration from the state-of-the-art SDP solver~\cite{hjst21}: in their setting, $m=\Omega(n^2)$, which means any operation on the dimension $m$ will be too expensive to perform. They instead resort to the \emph{dual-only} formulation. The dual formulation provides a more straightforward optimization framework on small dimension and makes it harder to maintain key quantities. This is exactly what we want: an algorithm that operates on the smaller dimension, removing the polynomial dependence on $m$. While efficient maintenance is the key to design time-efficient IPM, it is less a concern for us since our constraining resource is space, not time. To this end, we show that Renegar's IPM algorithm~\cite{r88} can be implemented in a streaming fashion with only $\wt O(n^2)$ space. As the number of passes of an IPM crucially depends on the barrier function being used, the~\cite{r88} algorithm only gives a pass bound of $\wt O(\sqrt m\log(1/\epsilon))$. To further improve the number of passes required, we show that the nearly-universal barrier of Lee and Sidford~\cite{ls14,ls19} can also be implemented in $\wt O(n^2)$ space. This involves computing Lewis weights in an extremely space-efficient manner. We present a recursive algorithm with $\wt O(1)$ depth, based on~\cite{flps21}, that uses only $\wt O(n^2)$ space. This gives the desired $\wt O(\sqrt n\log(1/\epsilon))$ passes.

We now turn to our graph results, which is a novel combination of the space-efficient IPM, SDD solvers, duality and the isolation lemma. Note that for both graph problems only allow $\wt O(n)$ space, so it won't suffice to directly apply our IPM algorithms .

To give a better illustration of the $\wt{O}(n)$ space constraint, note that storing a matching already takes $\wt{\Theta}(n)$ space, meaning that we have only a 
polylogarithmic space overhead per vertex to store auxiliary information. 
The conventional way of solving maximum bipartite matching using an IPM solver would get stuck at 
the very beginning - maintaining the solution of the relaxed linear program, which is a fractional matching, already requires $\Omega(m)$ space for storing all 
LP constraints, which seems inevitable. 

Our key insight is to show that solving the \emph{dual} form of the above LP, which corresponds to the generalized (fractional) minimum vertex cover problem, is sufficient, and therefore only $\wt{O}(n)$ space is needed for maintaining a fractional solution. 
We use several techniques to establish this argument. 
The first idea is to use complementary slackness for the dual solution to learn which $n$ edges will be in the final maximum matching 
and therefore reduce the size of the graph from $m$ to $n$. 
However, this is not always the case: For instance, in a bipartite graph that admits a perfect matching, all left vertices form a minimum vertex cover, but the complementary slackness theorem gives no information on which edges are in the perfect matching. 
To circumvent this problem, we need to slightly perturb the weight on every edge, so that the 
minimum vertex cover (which is now unique) indeed provides enough information. 
We use the isolation lemma \cite{mvv87} to realize this objective.

It is then instructive to implement the isolation lemma in limited space. 
Perturbing the weight on every edge 
requires storing $\wt{O}(m)$ bits of randomness, since the perturbation should remain identical across two different passes. 
We bypass this issue by using the 
generalized isolation lemma proposed by \cite{crs95}, in which only $O(\log(Z))$ bits of randomness is needed, where $Z$ is the number of candidates. 
In our case, $Z\leq n^n$ is the number of all possible matchings. So $\wt{O}(n)$ space usage perfectly fits into the semi-streaming model. 
We design an oracle that stores $\wt{O}(n)$ random bits and outputs the same perturbations for all edges in all passes.

Now that we can focus on solving the minimum vertex cover problem in $\wt O(n)$ space. When the constraint matrix is an incidence matrix, each iteration of our IPM can be implemented as an SDD (or Laplacian) solver, so it suffices to show how to solve SDD system in the semi-streaming model, which, to the best of our knowledge, has not been done prior to our work.

In the following subsections we elaborate on each of the above components: 
\begin{itemize}
	\item In Section~\ref{sec:our_dual_only}, we provide a high-level picture of how our dual-only interior point method works. 
 
	\item In Section~\ref{sec:evidence_space}, we show evidences that our interior point method can run in space independent of $m$ for all of the three different barriers.
 
	\item In Section~\ref{sec:our_techniques_streaming_implementation}, we describe our contribution on our implementations of SDD solver, IPM, and the isolation lemma in the streaming model. We show a novel application of the isolation lemma to turn dual into primal.
\end{itemize}

\subsection{Dual-only robust IPM}\label{sec:our_dual_only}

The cornerstone of our results is to design a robust IPM framework that works only on the dual formulation of the linear program. The framework fits in barriers including the logarithmic barrier, hybrid barrier and Lee-Sidford barrier. It is also robust enough as it can tolerate approximation errors in many quantities, while preserving the convergence behavior.

Algorithm~\ref{alg:ipm_in_intro} is a simplified version of our dual-only IPM. The earlier works of Renegar's algorithm \cite{r88} require the Newton's direction be computed exactly as $\Delta x = -H(x)^{-1} \nabla f_t(x)$, in order to get double exponential convergence rate of Newton's method. To strengthen its guarantee, we develop a more robust framework for this IPM. Specifically, we show that the Hessian of the barrier functions, the gradient and the Newton's direction can all be approximated. This requires a much more refined error analysis. Below, we carefully bound the compound errors caused by three layers of approximations.
 
First, from $\Delta x$ to $\delta_x$ (Line~\ref{line:intro_delta_x}), we allow our Hessian to be spectrally approximated within any small constant factor. This provides us enough leeway to implement the Hessian of barrier functions in a space-efficient manner. For example, the Hessian of the volumetric barrier is $H(x) = A_x^{\top}(3\Sigma_x - 2P_x^{(2)})A_x$, where $\Sigma_x$ is a diagonal matrix and $P_x^{(2)}$ is taking entry-wise square of a dense projection matrix. But $\wt{H}(x) = A_x^{\top}\Sigma_x A_x$ is a 5-approximation of $H(x)$ and we can compute it in the same space as computing leverage scores.

Second, from $\delta_x$ (Line~\ref{line:intro_delta_x}) to $\delta_x'$ (Line~\ref{line:intro_delta_x_2}), we allow approximation on the gradient in the sense that it has small local norm with respect to the true gradient, i.e., $\|\nabla f_t(x) - \wt{\nabla} f_t(x)\|_{H(x)^{-1}} \leq 0.1$.\footnote{For a vector $y$ and a positive semidefinite matrix $A$, we define $\| y \|_A := \sqrt{y^\top A y}$.} To give a concrete example, let $\sigma\in \R^m$ denote the leverage score vector, and suppose the Hessian matrix is in the form of $H(x) = A^{\top}\Sigma A$ and the gradient is $\nabla f(x) = A^\top \sigma$. The leverage score $\sigma$ can then be approximated in an entry-wise fashion: each entry can tolerate a multiplicative $(1\pm O(1/\sqrt{n}))$ error. This is because
\begin{align*}
    \|\nabla f_t(x) - \wt{\nabla} f_t(x)\|_{H(x)^{-1}}^2 
    = & ~ \mathbf{1}_m^{\top}(\Delta \Sigma) A(A^{\top}\Sigma A)^{-1} A^{\top} (\Delta \Sigma) \mathbf{1}_m\\
    = & ~ \mathbf{1}_m^{\top}(\Delta \Sigma) \Sigma^{-1/2}\Sigma^{1/2} A(A^{\top}\Sigma A)^{-1} A^{\top} \Sigma^{1/2}\Sigma^{-1/2} (\Delta \Sigma) \mathbf{1}_m\\
    \leq & ~ \mathbf{1}_m^{\top}(\Delta \Sigma) \Sigma^{-1} (\Delta \Sigma )\mathbf{1}_m\\
    = & ~ \sum_{i=1}^m \frac{(\sigma_i - \wt{\sigma}_i)^2}{\sigma_i}\\
    \leq & ~ \frac{0.01}{n}\cdot n = 0.01, 
\end{align*}

where the first inequality follows from property of projection matrix (for any projection matrix $P$, we have $P \preceq I$. Then we know $x^\top P x \leq x^\top x$ for all vector $x$), 
the last inequality follows from $\sum_{i=1}^m \sigma_i = n$.

Third, from $\delta_x'$ (Line~\ref{line:intro_delta_x_2}) to $\wt{\delta}$ (Line~\ref{line:intro_wt_delta_x}), we can tolerate the approximation error on the Newton's direction $\| \wt{\delta} - \delta_x' \|_{ H(x) } \leq 0.1$. This is crucial for our graph applications, since we need to use small space SDD solver to approximate the Newton's direction.

\begin{algorithm}[!ht]
\caption{A simplified version of our algorithm}\label{alg:ipm_in_intro}
\begin{algorithmic}[1]
\Procedure{OurAlgorithm}{$A \in \R^{m\times n},b\in\R^m,c \in \R^n$} 
\State Choose $F(x) \in \R^{n} \rightarrow \R$ to be any $\theta^2$-self concordant barrier function
\State Let $f_t(x):=t\cdot c + \nabla F(x) \in \R^n$
\State Let $H(x) := \nabla^2 F(x) \in \R^{n\times n}$
\State Let $T$ be the number of iterations
\State Initialize $x,t$
\For {$k \leftarrow 1$ to $T$} 
    \State Let $\wt{H}(x)$ be any PSD matrix that $\frac{1}{\log m} \wt{H}(x) \preceq H(x) \preceq \wt{H}(x)$ \label{line:intro_wt_h}
    \State Let $\delta_x := -\wt{H}(x)^{-1}\cdot \nabla f_t(x)$  \label{line:intro_delta_x}
    \State Let $\wt{\nabla} f_t(x)$ be that $\| \wt{\nabla}f_t(x) - \nabla f_t(x) \|_{H(x)^{-1}} \leq 0.1$  \label{line:intro_wt_nabla_f_t}
    \State Let $\delta_x' := -\wt{H}(x)^{-1}\cdot \wt{\nabla}f_t(x)$  \label{line:intro_delta_x_2}
    \State Let $\wt{\delta}_x$ be any vector that $\|\wt{\delta}_x - \delta_x'\|_{H(x)} \leq 0.1$   \label{line:intro_wt_delta_x}
    \State $x \leftarrow x + \wt{\delta}_x$ 
    \State $t \leftarrow t\cdot (1+\theta^{-1})$
\EndFor 
\State Output $x$
\EndProcedure
\end{algorithmic}
\end{algorithm}

\subsection{Solve LP in small space}
\label{sec:evidence_space}

In this section, we show how to implement our IPM in space not polynomially dependent on $m$ for different barrier functions. 

For three barriers (logarithmic, hybrid and Lee-Sidford), all of their Hessians take the form of $A^\top HA\in \R^{n\times n}$ for an $m\times m$ non-negative diagonal matrix $H$. For logarithmic barrier, $H_{i,i}=s_i(x)^{-2}$, as $s_i(x)$ can be computed in $O(1)$ space, it is not hard to see that the Gram matrix can be computed as $\sum_{i\in [m]} H_{i,i}\cdot a_ia_i^\top$ in $O(n^2)$ space. 

The more interesting case is to consider the hybrid barrier and Lee-Sidford barrier. The gradient and Hessian of the hybrid barrier requires us to compute $m$ leverage scores defined as ${\rm diag}(\sqrt{H}A(A^\top HA)^{-1}A^\top \sqrt{H})$. Forming this projection matrix will require a prohibitive $m^2$ space. To implement it in $n^2$ space, we rely on an observation that $\sigma_i=H_{i,i}\cdot a_i^\top (A^\top HA)^{-1}a_i$, thus, if we can manage to maintain $(A^\top HA)^{-1}$ in $O(n^2)$ space, then we can compute the leverage score. Similar to the logarithmic barrier scenario, $A^\top HA$ can be computed in 1 pass and $O(n^2)$ space, then the inverse can be computed in $O(n^2)$ space. Thus, we can supply the $i$-th leverage score in $O(n^2)$ space, and compute the gradient and Hessian in designated space constraint.

Given an oracle that can compute the $i$-th leverage score in $O(n^2)$ space, we can even implement the $\ell_{\log m}$ Lewis weights in $\wt O(n^2)$ space. To do so, we rely on an iterative scheme introduced in~\cite{flps21}. Unfortunately, as we are only allowed a space budget of $O(n^2)$, we cannot store the intermediate Lewis weights. To circumvent this issue, we develop a recursive algorithm to query Lewis weights from prior iterations. Each recursion takes $O(n^2)$ space, and the algorithm uses at most $O(\poly(\log m))$ iterations, therefore, we can compute the Lewis weights in $\wt O(\sqrt{n})$ space.

\subsection{Semi-streaming maximum weight bipartite matching in \texorpdfstring{$\wt O(\sqrt{m})$}{} passes}\label{sec:our_techniques_streaming_implementation}

Recall that in the semi-streaming model, we are only allowed with $\wt O(n)$ space. For the IPMs we've developed before, we can not meet such space constraint. 
For general graphs, we have to invent more machinery to realize the $\wt O(n)$ space.

For matching, we start by noting that the constraint matrix $A\in \R^{m\times n}$ is a graph incidence matrix. This means that for logarithmic barrier, the Hessian matrix $A^\top S^{-2}A$ can be treated as a Laplacian matrix with edge weight $s_i^{-2}$. Therefore, computing the Newton direction reduces to perform an SDD solve in $\wt O(n)$ space.

{\bf SDD solver in the semi-streaming model.}
Though solving SDD system can be done in an extremely time-efficient manner, it is unclear how to compute them when only $\wt O(n)$ space is allowed. To circumvent this problem, we rely on two crucial observations. Let $L_G$ denote the SDD matrix corresponding to the Hessian.
\begin{itemize}
    \item Solving a system $L_G \cdot x=b$ will require $\wt \Omega(m)$ space, but multiplying $L_G$ with a vector $v\in \R^n$ can be done in $O(n)$ space: as $L_G=\sum_{i\in [m]} \frac{a_ia_i^\top}{s_i^2}$, $L_G \cdot v$ can be computed as $a_i(a_i^\top v)/s_i^2$ in $O(n)$ space, and accumulate the sum over a pass of the graph.
    \item Suppose we have a sparse graph $H$ with only $\wt O(n)$ edges, then the system $L_H\cdot x=b$ can be solved in $\wt O(n)$ space.
\end{itemize}

It turns out that these two observations are enough for us to solve a general SDD system in $\wt O(n)$ space. Given the graph $G$, we first compute a $(1\pm\delta)$-spectral sparsifier with only $\wt O(\delta^{-2}n\log^{O(1)} n)$ edges in a single pass~\cite{knst19}. Let $H$ denote this sparsifier, we then use $L_H^{-1}$ as a preconditioner for solving our designated SDD system. More concretely, let $r_t:=b-L_G\cdot x_t$ denote the residual at $t$-th iteration, we solve the system $L_H\cdot y_t=r_t$. As $y_t=L_H^{-1}\cdot b-L_H^{-1} L_G\cdot x_t$, we can then update the solution via the preconditioned-solution $x_{t+1}=x_t+y_t$. The residual is then $r_{t+1}=b-L_G\cdot x_{t+1}=b-L_G\cdot x_t-L_G\cdot y_t=r_t-L_G\cdot y_t$, i.e., we only need to implement one matrix-vector product with $L_G$. After $\wt O(1)$ iterations, we have refined an accurate enough solution for the SDD system.

{\bf From dual to primal.}
Though we can solve the dual in $\wt O(n)$ space, it only produces a solution to the minimum vertex cover and we need to transform it to a solution to maximum weight matching.

Turning an optimal dual solution to an optimal primal solution for general LP requires at least solving a linear system, which takes $O(n^{\omega})$ time and $O(mn)$ space (Lee, Sidford and Wong~\cite{lsw15}), which is unknown to be implemented in the semi-streaming model even for bipartite matching LP.\footnote{In general, the inverse of a sparse matrix can be dense, which means the standard Gaussian elimination method for linear system solving does not apply in the semi-streaming model.\label{footnote:inverse}} We bypass this issue by using the complementary slackness theorem to highlight $n$ tight dual constraints and therefore sparsify the original graph from $m$ edges to $n$ edges without losing the optimal matching. However, this is only true if the solution to the primal LP is {\bf unique}.

\begin{figure}[!ht]
    \centering
    \includegraphics[width=15cm]{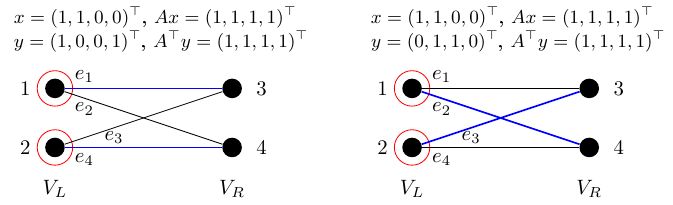}
    \caption{ 
    The red circle is a minimum vertex cover, which is an optimal dual solution. The blue edge is a maximum matching, which is an optimal primal solution. In both examples, the primal and dual satisfy complementary slackness Eq.~\eqref{eq:com_sla_theorem}.
    }
    \label{fig:bipartite_graph}
\end{figure}

To give a better illustration, let us consider a simple example. Suppose the graph has a (maximum weight) perfect matching (see Figure~\ref{fig:bipartite_graph} for example). Then the following trivial solution is optimal to the dual LP: choosing all vertices in $V_L$. 
Let us show what happens when applying the complementary slackness theorem. 
The complementary slackness theorem says that when $y$ is a feasible primal solution and $x$ is a feasible dual solution, then $y$ is optimal to the primal and $x$ is optimal to the dual {\bf if and only if} 
\begin{equation}\label{eq:com_sla_theorem}
    \langle y, Ax - \mathbf{1}_m\rangle = 0 \mathrm{~~and~~} \langle x, \mathbf{1}_n - A^{\top} y \rangle = 0.
\end{equation}
From the above case, we have $Ax - \mathbf{1}_m = 0$, so the first equality $\langle y, Ax - \mathbf{1}_m\rangle = 0$ puts no constraint on $y$. Therefore, any solution $y\geq \mathbf{0}_m$ to the linear system
$a_i^{\top} y_i = 1, ~\forall i\in V_L$ is an optimal solution, where $a_i$ is the $i$-th column of $A$. 
Note that this linear system has $m$ variables and $|V_L|$ equations, which is still hard to find a solution in $\wt{O}(n)$ space.

Now consider perturbing the primal objective function by some vector $b\in \R^m$ such that the optimal solution to the following primal LP is {\bf unique}:
\begin{align*}
    \textbf{Primal}~~~ \max_{y \in \R^m} &~ b^{\top} y,
     ~~~~~\text{s.t. } A^{\top} y \leq \mathbf{1}_n \text{ and }
    y\geq {\bf 0}_m .
\end{align*}

Suppose we find the optimal solution $x$ in the dual LP and we want to recover the optimal solution $y$ in the primal LP. 
Again by plugging in the complementary slackness theorem, we get at most $n$ equations from the second part $\langle x, \mathbf{1}_n - A^{\top} y\rangle = 0$. 
Since the optimal $y$ is unique and $y$ has dimension $m$, the first part $\langle y, Ax-\mathbf{1}_m \rangle$ must contribute to at least $m-n$ equations. 
Note that these equations have the form 
\begin{equation*}
    y_i = 0,~\forall i \in [m] \text{~s.t.~} (Ax)_i-1 > 0. 
\end{equation*}
This means that the corresponding edges are \emph{unnecessary} in order to get one maximum matching. 
As a result, we can reduce the number of edges from $m$ to $n$, then compute a maximum matching in $\wt{O}(n)$ space without reading the stream.

{\bf Isolation lemma in the semi-streaming model.} It remains to show how to perturb the objective so that the primal solution is unique. As the perturbation is over all edges, one natural idea is to randomly perturb them using $\wt O(m)$ bits of randomness. This becomes troublesome when the random bits need to be stored since the perturbation should remain consistent across different passes. We resolve this problem via the isolation lemma.

Let us recall the definition of the isolation lemma (see 
\ifdefined\isarxiv
Section~\ref{sec:isolation_lemma} 
\else
Section C in full version \cite{full} 
\fi
for details).
\begin{definition}[Isolation lemma]
Given a set system $(S,\mathcal{F})$ where $\mathcal{F} \subseteq \{0,1\}^S$. Given weight $w_i$ to each element $i$ in $S$, the weight of a set $F$ in $\mathcal{F}$ is defined as $\sum_{i\in F} w_i$.
The isolation lemma says there exists a scheme that can assign weight oblivious to $\mathcal{F}$, such that there is a unique set in $\mathcal{F}$ that has the minimum (maximum) weight under this assignment.
\end{definition}

The isolation lemma says that if we randomly choose weights, then with a good probability the uniqueness is ensured. However, 
this does not apply to the streaming setting since the weight vector is over all edges, which require $\Omega(m)$ space.

To apply isolation lemma for bipartite matching, we note that the set $S$ is all the edges and the family ${\cal F}$ contains all possible matchings. The total number of possible matchings is at most $(n+1)^n$, as each vertex can choose none or one of the vertices to match. We leverage this parameterization and make use of~\cite{crs95}, which requires $\log(|F|)$ random bits. For matching, we only need $O(n\log n)$ bits, which suits in our space budget. To the best of our knowledge, this is the first use of isolation lemma in the streaming model.

\subsection{Discussions}\label{sec:discussion}

For matching, improving $\sqrt{m}$ passes to $\sqrt{n}$ passes will require us to compute fundamental quantities such as leverage scores and Lewis weights by solving $\wt O(1)$ SDD systems. As reachability~\cite{jls19} and single source shortest path~\cite{fkm+09,cfht20} can be solved in $n^{1/2+o(1)}$ passes in the semi-streaming model, we believe it is an important open problem to close the gap between bipartite matching and these problems.

\section*{Acknowledgement}

The authors would like to thank Jonathan Kelner for many helpful discussions and anonymous reviewers for numerous comments to improve the presentation of this paper. Lichen Zhang is supported by NSF grant No. CCF-1955217 and NSF grant No. CCF-2022448.
\newpage

\bibliographystyle{alpha}
\bibliography{ref}
\addcontentsline{toc}{section}{Bibliography}

\ifdefined\isarxiv

\newpage
{\bf Roadmap.} 
The rest of the paper is organized as follows. 
In Section~\ref{sec:notations}, we define the basic notations in this paper. 
In Section~\ref{sec:ipm_preliminary}, we give some preliminaries for interior point method. 
In Section~\ref{sec:ipm_algorithm}, we present the robust dual central path method. 
In Section~\ref{sec:ipm_error_analysis}, we present the error analysis of interior point method. 
In Section~\ref{sec:lp_in_small_space}, we bound the number of pass of executing this
algorithm in the streaming model.
In Section~\ref{sec:sdd_solver_streaming_model}, we give an SDD solver in the streaming model, which is a necessary component of our interior point method for graphs. 
In Section~\ref{sec:minimum_vertex_cover}, we discuss the streaming algorithm for minimum vertex cover. 
In Section~\ref{sec:combine}, we combine the pieces together to get the final algorithm for maximum weight bipartite matching. In Section~\ref{sec:summary_non_ipm}, we discuss previous non-IPM algorithms. 
In Section~\ref{sec:solver_reduction}, we complement  Section~\ref{sec:sdd_solver_streaming_model} by providing two reductions from weaker solvers to our final SDD$_0$ solver.
In Section~\ref{sec:isolation_lemma}, we present the generalized isolation lemma in the semi-streaming model, which will be used to recover the maximum matching from a minimum vertex cover. 
In Section~\ref{sec:additional_algorithms}, we present the small space implementations of various barrier functions in Section \ref{sec:lp_in_small_space}. 
In Section~\ref{sec:lp_in_small_treewidth}, we provide a more space-efficient algorithm when the linear program has small treewidth.

\section{Notations} \label{sec:notations}

{\bf Standard notations.}
For a positive integer $n$, we denote $[n]=\{1,2,\cdots,n\}$.

We use $\E[\cdot]$ for expectation and $\Pr[\cdot]$ for probability.

For a positive integer $n$, we use $I_n$ to denote the identity matrix of size $n\times n$.

For a vector $v \in \R^n$, we use the standard definition of $\ell_p$ norms: $\forall p\geq 1$, $\|v\|_p = (\sum_{i=1}^n |v_i|^p )^{1/p}$. Specially, $\|v\|_\infty = \max_{ i \in [n] } | v_i |$. We use $\| v \|_0$ to denote the number of nonzero entries in vector $v$. We use $\supp(v)$ to denote the support of vector $v$.

We use ${\bf 0}_n$ to denote a length-$n$ vector where every entry is $0$. We use ${\bf 0}_{m \times n}$ to denote a $m \times n$ matrix where each entry is $0$. Similarly, we use the notation ${\bf 1}_n$ and ${\bf 1}_{m \times n}$.

For matrix $B$, we use $b_i$ to denote the $i$-th row of $B$.

{\bf Matrix operators.}
For a square matrix $A$, we use $\tr[A]$ to denote its trace. For a square and full rank matrix $A$, we use $A^{-1}$ denote the true inverse of $A$. For a matrix $A$, we use $A^\dagger$ to denote its pseudo inverse. We say a square matrix $A$ is positive definite, if for all $x$, $x^\top A x > 0$. We say a square matrix $A$ is positive semi-definite, if for all $x$, $x^\top A x \geq 0$. We use $\succeq$ and $\preceq$ to denote the p.s.d. ordering. For example, we say $A \succeq B$, if $x^{\top} A x \geq x^\top B x, \forall x$.

{\bf Matrix norms.} For a matrix $A$, we use $\| A \|_1$ to denote its entry-wise $\ell_1$ norm, i.e., $\| A \|_1 = \sum_{i,j} |A_{i,j}|$. We use $\| A \|_F$ to denote its Frobenius norm $\| A \|_F = ( \sum_{i,j} A_{i,j}^2 )^{1/2}$. We use $\| A \|$ to denote its spectral/operator norm. 

{\bf Matrix approximation.}
Let $A,B \in \R^{n\times n}$ be positive semi-definite matrix. Let $\epsilon \in (0,1)$. We say $A\approx_{\epsilon} B$ if
\[
(1-\epsilon) x^{\top} A x \leq x^{\top} B x \leq (1 + \epsilon) x^{\top} A x,~~\forall x\in \R^n.
\]
Note that if we have $A \approx_{\epsilon} B$, then $(1-\epsilon) \|x\|_A \leq \|x\|_B \leq (1 + \epsilon) \|x\|_A$ for all $x\in \R^n$.

{\bf Graph and corresponding matrices.}
We first give the definition of edge-vertex incident matrix.
\begin{definition}[Edge-vertex incident matrix]
    Let $G=(V_L, V_R, E)$ be a connected undirected bipartite graph. The \emph{(unsigned) edge-vertex incidence matrix} is denoted as follows
    \begin{align*}
        B(e,v)
        = \begin{cases}
        1, & \mathrm{~if~} v \mathrm{~incident~to~} e; \\
        0, & \mathrm{~otherwise}.
        \end{cases}
    \end{align*}
\end{definition}
In addition, we present the definition of signed edge-vertex incident matrix as follows:
\begin{definition}[Signed edge-vertex incident matrix]
    Let $G=(V_L, V_R, E)$ be a connected directed bipartite graph where all edges orientate from $V_L$ to $V_R$. The \emph{signed edge-vertex incidence matrix} is denoted as follows
    \begin{align*}
        B(e,v)
        = \begin{cases}
        +1, & \mathrm{~if~} v \mathrm{~incident~to~} e \mathrm{~and~} v \in V_L; \\
        -1, & \mathrm{~if~} v \mathrm{~incident~to~} e \mathrm{~and~} v \in V_R; \\
        0, & \mathrm{~otherwise}.
        \end{cases}
    \end{align*}
\end{definition}

Then, we provide the definition of SDDM matrix and SDD matrix.
\begin{definition}[SDDM, SDD matrix]\label{def:sddm_sdd}
    A square matrix $A$ is \emph{weakly diagonally dominant} if $A_{i,i} \ge \sum_{j \neq i} \left| A_{i,j} \right|$ for $i$,
    and is \emph{strictly diagonally dominant} if $A_{i,i} > \sum_{j \neq i} \left| A_{i,j} \right|$ for $i$.
    A matrix $A$ is SDD$_0$ if it is symmetric and weakly diagonally dominant, and is SDD if it is symmetric and strictly diagonally dominant. 
    A matrix $A$ is SDDM$_0$ if it is SDD and $A_{i,j} \le 0$ for all $i \neq j$.
    An SDDM$_0$ matrix is SDDM if it is strictly diagonally dominant.
\end{definition}
Next, we introduce a fact about SDDM$_0$ matrix.
\begin{fact} \label{fct:sddm_psd}
    An SDDM$_0$ matrix must be positive semi-definite.
    If an SDDM$_0$ matrix has zero row-sums, then it is a Laplacian matrix. 
    If an SDDM$_0$ matrix has at least one positive row-sum, then it is an SDDM matrix and positive definite.
\end{fact}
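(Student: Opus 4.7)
The plan is to prove all three parts via a single canonical decomposition $A = L + D$, where $L$ is a weighted graph Laplacian and $D$ is a diagonal matrix with nonnegative entries. Since $A$ is symmetric with $A_{i,j} \le 0$ for $i \ne j$, I would set $w_{i,j} \defeq -A_{i,j} \ge 0$ and define $L \defeq \sum_{i<j} w_{i,j} \, (e_i - e_j)(e_i - e_j)^\top$, the Laplacian of the weighted graph on $[n]$ with edge weights $w_{i,j}$. A direct index-by-index computation gives $L_{i,j} = A_{i,j}$ for $i \ne j$ and $L_{i,i} = \sum_{j \ne i} |A_{i,j}|$, so $D \defeq A - L$ is diagonal with $D_{i,i} = A_{i,i} - \sum_{j \ne i} |A_{i,j}| \ge 0$ by the weak diagonal dominance hypothesis. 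Observe further that $D_{i,i}$ equals the $i$-th row sum of $A$, since $L$ has zero row sums.

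Two of the three claims then follow immediately. For positive semi-definiteness, each term $(e_i - e_j)(e_i - e_j)^\top$ is a rank-one PSD matrix scaled by $w_{i,j} \ge 0$, so $L$ is PSD; $D$ is diagonal with nonnegative entries and is PSD; hence $A = L + D$ is PSD. For the Laplacian claim, if every row sum of $A$ vanishes then $D_{i,i} = 0$ for all $i$, so $D = 0$ and $A = L$ is literally the Laplacian of the weighted graph with edge weights $w_{i,j}$.

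For the third claim, assume $A$ has at least one positive row sum, i.e.\ $D_{i_0, i_0} > 0$ for some $i_0$. Strict diagonal dominance at row $i_0$ is then immediate from $D_{i_0, i_0} = A_{i_0, i_0} - \sum_{j \ne i_0} |A_{i_0, j}| > 0$, giving the SDDM conclusion. For positive definiteness I would use a quadratic-form argument: if $x^\top A x = 0$, then PSD-ness of $L$ and $D$ forces $x^\top L x = 0$ and $x^\top D x = 0$ separately. Via the standard identity $x^\top L x = \sum_{i<j} w_{i,j}(x_i - x_j)^2$, the first equation forces $x_i = x_j$ whenever $w_{i,j} > 0$, so $x$ is constant on each connected component of the weighted off-diagonal graph; the second forces $x_i = 0$ whenever $D_{i,i} > 0$. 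Under the natural connectivity assumption consistent with the rest of the paper, these together yield $x \equiv 0$, so $A$ is positive definite. The one subtle point I expect to need care on is precisely this connectivity step: in a disconnected setting the argument should be applied connected-component-by-connected-component, and the conclusion holds as long as every component of the weighted off-diagonal graph contains at least one vertex with positive row sum.
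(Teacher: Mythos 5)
The paper states this as an unproved \emph{Fact}, so there is no in-paper argument to compare against; your decomposition $A = L + D$, with $L$ the Laplacian of the off-diagonal graph and $D_{i,i}$ equal to the $i$-th row sum of $A$, is exactly the canonical one the paper itself relies on later (the footnote to Lemma~\ref{lem:sss_property} and Lemma~\ref{lem:hessian_sdddm}), and your proofs of the first two claims are correct and complete. You are also right to distrust the third claim: as literally stated it is false, and your connectivity caveat correctly diagnoses the failure of the positive-definiteness half. Concretely,
\[
A=\begin{pmatrix} 1 & -1 & 0\\ -1 & 1 & 0\\ 0 & 0 & 1\end{pmatrix}
\]
is SDDM$_0$ with a positive row sum, yet $(1,1,0)^{\top}$ lies in its kernel.

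The genuine gap is in your handling of the ``it is an SDDM matrix'' half of the third claim. You assert that strict dominance at the single row $i_0$ ``gives the SDDM conclusion,'' but by Definition~\ref{def:sddm_sdd} SDDM requires $A_{i,i} > \sum_{j\neq i}|A_{i,j}|$ for \emph{every} $i$; one positive row sum buys strictness only in that one row. Even a connected example defeats it:
\[
A=\begin{pmatrix} 2 & -1 & 0\\ -1 & 2 & -1\\ 0 & -1 & 1\end{pmatrix}
\]
is SDDM$_0$, connected, positive definite, and has a positive first row sum, but rows $2$ and $3$ are only weakly dominant, so $A$ is not SDDM. The hypothesis consistent with how the paper actually uses Fact~\ref{fct:sddm_psd} is that \emph{every} row sum is positive, i.e.\ $D\succ 0$ (as in Lemma~\ref{lem:hessian_sdddm}, where $D(x)_{i,i}=s_{i+m}(x)^{-2}>0$ for all $i$); under that hypothesis both halves follow at once from your decomposition --- $A$ is strictly dominant in every row, and $x^{\top}Ax=0$ forces $x^{\top}Dx=0$ and hence $x=0$ --- with no connectivity argument needed. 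You should also record that Definition~\ref{def:sddm_sdd} writes ``SDD'' where it must mean ``SDD$_0$'' in defining SDDM$_0$; your reading (symmetric, weakly dominant, nonpositive off-diagonals) is the only one under which the second claim of the Fact is non-vacuous.
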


{\bf Bit complexity.}
Given a linear programming 
\begin{align}\label{eq:lemma_41_ls14_before}
    & ~ \min_{x\in\R^n} c^{\top} x\\
    \text{s.t.~} & ~ Ax\geq b \notag,
\end{align}

where $A\in \Z^{m\times n}$, $b\in \Z^m$, $c\in \Z^n$ all having integer coefficient.

The bit complexity $L$ is defined as
\begin{align*}
L:= \log(m) + \log(1+\dmax(A)) + \log(1 + \max\{ \|c\|_{\infty}, \|b\|_{\infty}\}),
\end{align*}
where $\dmax(A)$ denotes the largest absolute value of the determinant of a square sub-matrix of $A$.

It is well known that $\poly(L)$-bit precision is sufficient to implement an IPM (e.g., see \cite{ds08} and the references therein). 
This is because the absolute values of all intermediate arithmetic results are within $[2^{-\poly(L)}, 2^{\poly(L)}]$, and the errors in all the approximations are at least $1 / \poly(n)$. 
Therefore, truncating all the arithmetic results to $\poly(L)$ bits for some sufficiently large polynomial preserves all the error parameters and thus the same analysis holds.

We will need the following tools in our later analysis.

\begin{lemma}[\cite{ht56}]\label{lem:totally_unimodular}
    Let $A \in \R^{m \times n}$ be the unsigned edge-vertex incident matrix of a bipartite graph $G = (V_L, V_R, E)$. 
    Let $S := \{A x \le b \mid x \ge \mathbf{0}_m\}$, $S' := \{A^{\top} y \le b' \mid y \ge \mathbf{0}_n\}$, where $b \in \Z^m$, $b' \in \Z^n$. 
    Then both $A$ and $A^{\top}$ are totally unimodular, i.e., all square submatrices of them have determinants of $0,1,-1$, and all extreme points of $S$ and $S'$ are integral.
\end{lemma}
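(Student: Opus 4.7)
The plan is to prove total unimodularity of $A$ directly by a combinatorial induction, then deduce total unimodularity of $A^{\top}$ for free (since $\det(M^{\top})=\det(M)$), and finally invoke the classical Hoffman-Kruskal characterization to conclude that the extreme points of $S$ and $S'$ are integral. Throughout, I use in an essential way that $G$ is bipartite: each row of $A$ has exactly two $1$'s, and these two $1$'s lie in columns indexed by distinct sides of the bipartition $V_L \cup V_R$.

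For total unimodularity, I would induct on $k$, the size of an arbitrary square submatrix $M$ of $A$. The base case $k=1$ is immediate since every entry of $A$ lies in $\{0,1\}$. For the inductive step, I case-split on the rows of $M$. If some row of $M$ is all zero, then $\det M = 0$. If some row of $M$ has exactly one nonzero entry, I expand $\det M$ along that row; the resulting $(k-1)\times(k-1)$ minor lies in $\{0,\pm 1\}$ by the inductive hypothesis, so $\det M \in \{0,\pm 1\}$. Otherwise every row of $M$ contains exactly two $1$'s (the maximum possible, since each edge of $G$ has exactly two endpoints). Writing $C_L$ and $C_R$ for the sets of columns of $M$ whose underlying vertex lies in $V_L$ and $V_R$ respectively, bipartiteness forces each row of $M$ to contain one $1$ in a $C_L$ column and one $1$ in a $C_R$ column. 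Therefore $\sum_{j \in C_L} M_{\cdot,j} = \sum_{j \in C_R} M_{\cdot,j}$, which is a nontrivial linear dependence among the columns of $M$, and so $\det M = 0$. This completes the induction, proving that $A$ is TU, and hence $A^{\top}$ is TU.

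To conclude the integrality statement, I would invoke the Hoffman-Kruskal argument: any extreme point $x^*$ of $S$ is the unique solution of a square nonsingular subsystem obtained by taking the tight inequalities among $Ax \le b$ and $x \ge \mathbf{0}$, which can be written as $M x^* = b^*$ for a square submatrix $M$ of the block matrix $\begin{bmatrix} A \\ -I \end{bmatrix}$ and a corresponding integer right-hand side $b^* \in \Z$. This block matrix is TU (appending signed identity rows preserves TU), so $\det M \in \{\pm 1\}$, and Cramer's rule then yields $x^* \in \Z^m$. The same argument applied to $A^{\top}$ handles $S'$.

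The main obstacle is essentially pedagogical rather than mathematical: this is a textbook result attributed to Hoffman-Kruskal, so the only place care is required is in the third case of the induction, where one must be careful that bipartiteness really forces each row of the submatrix (not of $A$ itself) to have exactly one $1$ on each side of the partition once restricted to $C_L \cup C_R$. That is where the hypothesis that $G$ is bipartite enters crucially; the analogous claim fails for general graphs, since an odd cycle produces a square submatrix with $\det = \pm 2$.
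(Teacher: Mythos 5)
Your proof is correct. Note that the paper does not actually prove this lemma --- it is stated as a citation to Heller--Tompkins \cite{ht56} and used as a black box --- so there is no in-paper argument to compare against; what you have written is the standard textbook derivation (induction on submatrix size for total unimodularity, plus the Hoffman--Kruskal/Cramer argument for integrality of extreme points), and it is sound. The three-way case split is handled properly: in particular, in the third case you correctly observe that a row of the submatrix $M$ with \emph{two} surviving $1$'s must have one in a $C_L$-column and one in a $C_R$-column (both endpoints' columns were selected), which makes $\sum_{j\in C_L} M_{\cdot,j}-\sum_{j\in C_R}M_{\cdot,j}=\mathbf{0}$ a genuinely nontrivial dependence and forces $\det M=0$; this is exactly where bipartiteness is needed, as your odd-cycle remark indicates. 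Two cosmetic points: the right-hand side of the tight subsystem should read $b^*\in\Z^{k}$ rather than $b^*\in\Z$, and the dimension of $x$ in $S$ is $n$ (the paper's statement itself writes $x\ge\mathbf{0}_m$, which is a typo you have inherited); neither affects the argument.
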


\section{Preliminary for IPM}\label{sec:ipm_preliminary}

Since IPM was proposed by Karmarkar \cite{k84} in 1984, it becomes a very popular method for analyzing the running time of linear programming or linear programming type problems \cite{v87,r88,v89_lp,nn89,nn92,nn94,a00,r01,ds08,ls13_path2,ls14,ls15,cls19,lsz19,blss20,jklps20,sy21,jswz21,y21,dly21,hjst21}. 
In this section, we will focus on introducing some mathematical background of IPM for LP. Let us consider the linear programming
\begin{align*}
\min_{x \in \R^n} c^{\top} x, & \mathrm{~~s.t.} ~Ax\geq b
\end{align*}
where $A\in \R^{m\times n}$, $b\in \R^m$, $c\in \R^n$. In the rest of this section, all the notations and discussions will be based on the above LP formulation. For convenient, we denote $a_1^\top,\cdots,a_m^\top$ as the row vectors of matrix $A \in \R^{m\times n}$. 
We first introduce some related definitions of IPM in Section \ref{sec:ipm_preliminary:def} and introduce the definitions of barrier functions in Section \ref{sec:barrier_preliminary:def}. Then, we give some tools for volumetric barrier function in Section \ref{sec:tools_volumetric_barrier}. In Section \ref{sec:lee_barrier}, we give some definitions for Lee-Sidford barrier and the computation of Lewis weight. In Section \ref{sec:ipm_preliminary:tools}, we provide some approximation tools.

\subsection{Definitions}\label{sec:ipm_preliminary:def}

We define feasible solution as follows:
\begin{definition}[Feasible solution]\label{def:feasible_solution}
For any $x\in \R^n$, we say $x$ is feasible if for all $i \in[m]$, $a_i^{\top} x > b_i$.
\end{definition}

We define slack variables:
\begin{definition}[Slack]\label{def:slack}
We define the \emph{slack} $s_i(x) := a_i^{\top} x - b_i \in \R$ for all $x \in \R^{n}$ and $i \in [m]$. 
\end{definition}

Let $F(x):\R^n \rightarrow \R$ be some barrier function. 
We define the perturbed function as follows:
\begin{definition}[Perturbed objective function]\label{def:perturbed_objective_function}
Define the \emph{perturbed objective function} $f_{t } : \R^n \rightarrow \R$, where $t > 0$ is a parameter:
\begin{align} \label{eq:perturbed_obj}
    f_{t}(x) := t \cdot c^{\top} x + F(x).
\end{align}
\end{definition}

We define the gradient and Hessian matrix with respect to the barrier function.

\begin{definition}[Gradient and Hessian]\label{def:gradient_hessian}
We define the gradient $g(x):\R^n \rightarrow \R^n$ and Hessian $H(x): \R^n \rightarrow \R^{n\times n}$ as follows:
\begin{align}\label{eq:def_gd_hes}
    g(x) := & ~\nabla F(x) \in \R^{n} \\
    H(x) := & ~ \nabla^2 F(x) \in \R^{n\times n}.
\end{align}
\end{definition}

We define our potential function $\Phi$.

\begin{definition}[Potential $\Phi$ for perturbed objective]\label{def:phi}
Given $t>0$ and feasible $x,y\in \R$, we define function $\Phi_t : \R^n  \times \R^n \rightarrow \R$:
\begin{align*}
\Phi_{t}(x,y) := \| \nabla f_t(x) \|_{H(y)^{-1}}.
\end{align*}
\end{definition}

In addition, we also need to define a potential function $\Psi$.
\begin{definition}[Potential $\Psi$ for barrier]\label{def:psi}
Given feasible $x, y \in \R$, we define function $\Psi : \R^n  \times \R^n \rightarrow \R$:
\begin{align*}
    \Psi(x,y) := \| g(x) \|_{H(y)^{-1}} .
\end{align*}
\end{definition}

\subsection{Barrier functions}
\label{sec:barrier_preliminary:def}
We first give the definition of logarithmic barrier function:
\begin{definition}[Logarithmic barrier function]\label{def:log_barrier_function}
Define the \emph{logarithmic barrier function} $\phi(x):\R^n \rightarrow \R$ as follows:
\begin{align}\label{eq:def_log_barrier}
    \phi(x) := -\sum_{i \in [m]} \ln(a_i^{\top} x - b_i),
\end{align}
Let $s_i(x) = a_i^\top - b_i$ for each $i \in [m]$.

Thus 
\begin{align*}
    \nabla \phi(x)  & ~ = - \sum_{i \in [m]} \frac{a_i}{s_i(x)} \in \R^{n}; \\
    \nabla^2 \phi(x) & ~  = \sum_{i \in [m]} \frac{a_i a_i^{\top}}{s_i(x)^2} \in \R^{n\times n}.
\end{align*}
\end{definition}

Next, we provide the definition of volumetric barrier function.
\begin{definition}[Volumetric barrier function]\label{def:vol_barrier_function}
Define the \emph{Volumetric barrier function} $V(x):\R^n \rightarrow \R$ as follows:
\begin{align*}
    V(x) := \frac{1}{2}\log(\det(\nabla^2 \phi(x))).
\end{align*}
And thus
\begin{align*}
    \nabla V(x) = \sum_{i=1}^m a_i \cdot \frac{\sigma_i(x)}{s_i(x)} \in \R^n,
\end{align*}
where $\sigma(x)$ is defined in Def.~\ref{def:sigma}.
\end{definition}
Then, we provide the definition of hybrid barrier function.
\begin{definition}[Hybrid barrier function]\label{def:hy_barrier_function}
Let $\rho > 0 $ be a fixed parameter. Define the \emph{Hybrid barrier function} $V_{\rho}(x):\R^n \rightarrow \R$ as follows:
\begin{align}\label{eq:def_hy_barrier}
    V_{\rho}(x) := V(x) + \rho \cdot \phi(x).
\end{align}
\end{definition}

The following is the definition of $\theta$-self-concordance barrier.

\begin{definition}[$\theta$-self-concordance barrier]\label{def:theta_self_concordance}
Let $E$ be a finite-dimensional real vector space and $Q$ be an open non-empty convex subset of $E$. A function $F: Q \rightarrow \R$ is called a self-concordant barrier if it is three times differentiable, strictly convex and satisfies the conditions
\begin{align*}
    |\nabla^3 F(x)[h,h,h]| \leq & ~ 2 \cdot ( \nabla^2 F(x)[h,h] )^{3/2} \\
    F(x) \rightarrow & ~ \infty \text{~as~} x \rightarrow \partial Q, \text{~and~} \\
    \nabla^2 F(x) \succeq & ~ \frac{1}{\theta}\nabla F(x) \nabla F(x)^{\top},
\end{align*}
for all $h \in E$, $x \in Q$.
\end{definition}

The following theorem shows that Hybrid function is $(mn)^{1/4}$-self-concordance barrier.

\begin{theorem}[\cite{av93,a00}]\label{thm:v_rho_self_concor}
The barrier function $V_{\rho}$ with $\rho = n/m$ is a $\theta$-self-concordance barrier function where $\theta= O( (m n)^{1/4} )$.
\end{theorem}

\subsection{Tools for volumetric barrier function}
\label{sec:tools_volumetric_barrier}

The leverage score of volumetric barrier function is defined as the following:
\begin{definition}[Leverage score function]\label{def:sigma}
We define $\sigma_i(x)$ as follows
\begin{align}\label{eq:def_sigma}
    \sigma_i(x) := \frac{ a_i^{\top} (\nabla^2 \phi(x) )^{-1} a_i }{ (a_i^{\top} x - b_i)^2 },~~ \forall i \in [ m].
\end{align}
\end{definition}
In addition, we give the definition of matrix $Q$.

\begin{definition}[Matrix $Q$]\label{def:Q}
We define matrix $Q(x) \in \R^{n \times n}$ as follows
\begin{align*}
    Q(x) := \sum_{i=1}^m \sigma_i(x) \frac{ a_i a_i^\top }{ ( a_i^\top x - b_i )^2 }.
\end{align*}
\end{definition}

Then we will have 
\begin{lemma}[$Q$ is a constant spectral approximation to Hessian \cite{av93}]\label{lem:Q_approximate}
\begin{align*}
Q(x) \preceq \nabla^2 V(x) \preceq 5 Q(x).
\end{align*}
\end{lemma}

\subsection{Lee-Sidford barrier}
\label{sec:lee_barrier}

We first provide the definition of leverage score for Lee-Sidford barrier function.
\begin{definition}[Leverage score]
For a non-degenerate  
matrix $A\in \R^{m\times n}$, we define $\sigma(A)\in \R^m$ be the leverage score of $A$, e.g., $\sigma(A)_i := a_i^{\top}(A^{\top}A)^{-1}a_i$.
\end{definition}

In the following, we write $A_x := S(x)^{-1}A$, where $S(x) = \diag (s(x)) $ given $s(x)$ is the slack variable defined in Definition \ref{def:slack}.

\begin{definition}[Lee-Sidford barrier]\label{def:ls_barrier}
Let $q = \Theta(\log m)$. We define the function 
\begin{align*}
    f(x,w) := \ln \det ( A_x^{\top} W^{ 1 - 2 / q } A_x ) - ( 1 - 2 / q ) \tr[W].
\end{align*}
The Lee-Sidford barrier is defined as
\begin{align*}
\psi(x) = \max_{w\in \R^m} \frac{1}{2}f(x,w).
\end{align*}
And $w_x := \arg\max_{w\in \R^m}f(x,w)$ is Lewis weight.
\end{definition}

Next, we introduce the definition of the gradient and the Hessian of the Lee-Sidford barrier function.
\begin{lemma}[Lemma 31 of \cite{ls19}]\label{lem:gra_hess_ls_barrier}
The gradient of $\psi(x)$ is 
\begin{align*}
    \nabla \psi(x) = - A_x^{\top} w_x \in \R^n.
\end{align*}
The Hessian of $\psi(x)$ is 
\begin{align*}
    \nabla^2 \psi(x) = A_x^{\top}W_x^{1/2}(I + N_x)W_x^{1/2}A_x \in \R^{n\times n}.
\end{align*}
Further, $N_x$ is a symmetric matrix with $0\preceq N_x \preceq qI$ and therefore
\begin{align*}
    A_x^{\top}W_x A_x \preceq \nabla^2 \psi(x) \preceq (1 + q)A_x^{\top}W_x A_x.
\end{align*}
\end{lemma}

Given $\alpha >0$ and $w\in \R^m$, here we define $\rho(w)\in \R^m$ to be
\begin{align}\label{eq:rho_w}
\rho_i(w):=\frac{\sigma_i(w)}{w_i^{1+\alpha}}.
\end{align}
and define $\sigma(w)\in\R^m$ to be 
\begin{align}\label{eq:sigma_w}
\sigma_i(w):=\sigma_i(W^{1/2}A).
\end{align}

\begin{algorithm}[!ht]
\caption{Lewis Weight Computation}
\label{alg:compute_lewis_weight}
\begin{algorithmic}[1]
\Procedure{$\textsc{ComputeLewisWeight}$}{$A \in \R^{m\times n},p \in \R,\epsilon > 0$}
\State Initialize $w_i^{(0)} = \frac{n}{m}$, for all $i\in[m]$
\State Let $\alpha = \frac{2}{p-2}$, $\ov{\alpha}=\max(\alpha,1)$, $\wt{\epsilon} = \frac{\alpha^8\epsilon^4}{(25m(\sqrt{n}+\alpha)(\alpha + \alpha^{-1}))^{4}}$
\State $T = O(\max(\alpha^{-1},\alpha)\log(m/\wt{\epsilon}))$
\For{ $k = 1,2,\cdots,T$}
    \State $\wt{w}^{(k)} = \textsc{Round}(w^{(k-1)},A,\alpha)$
    \State $w^{(k)} = \textsc{Descent}(\wt{w}^{(k)},\frac{1}{3\ov{\alpha}}\cdot \mathbf{1})$
\EndFor
\State $w_R = \textsc{Round}(w^{(T)},A,\alpha)$
\State \Return $\diag(A(A^{\top}W_RA)^{-1}A^{\top})^{1/\alpha}$
\EndProcedure
\end{algorithmic}
\end{algorithm}

\begin{algorithm}[!ht]
\caption{Subroutine: Descent}
\label{alg:descent}
\begin{algorithmic}[1]
\Procedure{$\textsc{Descent}$}{$w ,\eta$}
\State Let $w'_i = w_i (1+\eta_i\cdot \frac{\rho_i(w)-1}{\rho_i(w)+1})$, for all $i\in [m]$
\State \Return $w'$
\EndProcedure
\end{algorithmic}
\end{algorithm}

\begin{algorithm}[!ht]
\caption{Subroutine: Round}
\label{alg:round}
\begin{algorithmic}[1]
\Procedure{$\textsc{Round}$}{$w \in \R^{m},A,\alpha$}
\State Let $C= \{i \mid \rho_i(w)\geq 1\}$
\For{$i\in C$}
    \State $w_i\leftarrow w_i (1+\delta_i)$, where $\delta_i$ solves $\rho_i(w) = (1+\delta_i \sigma_i(w))(1+\delta_i)^{\alpha}$
\EndFor
\EndProcedure
\end{algorithmic}
\end{algorithm}

We will utilize an important result due to~\cite{flps21}, in which they show $\epsilon$-approximate $\ell_p$ Lewis weights can be computed in $O(p\log(pm/\epsilon))$ iterations for $p\geq 4$. For completeness, we include their algorithm here. Later, we will present our small space implementation in Appendix~\ref{sec:additional_algorithms}.

\begin{lemma}[Theorem 2 of \cite{flps21}]\label{lem:flps}
Given a full-rank matrix $A\in \R^{m\times n}$ and $p\geq 4$, there exists an algorithm, that outputs a $\epsilon$-approximate Lewis weight in $O(p\log(pm/\epsilon))$ iterations. 
\end{lemma}

\subsection{Approximation tools: near-optimal solution to dual and perturbation to Hessian}\label{sec:ipm_preliminary:tools}

We list two types of tools from literature. The first one is about the property of near-optimal solution. We remark that those ideas was firstly proposed in \cite{r88}, we cite the more cleaner statement from Renegar's book \cite{r01}.  The second one is about perturbation to the Hessian matrix.

\begin{lemma}[Nearly-optimal output: value version~\cite{r01}]\label{lem:optimal_x_value}
Given linear program 
\[
\min_{x \in \R^n, Ax \geq b} c^{\top} x,
\]
where $A\in \R^{m\times n}$, $b\in \R^m$, $c\in \R^n$.
Let $x^*\in \R^n$ be the optimal solution of the above LP. Let $\epsilon_{N} \in (0, 1/10)$. If for some $t>0$ and $x\in \R^n$ we have $\Phi_t(x,x) \leq \epsilon_{N}$ (Definition~\ref{def:phi}), then we have
\begin{align*}
    c^{\top} x - c^{\top} x^* \leq \frac{m}{t}\cdot (1 + 2\epsilon_{N}).
\end{align*}
\end{lemma}

In addition, we define the Hessian approximation as follows:
\begin{lemma}[Hessian approximation, \cite{r01}]\label{lem:hessian_aprroximation}
Let $f$ be a self-concordant function with domain $D_f$. Define $H(x):=\nabla^2 f(x)$. For any two feasible point $x,z\in D_f$, if $\|x-z\|_{H(x)} < 1$, then we have
\[
(1-\|x-z\|_{H(x)})^2 \cdot H(z) \preceq H(x) \preceq (1-\|x-z\|_{H(x)})^{-2}\cdot H(z).
\]
\end{lemma}

\section{Algorithm}\label{sec:ipm_algorithm}

In this section, we present our robust dual central path. 
Later in Section~\ref{sec:ipm_error_analysis}, we will give the error analysis of the output of this algorithm. 

In Section~\ref{sec:lp_in_small_space}, we bound the number of pass of executing this algorithm in the streaming model.
Specifically, we implement and analyze streaming SDD solver in Section~\ref{sec:sdd_solver_streaming_model} which is used in Line~\ref{line:wt_delta_x}
of this algorithm.

Here, we give a brief overview of our implementation of robust dual central path (Algorithm \ref{alg:ipm}). First, the algorithm takes the input matrix $A \in \R^{m \times n}$, vector $b \in \R^m$, vector $c \in \R^n$, and four constants $t_{\start},x_{\start},t_{\final},o$ as input. For $t_{\start}$ and $x_{\start}$, they must satisfy $\Phi_{t_{\start}}(x_{\start},x_{\start}) \leq \epsilon_{\Phi}$ to make the algorithm get a nearly optimal solution. For each iteration, we compute $\wt{H}(x)$ as an approximation of the Hessian and compute $\wt{\nabla}f_t(x)$ as an approximation of the gradient. Specially, $\wt{H}(x)$ satisfies
\begin{align*}
\gamma \wt{H}(x) \preceq H(x) \preceq \wt{H}(x)
\end{align*}
and $\wt{\nabla}f_t(x)$ satisfies
\begin{align*}
    \| \wt{\nabla}f_t(x) - \nabla f_t(x) \|_{H(x)^{-1}} \leq \epsilon_g.
\end{align*}
In addition, we use SDD solver, that implemented in Section~\ref{sec:sdd_solver_streaming_model}, to approximate the value of $\delta'= -\wt{H}(x)^{-1}\cdot \wt{\nabla}f_t(x)$ by $\wt{\delta}_x$ and $\wt{\delta}_x$ satisfies 
\begin{align*}
    \|\wt{\delta}_x - \delta_x'\|_{H(x)} \leq \epsilon'.
\end{align*}
Then, we update $x$ by $x^{new} \gets x + \wt{\delta}_x$.

\begin{algorithm}[ht]
\caption{Robust dual central path}
\label{alg:ipm}
\begin{algorithmic}[1]
\Procedure{\textsc{InteriorPointMethod}}{$A,b,c,t_{\start},x_{\start},t_{\final},o$} \Comment{Lemma~\ref{lem:ipm_error}}
\State $m,n\leftarrow $ dimensions of $A$
\State \Comment{$A \in \R^{m \times n}$ is the input matrix, $b \in \R^{m}$, $c\in \R^n$}
\State \Comment{$t_{\start},x_{\start}$ satisfy $\Phi_{t_{\start}}(x_{\start},x_{\start}) \leq \epsilon_{\Phi}$}
\State \Comment{$t_{\final} \in \R$ is the final goal of $t$}
\State \Comment{$o\in \{0,1\}$. If $o$ is $0$, the algorithm decreases $t$, otherwise the algorithm increases $t$}
\State Let $\epsilon_g,\epsilon',\epsilon_t,\epsilon_{\Phi},\gamma,\theta$ be parameters that meet Assumption~\ref{ass:input_of_ipm} 
\State $t\leftarrow t_{\start}$
\State $x\leftarrow x_{\start}$
\State $T \leftarrow O(\epsilon_t^{-1}) \cdot |\log(t_{\final} / t_{\start})| $ \Comment{Number of iterations}
\For {$k \leftarrow 1$ to $T$}  \label{line:IPM_iteartion_start}
    \State Let $\wt{H}(x)$ be any PSD matrix that $\gamma \wt{H}(x) \preceq H(x) \preceq \wt{H}(x)$ 
    \State Let $\delta_x := -\wt{H}(x)^{-1}\cdot \nabla f_t(x)$ \label{line:delta_x}
    \State Let $\wt{\nabla} f_t(x)$ be that $\| \wt{\nabla}f_t(x) - \nabla f_t(x) \|_{H(x)^{-1}} \leq \epsilon_g$ \label{line:wt_nabla_f}
    \State Let $\delta_x' := -\wt{H}(x)^{-1}\cdot \wt{\nabla}f_t(x)$ \label{line:delta_x_2}
    \State Let $\wt{\delta}_x$ be any vector that $\|\wt{\delta}_x - \delta_x'\|_{H(x)} \leq \epsilon'$  \label{line:wt_delta_x}
    \State $x^{\new} \leftarrow x + \wt{\delta}_x$  \label{line:incrase_x_by_newton}
    \If{$o=0$}
        \State $t^{\new} \leftarrow t \cdot ( 1 - \epsilon_t )$
    \Else
        \State $t^{\new} \leftarrow t \cdot ( 1 + \epsilon_t )$
    \EndIf
    \If { ($o=0$ \text{and} $t < t_{\final}$) \text{or} ($o=1$ \text{and} $t>t_{\final}$)}
        \State \textbf{break}
    \EndIf
    \State $t \leftarrow t^{\new}$
    \State $x \leftarrow x^{\new}$
\EndFor \label{line:IPM_iteartion_end}
\State \Return $x$ 
\EndProcedure
\end{algorithmic}
\end{algorithm}

\newpage
\section{Error analysis of IPM}\label{sec:ipm_error_analysis}

In this section, we provide an error analysis for our IPM. The main goal is to prove Lemma~\ref{lem:ipm_error}.

\begin{lemma}\label{lem:ipm_error}
Given any feasible linear program
\begin{align*}
\min_{x\in \R^n, Ax\geq b} c^{\top} x,
\end{align*}
where $A\in \R^{m\times n}$, $b\in \R^{m}$, and $c\in \R^n$. 
Suppose the solution exists and let $x^*\in \R^n$ be the solution.
If $\textsc{InteriorPointMethod}$ (Algorithm~\ref{alg:ipm}) is given $A,b,c,t_{\start},x_{\start},t_{\final}$ that satisfy $\Phi_{t_{\start}}(x_{\start},x_{\start})\leq \epsilon_{\Phi}$, and suppose Assumption~\ref{ass:input_of_ipm} holds, then it outputs an $x$ which is a \emph{nearly-optimal solution}:
\begin{align*}
    c^{\top} x - c^{\top} x^* \leq \frac{m}{t_{\final}}\cdot (1 + 2\epsilon_{\Phi}).
\end{align*}
\end{lemma}
\begin{proof}
Since our initial point $x_{\start}$ and $t_{\start}$ satisfy 
\begin{align*} 
\Phi_{t_{\start}}(x_{\start},x_{\start})\leq \epsilon_{\Phi},
\end{align*}
as $T=O(\epsilon_t^{-1})\cdot |\log(t_{\final} / t_{\start})|$, we can apply Lemma~\ref{lem:bound_phi} to get 
\begin{align*}
\Phi_{t_{\final}}(x,x) \leq \epsilon_{\Phi}.
\end{align*}

Applying Lemma~\ref{lem:optimal_x_value} on $t_{\final}$ and $x$ with our choose of $\epsilon_{\Phi}=1/100 < 1/10$, we get
\begin{align*}
    c^{\top} x - c^{\top} x^* \leq \frac{m}{t_{\final}}\cdot (1 + 2\epsilon_{\Phi}).
\end{align*}
\end{proof}

The rest of this section is organized as follows: 
In Section~\ref{sec:ipm_error_analysis:parameters}, we state the choices of our parameters. We bound the potential function in Section~\ref{sec:bound_phi}. We bound the changes of the potential function when $t$ is moving in Section~\ref{sec:t_move}. We bound the total movement of $x$ in Section~\ref{sec:phi_t_move_both}. The detailed proof of $x$ movement can be splitted into three parts: Section~\ref{sec:phi_t_move_both_1}, Section~\ref{sec:phi_t_move_both_2} and Section~\ref{sec:newton_appro_hessian}. 

\subsection{Assumptions on parameters}\label{sec:ipm_error_analysis:parameters}

We state several assumptions here.
\begin{assumption}\label{ass:input_of_ipm}
We state six assumptions here.
\begin{enumerate}
    \item Let $\gamma \in (0,1]$. In each iteration, $\wt{H}(x)$ satisfies
    \begin{align*}
        \gamma \wt{H}(x) \preceq H(x) \preceq \wt{H}(x);
    \end{align*} \label{part:H(x)}
    \item Let $\epsilon_g \in (0,1/10)$. In each iteration, $\wt{\nabla}f_t(x)$ satisfies
    \begin{align*}
        \| \wt{\nabla}f_t(x) - \nabla f_t(x) \|_{H(x)^{-1}} \leq \epsilon_g;
    \end{align*} \label{part:wt_nabla_f}
    \item Let $\epsilon' \in (0,1/10)$. In each iteration, $\wt{\delta}_x$ satisfies
    \begin{align*}
        \|\wt{\delta}_x - \delta_x'\|_{H(x)} \leq \epsilon';
    \end{align*} \label{part:delta_x_app}
    \item $F(x)$ is $\theta^2$-self-concordant barrier function; \label{part:theta}
    \item Let $\epsilon_x := \epsilon_g + \epsilon'$, $\epsilon_t,\epsilon_x,\epsilon_{\Phi},\gamma,\theta$ satisfies the following inequality:
    \begin{align*}
        (1 + \epsilon_t) \cdot \big(\frac{1-\gamma}{1-(\epsilon_x + \epsilon_{\Phi})} \epsilon_{\Phi} + 4 \epsilon_{\Phi}^2 + 1.05  \epsilon_x)\big) + \epsilon_t\cdot \theta \leq \epsilon_{\Phi},
    \end{align*} \label{part:equation}
    \item $\epsilon_x + \epsilon_{\Phi} \leq 1/10$. \label{part:epsilon_x_phi}
\end{enumerate}
\end{assumption}

\subsection{Bounding potential function  \texorpdfstring{$\Phi$}{}}\label{sec:bound_phi}
The goal of this section is to prove Lemma~\ref{lem:bound_phi}.

\begin{lemma}\label{lem:bound_phi}
For each $t$, let $\Phi_t : \R^n \times \R^n \rightarrow \R$ be defined as Definition~\ref{def:phi}. 
Suppose Assumption~\ref{ass:input_of_ipm} holds. If the input $x_{\start},t_{\start}$ satisfies 
\begin{align*}
\Phi_{t_{\start}}(x_{\start},x_{\start}) \leq \epsilon_{\Phi},
\end{align*}
then
for all iteration $k\in[T]$, we have 
\begin{align*}
\Phi_{t^{(k)}}(x^{(k)},x^{(k)}) \leq \epsilon_{\Phi}.
\end{align*}
\end{lemma}
\begin{proof}
We prove by induction on iteration $k$. In the base case where $k=0$, we have $t^{(0)}:=t_{\start}$ and $x^{(0)}:=x_{\start}$ so that the condition holds by assumption.

When $k\geq 1$, for the ease of notation, we define $x= x^{(k-1)}$, $t= t^{(k-1)}$, $x^{\new} = x^{(k)}$, $t^{\new} = t^{(k)}$ and let $\delta_x$ and $\wt{\delta}_x$ be defined in Line~\ref{line:delta_x} and \ref{line:wt_delta_x} of Algorithm~\ref{alg:ipm}.

First, from the induction hypothesis, we have $\Phi_t(x,x)\leq \epsilon_{\Phi}$. Then,
\begin{align*}
\|\delta_x\|_{H(x)} = \|-\wt{H}(x)^{-1}  \cdot \nabla f_t(x)\|_{H(x)} \leq \|-H(x)^{-1}  \cdot \nabla f_t(x)\|_{H(x)} \leq \epsilon_{\Phi}.
\end{align*}
where the second step is by $\wt{H}(x) \succeq H(x)$ (Part~\ref{part:H(x)} of Assumption~\ref{ass:input_of_ipm})
the last step is by the definition of $\Phi_t(x,x)$ (Definition~\ref{def:phi}).

We bound $\|\delta_x - \wt{\delta}_x\|_{H(x)}$ as follows:
\begin{align*}
    \|\delta_x - \wt{\delta}_x\|_{H(x)} 
    \leq & ~ \|\delta_x - \delta_x'\|_{H(x)} + \|\delta_x' - \wt{\delta}_x\|_{H(x)} \\
    \leq & ~ \|\wt{H}(x)^{-1}\cdot(\nabla f_t(x) + \wt{\nabla}f_t(x))\|_{H(x)} +\epsilon'\\
    \leq & ~ \|H(x)^{-1}\cdot (\nabla f_t(x) + \wt{\nabla}f_t(x))\|_{H(x)} + \epsilon'\\
    \leq & ~ \|\nabla f_t(x) + \wt{\nabla}f_t(x)\|_{H(x)^{-1}} + \epsilon'\\
    \leq & ~ \epsilon_g + \epsilon',
\end{align*}
where the first step is by triangle inequality, the second step is by Part~\ref{part:delta_x_app} of Assumption~\ref{ass:input_of_ipm}, the third step is by $\wt{H}(x) \preceq H(x)$ (Part~\ref{part:H(x)} of Assumption~\ref{ass:input_of_ipm}), the last step is by Part~\ref{part:wt_nabla_f} of Assumption~\ref{ass:input_of_ipm}. 

Let $\epsilon_x := \epsilon_g + \epsilon'$, finally, we have
\begin{align*}
\Phi_{t^{\new}} (x^{\new}, x^{\new}) 
\leq & ~ \frac{t^{\new}}{t}\cdot \Phi_t(x^{\new}, x^{\new})  + |t^{\new} / t - 1| \cdot \Psi(x^{\new}, x^{\new})\\
\leq & ~ \frac{t^{\new}}{t}\cdot \Phi_t(x^{\new}, x^{\new})  + |t^{\new} / t - 1| \cdot \theta \\
\leq & ~ (1 + \epsilon_t) \cdot \Phi_t(x^{\new}, x^{\new})  + \epsilon_t\cdot \theta\\
\leq & ~ (1 + \epsilon_t) \cdot \big(\frac{1-\gamma}{1-(\epsilon_x + \epsilon_{\Phi})} \Phi_t(x, x) + 4 \Phi_t(x,x)^2 + 1.05  \epsilon_x\big) + \epsilon_t\cdot \theta\\
\leq & ~ (1 + \epsilon_t) \cdot \big(\frac{1-\gamma}{1-(\epsilon_x + \epsilon_{\Phi})} \epsilon_{\Phi} + 4 \epsilon_{\Phi}^2 + 1.05  \epsilon_x\big) + \epsilon_t\cdot \theta\\
\leq & ~ \epsilon_{\Phi},
\end{align*}
where the first step is by Lemma~\ref{lem:t_move}, the second step is by Part~\ref{part:theta} of Assumption~\ref{ass:input_of_ipm} and Fact~\ref{fac:bound_psi}, the fourth step is by Lemma~\ref{lem:phi_t_move_both} since $\epsilon_x + \epsilon_{\Phi} \leq 1/10$ (by Part~\ref{part:epsilon_x_phi} of Assumption~\ref{ass:input_of_ipm}) and 
$\| \wt{\delta}_x - \delta_x \|_{H(x)}\leq \epsilon_{x}$, the fifth step is by the induction hypothesis, and the last step is by Part~\ref{part:equation} of Assumption~\ref{ass:input_of_ipm}.
\end{proof}

\begin{table}[!ht]
\small
    \centering
    \begin{tabular}{|l|l|l|l|l|}
    \hline
         {\bf Lemma} & {\bf Section} & LHS & RHS \\ \hline
         Lemma~\ref{lem:bound_phi} & \ref{sec:bound_phi} & $\Phi_{t^{\new}}(x+\wt{\delta}_x, x+\wt{\delta}_x)$ & $\epsilon_{\Phi}$ \\ \hline
         Lemma~\ref{lem:t_move} & \ref{sec:t_move} & $\Phi_{t^{\new}}(x+\wt{\delta}_x,x+\wt{\delta_x})$ & $(t^{\new}/t) \cdot \Phi_t(\cdot) + (1-t^{\new}/t) \cdot \Psi(\cdot)$  \\ \hline 
         Fact~\ref{fac:bound_psi} & \ref{sec:bounding_psi} & $\Psi(x+\wt{\delta}_x,x+\wt{\delta}_x)$ & $\theta$ \\ \hline 
          Lemma~\ref{lem:phi_t_move_both} & \ref{sec:phi_t_move_both} & $\Phi_t(x+\wt{\delta}_x, x+ \wt{\delta}_x)$ & $(1-\gamma)\Phi_t(x,x)$ \\ \hline
          Lemma~\ref{lem:phi_t_move_both_1} & \ref{sec:phi_t_move_both_1} & $\Phi_t(x+\wt{\delta}_x, x+ \wt{\delta}_x)$ & $\Phi_t(x+\delta_x,x+\wt{\delta}_x)$\\ \hline
          Lemma~\ref{lem:phi_t_move_both_2} & \ref{sec:phi_t_move_both_2} & $\Phi_t(x+\delta_x, x+\wt{\delta}_x)$ & $\Phi_t(x+\delta_x, x)$ \\ \hline 
          Lemma~\ref{lem:Phi_x_delta_x_x} & \ref{sec:newton_appro_hessian} & $\Phi_t(x+\delta_x, x)$ & $\Phi_t(x,x)$ \\ \hline 
    \end{tabular}
    \caption{Summary of movement of potential function.}
    \label{tab:summary_movement_potential_function}
\end{table}

\subsection{Bounding the movement of \texorpdfstring{$t$}{}}\label{sec:t_move}
The goal of this section is to prove Lemma~\ref{lem:t_move}.
\begin{lemma}\label{lem:t_move}
Let $\Phi$ and $\Psi$ be defined in Definition~\ref{def:phi} and Definition~\ref{def:psi}.
Then, for all positive $t, t^{\new} > 0$ and feasible $x\in \R^n$, we have 
\begin{align*}
\Phi_{t^{\new}} (x, x) \leq \frac{t^{\new}}{t}\cdot \Phi_t(x, x)  + \left|t^{\new} / t - 1\right| \cdot \Psi(x, x),
\end{align*}
\end{lemma}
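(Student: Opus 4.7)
The plan is to decompose the gradient $\nabla f_{t^{\new}}(x)$ as a linear combination of $\nabla f_t(x)$ and $g(x)$, then apply the triangle inequality with respect to the $H(x)^{-1}$-norm. This should be immediate since $\Phi$ and $\Psi$ are defined as the $H(x)^{-1}$-norms of $\nabla f_t(x)$ and $g(x)$ respectively, and both perturbed objectives share the same barrier gradient $g(x)$.

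Concretely, from Definition~\ref{def:perturbed_objective_function} we have $\nabla f_t(x) = t \cdot c + g(x)$ and $\nabla f_{t^{\new}}(x) = t^{\new}\cdot c + g(x)$. The first step is to write
\[
\nabla f_{t^{\new}}(x) = \frac{t^{\new}}{t}\bigl(t \cdot c + g(x)\bigr) + \Bigl(1 - \frac{t^{\new}}{t}\Bigr) g(x) = \frac{t^{\new}}{t}\nabla f_t(x) + \Bigl(1 - \frac{t^{\new}}{t}\Bigr) g(x).
\]
The second step is to apply the triangle inequality for the seminorm $\|\cdot\|_{H(x)^{-1}}$ (which is a valid norm/seminorm since $H(x) \succeq 0$), yielding
\[
\|\nabla f_{t^{\new}}(x)\|_{H(x)^{-1}} \leq \frac{t^{\new}}{t}\|\nabla f_t(x)\|_{H(x)^{-1}} + \Bigl|1 - \frac{t^{\new}}{t}\Bigr|\cdot \|g(x)\|_{H(x)^{-1}}.
\]
The third step is to translate this back via Definitions~\ref{def:phi} and \ref{def:psi}, recognizing the left side as $\Phi_{t^{\new}}(x,x)$, the first term on the right as $\frac{t^{\new}}{t}\Phi_t(x,x)$, and the second term on the right as $|t^{\new}/t - 1|\cdot \Psi(x,x)$.

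There is no real obstacle here: the result is a one-line computation once the linearity of $\nabla f_t(x)$ in $t$ is exploited. The only thing to verify is that $t^{\new}/t > 0$ so that the coefficient can be pulled out of the norm without an absolute value, which follows from $t, t^{\new} > 0$ by hypothesis.
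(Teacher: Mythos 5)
Your proposal is correct and follows exactly the same route as the paper's proof: decompose $\nabla f_{t^{\new}}(x)$ as $\frac{t^{\new}}{t}\nabla f_t(x) + (1-\frac{t^{\new}}{t})g(x)$, apply the triangle inequality in the $\|\cdot\|_{H(x)^{-1}}$ norm, and read off $\Phi$ and $\Psi$. If anything, you are slightly more careful than the paper in placing the absolute value on the coefficient $|1-t^{\new}/t|$ when pulling it out of the norm.
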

    
\begin{proof}

We have
\begin{align*}
        \nabla f_{ t^{\new} }(x) 
        = & ~  t^{\new} \cdot c + g( x ) \notag\\
        = & ~ \frac{ t^{\new} }{ t }  \cdot (t \cdot c + g( x ) ) + \left(1 - \frac{ t^{\new} }{ t } \right)  \cdot g(x) \notag\\
        = & ~ \frac{ t^{\new} }{ t }  \cdot \nabla f_{t}( x ) + \left(1 - \frac{ t^{\new} }{ t }\right) \cdot g( x ) , 
\end{align*}
where the first step follows from the definition of $f_t(x)$ (Definition~\ref{def:perturbed_objective_function}), the second step follows from moving terms, and the last step follows from the definition of $f_t(x)$ (Definition~\ref{def:perturbed_objective_function}).

Finally, we can upper bound $\Phi_{t^{\new}} (x,x)$ as follows:
\begin{align*}
    \Phi_{t^{\new}} (x, x)
    = & ~ \| \nabla f_{ t^{\new} }(x) \|_{ H(x)^{-1} } \\
    \leq & ~ \frac{ t^{\new} }{ t } \cdot \| \nabla f_{t}( x ) \|_{ H(x)^{-1} } + \left(1 - \frac{ t^{\new} }{ t }\right) \cdot \| g(x) \|_{ H(x)^{-1} } \\
    = & ~ \frac{ t^{\new} }{ t } \cdot \Phi_t(x, x) + \left(1 - \frac{ t^{\new} }{ t }\right) \cdot \Psi(x, x) .
\end{align*}
Thus, we complete the proof.
\end{proof}

\subsection{Upper bounding the potential function}\label{sec:bounding_psi}
We state a fact regarding the relationship between potential $\psi$ and self-concordance parameter $\theta$. For more details, we refer readers to~\cite{nn94}.

\begin{fact}\label{fac:bound_psi}
Let function $\Psi : \R^n  \times \R^n \rightarrow \R$ be defined as Definition~\ref{def:psi}. 
Let $F(x)$ be $\theta^2$-self-concordance function  (see Definition~\ref{def:theta_self_concordance}). Then,
for all feasible $x \in \R^n$, we have
\begin{align*}
\Psi(x, x) \leq \theta.
\end{align*}
\end{fact}

\subsection{Move both: final}\label{sec:phi_t_move_both}

We prove that if we can compute a good enough approximation of the Newton direction, then the potential function can still be controlled. The proof will rely on Lemma~\ref{lem:phi_t_move_both_1} and~\ref{lem:phi_t_move_both_2}.
\begin{lemma}\label{lem:phi_t_move_both}
For any feasible $x \in \R^n$, let $\delta_x \in \R^n$ be defined as in Line~\ref{line:delta_x} of Algorithm~\ref{alg:ipm}, i.e.,
\begin{align*}
\delta_x := -\wt{H}(x)^{-1} \cdot \nabla f_t(x).
\end{align*}
Given $\|\wt{\delta}_x - \delta_x \|_{H(x)} \leq \epsilon_x$, $\| \delta_x \|_{H(x)} \leq \epsilon_{\Phi}$ and $\gamma \wt{H}(x) \preceq H(x) \preceq \wt{H}(x)$. If $\epsilon_x + \epsilon_{\Phi} < 1/10$, then we have
\begin{align*}
\Phi_t( x + \wt{\delta}_x, x + \wt{\delta}_x ) \leq \frac{1-\gamma}{1-(\epsilon_x + \epsilon_{\Phi})} \Phi_t(x, x) + 4 \Phi_t(x,x)^2 + 1.05  \epsilon_x.
\end{align*}
\end{lemma}

\begin{proof}

The proof is done by showing the following:
\begin{align*}
    \Phi_t(x+\wt{\delta}_x, x+\wt{\delta}_x)
    \leq & ~ \Phi_t(x+\delta_x, x+\wt{\delta}_x) + 1.05 \epsilon_x\\
    \leq & ~ (1-(\epsilon_x + \epsilon_{\Phi}))^{-1} \Phi_t(x+\delta_x, x) + 1.05\epsilon_x \\
    \leq & ~ \frac{1-\gamma}{1-(\epsilon_x + \epsilon_{\Phi})} \Phi_t(x, x) + 4 \Phi_t(x,x)^2 + 1.05  \epsilon_x ,
\end{align*}
where the first step is by Lemma~\ref{lem:phi_t_move_both_1}, the second step follows from Lemma~\ref{lem:phi_t_move_both_2}, the last step follows from the fact that $\gamma \wt{H}(x) \preceq H(x) \preceq \wt{H}(x)$ and applying this fact to Lemma~\ref{lem:Phi_x_delta_x_x}. 
\end{proof}

\subsection{Move both: part 1}\label{sec:phi_t_move_both_1}

We show that move both parameters using approximate Newton direction is close to move only one parameter using approximate direction.

\begin{lemma}\label{lem:phi_t_move_both_1}
Given $\|\wt{\delta}_x - \delta_x \|_{H(x)} \leq \epsilon_x$ and $\| \delta_x \|_{H(x)} \leq \epsilon_{\Phi}$. If $\epsilon_x + \epsilon_{\Phi} < 1$, then we have
\begin{align*}
    \Phi_t(x + \wt{\delta}_x, x + \wt{\delta}_x) \leq \Phi_t(x+\delta_x, x+\wt{\delta}_x) + (1-(\epsilon_{\Phi} + \epsilon_x))^{-2}  \epsilon_x.
\end{align*}
\end{lemma}
\begin{proof}

We have
\begin{align}\label{eq:phi_t_move_both_1}
    & ~ \Phi_t(x + \wt{\delta}_x, x + \wt{\delta}_x) - \Phi_t(x+\delta_x, x+\wt{\delta}_x) \notag\\
    = & ~ \| \nabla f_t(x + \wt{\delta}_x) \|_{H(x + \wt{\delta}_x)^{-1}} - \| \nabla f_t(x + \delta_x) \|_{H(x + \wt{\delta}_x)^{-1}} \notag\\
    \leq & ~ \| \nabla f_t(x + \wt{\delta}_x) - \nabla f_t(x + \delta_x)\|_{H(x + \wt{\delta}_x)^{-1}} \notag \\
    = & ~ \| g(x + \wt{\delta}_x) - g(x + \delta_x)\|_{H(x + \wt{\delta}_x)^{-1}},
\end{align}
where the first step is by the definition of $\Phi_t$ (Definition~\ref{def:phi}), and the second step is by the triangle inequality.

Define $\phi(t) := g\big(x + \delta_x  + t\cdot(\wt{\delta}_x - \delta_x) \big)$, for $t\in[0,1]$. Then we have 
\begin{align*}
g(x + \wt{\delta}_x) - g(x + \delta_x) = \phi(1) - \phi(0).
\end{align*}
By the mean value theorem, there exists $p \in [0,1]$ such that 
\begin{align*}
\phi(1) - \phi(0) = \phi'(p) = H(x + \delta_x + p\cdot (\wt{\delta}_x - \delta_x) )\cdot (\wt{\delta}_x - \delta_x).
\end{align*}

Since $\| x + \delta_x + p\cdot(\wt{\delta}_x - \delta_x) - x\|_{H(x)} \leq \| \delta_x \|_{H(x)} + p \cdot \|\wt{\delta}_x - \delta_x\|_{H(x)} \leq \epsilon_{\Phi} + \epsilon_x < 1$, by Lemma~\ref{lem:hessian_aprroximation} we have 
\begin{align}\label{eq:h_x_h_middel}
(1 - (\epsilon_x + \epsilon_{\Phi}))^2 H(x) \preceq H(x + \delta_x + p\cdot(\wt{\delta}_x - \delta_x)) \preceq (1 - (\epsilon_x + \epsilon_{\Phi}))^{-2} H(x).
\end{align}

Since $\| x + \wt{\delta}_x - x\|_{H(x)} \leq \|\delta_x\|_{H(x)} + \|\wt{\delta}_x - \delta_x\|_{H(x)} \leq \epsilon_x + \epsilon_{\Phi} \leq 1$, by Lemma~\ref{lem:hessian_aprroximation} we have 
\begin{align}\label{eq:h_x_h_x_wt_delta_x}
(1-(\epsilon_{\Phi} + \epsilon_x))^2 H(x) \preceq H(x + \wt{\delta}_x) \preceq (1-(\epsilon_{\Phi} + \epsilon_x))^{-2} H(x).
\end{align}

Finally, we have
\begin{align*}
    & ~ \| g(x + \wt{\delta}_x) - g(x + \delta_x)\|_{H(x + \wt{\delta}_x)^{-1}}\\
    = & ~ \big\| H(x + \delta_x + p\cdot (\wt{\delta}_x - \delta_x) )\cdot (\wt{\delta}_x - \delta_x) \big\|_{H(x + \wt{\delta}_x)^{-1}}\\
    \leq & ~ (1-(\epsilon_{\Phi} + \epsilon_x))^{-1} \big\| H(x + \delta_x + p\cdot (\wt{\delta}_x - \delta_x) )\cdot (\wt{\delta}_x - \delta_x) \big\|_{H(x)^{-1}}\\
    \leq & ~ (1-(\epsilon_{\Phi} + \epsilon_x))^{-2} \big\| \wt{\delta}_x - \delta_x \big\|_{H(x)}\\
    \leq & ~ (1-(\epsilon_{\Phi} + \epsilon_x))^{-2} \epsilon_x,
\end{align*}
where the first step is by Eq.~\eqref{eq:phi_t_move_both_1}, the second step is by Eq.~\eqref{eq:h_x_h_x_wt_delta_x}, the third step is by Eq.~\eqref{eq:h_x_h_middel}.

\end{proof}

\subsection{Move both: part 2}\label{sec:phi_t_move_both_2}

We show that only move approximately on one parameter does not deviate too much from that parameter before moving.

\begin{lemma}\label{lem:phi_t_move_both_2}
Given $\|\wt{\delta}_x - \delta_x \|_{H(x)} \leq \epsilon_x$ and $\| \delta_x \|_{H(x)} \leq \epsilon_{\Phi}$. If $\epsilon_x + \epsilon_{\Phi} < 1$, then we have
\begin{align*}
    \Phi_t( x+\delta_x, x+\wt{\delta}_x) \leq 
     (1-(\epsilon_x + \epsilon_{\Phi}))^{-1} \cdot \Phi_t( x+\delta_x, x).
\end{align*}
\end{lemma}
\begin{proof}

Since $\| x + \wt{\delta}_x - x\|_{H(x)} \leq \|\delta_x\|_{H(x)} + \|\wt{\delta}_x - \delta_x\|_{H(x)} \leq \epsilon_x + \epsilon_{\Phi} < 1$, by Lemma~\ref{lem:hessian_aprroximation} we have 
\begin{align*}
(1 - (\epsilon_x + \epsilon_{\Phi}))^2 H(x) \preceq H(x + \wt{\delta}_x) \preceq (1 - (\epsilon_x + \epsilon_{\Phi}))^{-2} H(x).
\end{align*}

Therefore,
\begin{align*}
\Phi_t( x+\delta_x, x+\wt{\delta}_x)
= & ~ \|  \nabla f_t(x + \delta_x) \|_{H(x + \wt{\delta}_x )^{-1} }\\
\leq & ~ (1-(\epsilon_x + \epsilon_{\Phi}))^{-1} \|  \nabla f_t(x + \delta_x) \|_{H(x)^{-1} }\\
= & ~  (1-(\epsilon_x + \epsilon_{\Phi}))^{-1}  \Phi_t( x+\delta_x, x).
\end{align*}
\end{proof}

\subsection{Newton step via approximate Hessian}\label{sec:newton_appro_hessian}

We state a useful fact which will be used later.
\begin{fact}\label{fac:Q}
Let $\gamma \in (0,1)$. Suppose $Q\in\R^{n\times n}$ satisfies $\gamma Q \preceq H(x) \preceq Q$, then we have 
\begin{enumerate}
    \item $\gamma H(x)^{-1} \preceq Q(x)^{-1} \preceq H(x)^{-1}$; \label{fac:Q_part_1}
    \item $0 \preceq H(x)^{-1} - Q(x)^{-1} \preceq (1-\gamma) H(x)^{-1}$. \label{fac:Q_part_2}
\end{enumerate}
\end{fact}
Here, we show the upper bound of $\Phi_t(x + \delta_x, x)$.
\begin{lemma}\label{lem:Phi_x_delta_x_x}
Let $\gamma \in (0,1)$ and suppose $Q\in\R^{n\times n}$ satisfies $\gamma Q \preceq H(x) \preceq  Q$. 
Let the movement be $\delta_x := Q^{-1} \nabla f_t(x)$.
If $\Phi_t(x,x) \leq 0.1$, then we have 
    \begin{align*}
        \Phi_t(x + \delta_x, x) \leq (1-\gamma) \Phi_t(x, x) + 3\Phi_t(x,x)^2.
    \end{align*}
\end{lemma}

\begin{proof}
We first upper bound $\Phi_t(x+\delta_x,x)$ by two terms $(a)$ and $(b)$ separately,
\begin{align*}
    & ~ \Phi_t(x + \delta_x, x) \\
    = & ~ \| g(x+\delta_x) + t\cdot c \|_{H(x)^{-1}}\\
    = & ~ \Big\| g(x) + t\cdot c + H(x)\cdot \delta_x + \int_{0}^1 \left[ H( x+t\cdot \delta_x ) - H(x) \right] \cdot \delta_x \d t \Big\|_{H(x)^{-1}}\\
    \leq & ~ \underbrace{ \Big\| (I-H(x)Q^{-1}) \cdot ( g(x) + t\cdot c) \Big\|_{H(x)^{-1}} }_{a} + \underbrace{ \Big\| \int_{0}^1 [ H( x+t\cdot \delta_x ) - H(x) ] \cdot \delta_x \d t \Big\|_{H(x)^{-1}} }_{b},
\end{align*}
where the first step is by the definition of $\Phi$, the second step is by Taylor expansion of $g(x+\delta_x)$ on $x$.

In the following, we bound $(a)$ and $(b)$ separately.

\begin{align*}
    (a) 
    \leq & ~  \|I-H(x)Q^{-1}\|_{H(x)^{-1}} \cdot \|g(x) + t\cdot c\|_{H(x)^{-1}}\\
    = & ~ \max_{v}\frac{\langle v, (H(x)^{-1} - Q^{-1})v \rangle}{ \langle v, H(x)^{-1} v \rangle } \cdot \Phi_t(x,x)\\
    \leq & ~ (1-\gamma)\Phi_t(x,x).
\end{align*}
where the first step is by the definition of matrix norm and $\Phi_t(x,x)$, the second step follows from the definition of matrix norm and $\Phi_t(x,x)$ (Def.~\ref{def:phi}),
the last step follows from $0 \preceq H(x)^{-1} - Q^{-1} \preceq (1-\gamma) H^{-1}$ (Part~\ref{fac:Q_part_2} of Fact~\ref{fac:Q}). 

Next, we bound the term $(b)$. By the mean value theorem, we have 
\begin{align*}
(b) = \Big \| ( H(x + \xi \cdot \delta_x) - H(x) ) \cdot \delta_x \Big\|_{H(x)^{-1}}
\end{align*}
for some $\xi \in [0,1]$. Define $y = x + \xi \cdot \delta_x$ and let $\epsilon_{\Phi}=\Phi_t(x,x)$.

By Lemma~\ref{lem:hessian_aprroximation} and $\|y-x\|_{H(x)} \leq \|\delta_x\|_{H(x)} \leq \epsilon_{\Phi}$, we have $-3\epsilon_{\Phi} H(x) \preceq H(y) \preceq 3\epsilon_{\Phi} H(x)$.
Then
\begin{align*}
    (b) = & ~ \Big \| H(y) \cdot \delta_x \Big\|_{H(x)^{-1}} \\
    \leq & ~  3 \epsilon_{\Phi} \|\delta_x\|_{H(x)}\\
    \leq & ~ 3\epsilon_{\Phi}^2,
\end{align*}
where the second step follows from $H(x)^{-1/2} H(y) H(x)^{-1/2} \preceq 3 \epsilon_{\Phi} I$.

Overall we have 
\begin{align*}
\Phi_t(x+\delta_x,x) \leq (1-\gamma) \Phi_t(x,x) + 3 \Phi_t(x,x)^2.
\end{align*}

\end{proof}

\section{Solving LP in small space}\label{sec:lp_in_small_space}

We consider the streaming model and the regime where $m\gg n$. The input stream contains $A\in \R^{m\times n}$ and $b\in \R^m$. Further, $(a_i\in \R^n,b_i\in \R)$ are given one by one in the stream. We also assume $c\in\R^n$ is stored in memory.
\begin{assumption}\label{ass:input_of_stream}
We assume $(a_i\in \R^n,b_i\in \R)$ are given together in the input stream , for all $i\in [m]$. We also assume $c\in\R^n$ is stored in memory.
\end{assumption}

We will implement Algorithm~\ref{alg:ipm} with different barrier functions within $\wt O(n^2)$ space. Note that this is not possible for all previous primal-dual central paths, since they must maintain both primal and dual solution, which already costs $O(m)\gg O(n^2)$ space. We provide a generic algorithm template in Section \ref{sec:gen_alg_template}. In Section \ref{sec:log_small_space}, we give the implementation of IPM in small space with logarithmic barrier:
\begin{align*}
    \phi(x) := -\sum_{i \in [m]} \ln(a_i^{\top} x - b_i).
\end{align*}
In Section \ref{sec:hybrid_small_space}, we give the implementation of IPM in small space with hybrid barrier:
\begin{align*}
    V_{\rho}(x) := V(x) + \rho \cdot \phi(x),
\end{align*}
where $V(x)$ is the volumetric barrier function that defined as $V(x) := \frac{1}{2}\log(\det(\nabla^2 \phi(x)))$.

In Section \ref{sec:nsquare_lsbarrier}, we give the implementation of IPM in small space with Lee-Sidford barrier:
\begin{align*}
    f(x,w) := \ln \det ( A_x^{\top} W^{ 1 - 2 / q } A_x ) - ( 1 - 2 / q ) \tr[W],
\end{align*}
where $W=\diag(w)$.

\subsection{Generic algorithm template}
\label{sec:gen_alg_template}
In this section, we provide a generic algorithm template for dual-only robust central path that uses small spaces.

\begin{algorithm}[!ht]
\caption{Robust dual central path}
\label{alg:ipm_stream}
\begin{algorithmic}[1]
\Procedure{IPMStream}{$A,b,c,t_{\start},x_{\start},t_{\final},o,{\rm Bar}$} \Comment{Lemma~\ref{lem:ipm_error}}
\State $m,n\leftarrow $ dimensions of $A$
\State \Comment{$A \in \R^{m \times n}$ is the input matrix, $b \in \R^{m}$, $c\in \R^n$}
\State \Comment{$t_{\start},x_{\start}$ satisfy $\Phi_{t_{\start}}(x_{\start},x_{\start}) \leq \epsilon_{\Phi}$}
\State \Comment{$t_{\final} \in \R$ is the final goal of $t$}
\State \Comment{$o\in \{0,1\}$. If $o$ is $0$, the algorithm decreases $t$, otherwise the algorithm increases $t$}
\State \Comment{${\rm Bar}$ is a \textsc{Barrier} data structured}
\State Let $\epsilon_g,\epsilon_t,\epsilon_{\Phi},\gamma$ be parameters that meet Assumption~\ref{ass:input_of_ipm} 
\State $t\leftarrow t_{\start}$
\State $x\leftarrow x_{\start}$
\State ${\rm Bar}.\textsc{Init}(A, b, x_{\start})$ \Comment{Initialize the barrier data structure}
\State $T \leftarrow O(\epsilon_t^{-1}) \cdot |\log(t_{\final} / t_{\start})| $ \Comment{Number of iterations}
\For {$k \leftarrow 1$ to $T$}
    \State $\wt H(x)\gets {\rm Bar}.\textsc{ApproxHessian}(A, b, x, \gamma, k, c)$
    \State $\wt \nabla f_t(x)\gets {\rm Bar}.\textsc{ApproxGradient}(A, b, x, \epsilon_{g}, k, c)$
    \State $\wt \delta_x\gets -\wt H(x)^{-1}\cdot \wt \nabla f_t(x)$
    \State $x^{\new} \leftarrow x + \wt{\delta}_x$ 
    \If{$o=0$}
        \State $t^{\new} \leftarrow t \cdot ( 1 - \epsilon_t )$
    \Else
        \State $t^{\new} \leftarrow t \cdot ( 1 + \epsilon_t )$
    \EndIf
    \If { ($o=0$ \text{and} $t < t_{\final}$) \text{or} ($o=1$ \text{and} $t>t_{\final}$)}
        \State \textbf{break}
    \EndIf
    \State $t \leftarrow t^{\new}$
    \State $x \leftarrow x^{\new}$
\EndFor
\State \Return $x$ 
\EndProcedure
\end{algorithmic}
\end{algorithm}

As demonstrated by the generic algorithm, we need to implement the Bar data structure for different barrier functions in small space. Note that without counting for the space used by Bar, the algorithm uses $O(n^2)$ space per iteration.

\subsection{Logarithmic barrier}
\label{sec:log_small_space}
We start with perhaps the simplest barrier to compute, the logarithmic barrier. We provide an implementation of Barrier data structure under logarithmic barrier in Appendix~\ref{sec:app_log_bar}.

\begin{theorem}\label{thm:nsquare_logarithmic}
Under Assumption~\ref{ass:input_of_stream}, given any feasible linear program
\begin{align*}
\min_{x\in \R^n, Ax\geq b} c^{\top} x,
\end{align*}
where $A\in \R^{m\times n}$, $b\in \R^{m}$, and $c\in \R^n$. 
Suppose the solution exists and let $x^*\in \R^n$ be the solution. For any $\epsilon >0$, we can outputs an $x$ which is a \emph{nearly-optimal solution}
\begin{align*}
    c^{\top} x - c^{\top} x^* \leq \epsilon.
\end{align*}
in $O(n^2)$ space and $O(\sqrt{m}\log (1/\epsilon))$ passes.
\end{theorem}

\begin{proof}
By a standard method that executes central path twice, we can assume we get $x_{\start},t_{\start}$ such that $\Phi_{t_{\start}}(x_{\start},x_{\start})\leq \epsilon_{\Phi}$, where we let $\epsilon_{\Phi}:=1/100$.

Let $F(x)$ be logarithmic barrier (Def.~\ref{def:log_barrier_function}) with $\theta = \Theta(\sqrt{m})$. We are going to implement $\textsc{InteriorPointMethod}$ (Algorithm~\ref{alg:ipm}) in $O(n^2)$ space so that by Lemma~\ref{lem:ipm_error}, we can finish the proof.

By definition of $\phi(x) $(Def.~\ref{def:log_barrier_function}), we have
\begin{align*}
    \nabla \phi(x)  & ~ = - \sum_{i \in [m]} \frac{a_i}{s_i(x)} \in \R^{n}; \\
    H(x) = \nabla^2 \phi(x) & ~  = \sum_{i \in [m]} \frac{a_i a_i^{\top}}{s_i(x)^2} \in \R^{n\times n}.
\end{align*}
In each iteration, we are given $x\in \R^n$. By Assumption~\ref{ass:input_of_stream}, when we read $(a_i,b_i)$, we can compute $s_i = a_i^{\top}x-b_i$, and then accumulate $a_i/s_i$ to $\nabla \phi(x)$ and accumulate $a_ia_i^{\top}/s_i^{2}$ to $H(x)$. In this way, we can calculate exact $\nabla \phi(x)$ and $\nabla^2 \phi(x)$, and therefore $\wt{\delta}_x:= -H(x)^{-1}\nabla f_t(x)$ can be calculated without any error in $O(n^2)$ space. Since Assumption~\ref{ass:input_of_ipm} holds, by Lemma~\ref{lem:ipm_error}, we finish the proof.
\end{proof}

\subsection{Hybrid barrier}

\label{sec:hybrid_small_space}

Next, we show the space and passes needed for the hybrid barrier function. We provide an
implementation of hybrid barrier data structure.
\begin{theorem}\label{thm:nsquare_hybrid}
Under Assumption~\ref{ass:input_of_stream}, given any feasible linear program
\begin{align*}
\min_{x\in \R^n, Ax\geq b} c^{\top} x,
\end{align*}
where $A\in \R^{m\times n}$, $b\in \R^{m}$, and $c\in \R^n$. 
Suppose the solution exists and let $x^*\in \R^n$ be the solution. For any $\epsilon >0$, we can outputs an $x$ which is a \emph{nearly-optimal solution}
\begin{align*}
    c^{\top} x - c^{\top} x^* \leq \epsilon.
\end{align*}
in $O(n^2)$ space and $O((nm)^{1/4}\log (1/\epsilon))$ passes.
\end{theorem}

\begin{proof}
Similar to the proof of Theorem~\ref{thm:nsquare_logarithmic}, here we only show how to calculate $\wt{\delta}_x$.

Here we let $F(x)$ be hybrid barrier of $\rho = (n/m)$ (Def.~\ref{def:hy_barrier_function}) with $\theta = \Theta((nm)^{1/4})$ (Theorem~\ref{thm:v_rho_self_concor}).
Let $\sigma(x)$ be the definition~\ref{def:sigma}:
\begin{align}
    \sigma_i(x) := \frac{ a_i^{\top} (\nabla^2 \phi(x) )^{-1} a_i }{ (a_i^{\top} x - b_i)^2 },~~ \forall i \in [ m].
\end{align}

Let $Q(x)$ be the definition~\ref{def:Q}:
\begin{align*}
    Q(x) := \sum_{i=1}^m \sigma_i(x) \frac{ a_i a_i^\top }{ ( a_i^\top x - b_i )^2 },
\end{align*}
In the proof of Theorem~\ref{thm:nsquare_logarithmic}, we already showed that $\nabla^2 \phi(x)$ can be calculated and stored using one pass. In the next pass, when we get $(a_i,b_i)$, we can compute $\sigma_i(x)$ exactly, and we accumulate  $\frac{\sigma_i(x)}{s_i(x)^2} a_i a_i^\top$ to $Q(x)$. Finally, we will get exact $Q(x)$.

Let 
\begin{align*} 
\wt{H}(x) = \big( 5Q(x) + (n/m) \nabla^2 \phi(x) \big).
\end{align*}

By Lemma~\ref{lem:Q_approximate}, we have $Q(x) \preceq \nabla^2 V(x) \preceq 5 Q(x)$. Since $H(x) =  \nabla^2 V(x) + (n/m) \nabla^2 \phi(x)$, we have $\frac{1}{5}\wt{H}(x)\preceq H(x) \preceq \wt{H}(x)$. 

On the other side, $\nabla f_t(x) = tc + \sum_{i=1}^m (\sigma_i(x) + n/m)\frac{a_i}{s_i(x)}$ can be computed exactly. Therefore, we can compute $\wt{\delta}_x := \wt{H}(x)^{-1} \nabla f_t(x)$ in $O(n^2)$ space. 

By setting $\epsilon_{\Phi} = 1/100$ and $\epsilon_t = 1/\Omega(\theta)$, we meet Assumption~\ref{ass:input_of_ipm} and get $O(n^2)$ space $O((nm)^{1/4}\log (1/\epsilon))$ 
pass for hybrid barrier.
\end{proof}
For completeness, we include our implementation (Algorithm \ref{alg:hybridbar_space}) in Appendix \ref{sec:app_hybrid_bar}.

\subsection{Lee-Sidford barrier}\label{sec:nsquare_lsbarrier}

The near-universal Lee-Sidford barrier is the crux of many fastest algorithms~\cite{ls14,blss20,blnpsssw20,bllsssw21}. To compute the Newton direction, it is imperative to give a small space implementation of $\ell_p$ Lewis weights, for $p=\Theta(\log m)$. We show that Lewis weights can be computed with $\wt O(1)$ leverage scores recursively. We defer the algorithm to Appendix~\ref{sec:app_lee_bar}.

\begin{theorem}[Formal version of Theorem \ref{thm:nsquare_lsbarrier_informal}]\label{thm:nsquare_lsbarrier} Under Assumption~\ref{ass:input_of_stream}, given any feasible linear program
\begin{align*}
\min_{x\in \R^n, Ax\geq b} c^{\top} x,
\end{align*}
where $A\in \R^{m\times n}$, $b\in \R^{m}$, and $c\in \R^n$. 
Suppose the solution exists and let $x^*\in \R^n$ be the solution. For any $\epsilon >0$, we can outputs an $x$ which is a \emph{nearly-optimal solution}
\begin{align*}
    c^{\top} x - c^{\top} x^* \leq \epsilon.
\end{align*}
in $\wt{O}(n^2)$ space and $\wt{O}(\sqrt{n}\log (1/\epsilon))$ passes.
\end{theorem}
\begin{proof}
Similar to the proof of Theorem~\ref{thm:nsquare_logarithmic}, 
we can assume we get $x_{\start},t_{\start}$ such that 
\begin{align*} 
\Phi_{t_{\start}}(x_{\start},x_{\start})\leq \epsilon_{\Phi},
\end{align*}
where we set $\epsilon_{\Phi}:=(\log m)^{-2}$.

Let $F(x)=\psi(x)$ be Lee-Sidford barrier (Def.~\ref{def:ls_barrier}). 

We let 
\begin{align*}
\wt{H}(x) = (1+q)A^{\top}S(x)^{-1}W_xS(x)^{-1}A.
\end{align*}

By Lemma~\ref{lem:gra_hess_ls_barrier}, 
\begin{align*} 
\gamma \cdot \wt{H}(x) \preceq \nabla^2 \psi(x) \preceq \wt{H}(x),
\end{align*}
where $\gamma = \frac{1}{1+q}$.

Now we show how to compute $w_x$ in $O(n^2)$ space and in $\poly(\log m)$ passes. We use Algorithm~\ref{alg:lewis_weight_small_space} to compute $w_x$. 

Instead of storing Lewis weight $w \in \R^m$ which cost $m$ space, we store matrix $Q \in \R^{n\times n}$ such that given $i$, we can output $w_i$ using $Q$.

There are $T = \poly(\log m)$ iterations, we store $Q^{(i)}$ for each iteration $i\in [T]$. Suppose we have access to $w^{(i)}$ for each $i\in [t]$ for some $t\in[T]$, we show how to compute $w^{(t+1)}$ using $Q^{(1)},\cdots,Q^{(t)}$. In subroutine $\textsc{Round}(w,A,\alpha)$ (Algorithm~\ref{alg:lewis_weight_small_space_subroutine}), for each $i\in [m]$, given $a_i\in \R^n$, we can compute $\sigma_i(w) = w_i\cdot a_i^{\top}Q^{(t)}a_i$, so we can compute $\rho_i(w)$. We check if $\rho_i(w) \geq 1$ and if so, we compute $\delta_i$ and then compute $\wt{w}^{(t)}_i$. Once we have $\wt{w}^{(t)}_i$, in subroutine $\textsc{Descent}(\wt{w}^{(t)},\frac{1}{3\ov{\alpha}}\cdot \mathbf{1})$ (Algorithm~\ref{alg:descent}), again we can compute $\rho_i(\wt{w}^{(t)}_i)$, so we get $w^{(t+1)}_i$. Then we accumulate $a_ia_i^{\top}w^{(t+1)}_i$ to $K^{(t+1)}$ .

After this pass, we get 
\begin{align*} 
K^{(t+1)} = A^{\top}W^{(t+1)}A \in \R^{n\times n} 
\end{align*}
and then we take a inverse to get 
\begin{align*}
Q^{(t+1)} = (K^{(t+1)} )^{-1} = (A^{\top}W^{(t+1)}A)^{-1}
\end{align*}
in $O(n^2)$ space.

The purpose of computing $Q^{(t+1)}$ is that in the future, once we read $a_i$ in the input stream, we can recover 
\begin{align*} 
\sigma_i(w^{(t+1)}) = w^{(t)}_i\cdot a_iQ^{(t+1)}a_i^{\top}.
\end{align*}

Thus, to compute $w^{(t+1)}_i$, we need to recursively compute $w^{(t)}_i$, $w^{(t-1)}_i$, $\cdots$, until we reach $w^{(0)}_i = \frac{n}{m}$. 

The whole process takes $\wt{O}(n^2)$ space and $\poly( \log m)$ passes, since we have $T = \poly ( \log m )$ iterations.

Take $\epsilon$ to be small enough, $1/\poly(n)$, and use Lemma~\ref{lem:flps}, we get exact Lewis weight $w_x$. The IPM has $O(\sqrt{n})$ iterations, and we can reuse space in each iteration. So overall we use $\wt{O}(n^2)$ space and $\sqrt{n}$ passes.

\end{proof}
\section{SDD solver in the streaming model} \label{sec:sdd_solver_streaming_model}

In this section, we present an SDDM solver in the streaming model. Later in Section~\ref{sec:solver_reduction} we reduce the problem of solving an SDD$_0$ system to solving an SDDM system and therefore give an SDD$_0$ solver in the streaming model. 
The reason of using SDD solver but not a simpler Laplacian solver is because we attach an identity matrix of size $n\times n$ into the input edge-vertex incident matrix (see Line~\ref{line:mvc_A_2_b_2_c_2} of Algorithm~\ref{alg:minimum_vertex_cover}) so that only an SDD solver could handle it.

The definitions of SDD matrix and SDDM matrix can be found in Definition~\ref{def:sddm_sdd}.

In Section \ref{sec:ls}, we give the related definitions and lemmas about spectral sparsifier. 
In Section \ref{sec:preconditioner}, we introduce the preconditioner. In Section \ref{sec:iterative_solver}, we introduce a streaming SDDM solver. In Section \ref{sec:iterative_solver_space_pass}, we show the space and passes needed for the iterative solver. In Section \ref{sec:iterative_solver_accuracy}, we show the accuracy of the iterative solver. In Section \ref{sec:SDD_main_result}, we provide our main result for solving SDDM and SDD$_0$ systems.

\subsection{SDD and Laplacian systems}\label{sec:ls}

Spectral sparsifier is the crux for our space and pass-efficient SDD solver. We briefly review its literature.

{\bf Streaming spectral sparsifer.}
Initialized by the study of cut sparsifier in the streaming model \cite{ag09}, a simple one-pass semi-streaming algorithm for computing a spectral sparsifier of any weighted graph is given in \cite{kl11}, which suffices for our applications.
The problem becomes more challenging in a dynamic setting, i.e., both insertion and deletion of edges from the graph are allowed. 
Using the idea of linear sketching, \cite{klmss17} gives a single-pass semi-streaming algorithm for computing the spectral sparsifier in the dynamic setting. 
However, their brute-force approach to recover the sparsifier from the sketching uses $\Omega(n^2)$ time. 
An improved recover time is given in \cite{kmm+19} but requires more spaces, e.g., $\epsilon^{-2} n^{1.5} \log^{O(1)} n$.
Finally, \cite{knst19} proposes a single-pass semi-streaming algorithm that uses $\eps^{-2} n\log^{O(1)} n$ space and $\eps^{-2} n\log^{O(1)} n$ recover time to compute an $\eps$-spectral sparsifier which has $O(\epsilon^{-2} n \log n)$ edges. 
Note that $\Omega( \epsilon^{-2}  n \log n)$ space is necessary for this problem \cite{ckst19}.

\begin{definition}[$\delta$-spectral sparsifier] \label{def:sparsifier}
    Given a weighted undirected graph $G$ and a parameter $\delta > 0$, an edge-reweighted subgraph $H$ of $G$ is an \emph{$\delta$-spectral sparsifier} of $G$ if\footnote{We also say $L_H$ is an $\delta$-spectral sparsifier of $L_G$.}
    \begin{equation*}
        (1 - \delta) \cdot x^{\top} L_G x  \le  x^{\top} L_H x \le (1 + \delta) \cdot x^{\top} L_G x , ~~~ \forall x \in \R^n,
    \end{equation*}
    where $L_G$ and $L_H$ are the Laplacians of $G$ and $H$, respectively.
\end{definition}

Here, we show the space and passes needed to compute a $\delta$-spectral sparsifier of the given graph.
\begin{lemma}[\cite{klmss17}] \label{lem:sss}
    Let $G$ be a weighted graph and $\delta \in (0,1)$ be a parameter. 
    There exists a streaming algorithm that takes $G$ as input, uses $\delta^{-2}n\poly(\log n)$ space and $1$ pass, and outputs a weighted graph $H$ with $\delta^{-2}n\poly(\log n)$ edges such that with probability at least $1-1/\poly(n)$, $H$ is a $\delta$-spectral sparsifier of $G$.
\end{lemma}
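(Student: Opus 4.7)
The plan is to build $H$ by importance sampling edges in proportion to their \emph{effective resistances}, and to circumvent the fact that effective resistances are unknown in streaming by maintaining a chain of coarser sparsifiers. Recall the classical Spielman--Srivastava guarantee: if each edge $e=(u,v)$ of weight $w_e$ is kept independently with probability $p_e := \min\{1,\; C\delta^{-2} w_e r_e \log n\}$, where $r_e$ is the effective resistance between $u$ and $v$ in $G$, and each kept edge is rescaled by $1/p_e$, then by a matrix Chernoff bound applied to the rank-one edge contributions $\frac{w_e}{p_e} L_{\mathrm{norm}}^{-1/2} b_e b_e^\top L_{\mathrm{norm}}^{-1/2}$, the resulting graph is a $\delta$-spectral sparsifier of $G$ with probability $1-1/\poly(n)$. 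Since $\sum_e w_e r_e = n-1$, the expected number of surviving edges is $\tilde O(\delta^{-2} n)$.

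To implement this in one pass without knowing $r_e$ in advance, the strategy is to maintain a geometric chain of sparsifiers $H_0, H_1, \dots, H_L$ with $L=O(\log n)$, where $H_i$ is intended to be a $\delta$-spectral sparsifier of a subgraph $G_i \subseteq G$ obtained by including each edge with probability $2^{-i}$. For each arriving edge $e$, draw an independent level $\ell(e)$ geometrically so that $\Pr[\ell(e)\ge i]=2^{-i}$, and tentatively insert $e$ into $H_0,H_1,\dots,H_{\ell(e)}$. Before committing $e$ to $H_i$, estimate its effective resistance $\tilde r_e^{(i)}$ inside the current $H_{i+1}$ using a Spielman--Srivastava Johnson--Lindenstrauss sketch backed by a Laplacian solver on $H_{i+1}$ (which by induction is a $\delta$-spectral sparsifier of $G_{i+1}$, hence yields constant-factor approximations to effective resistances in $G_i$). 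Use $\tilde r_e^{(i)}$ to decide whether to keep $e$ in $H_i$, with the sampling probability from the Spielman--Srivastava scheme. By a union bound over the $L$ levels, every $H_i$ is a $\delta$-spectral sparsifier of $G_i$, and in particular $H_0$ is a $\delta$-spectral sparsifier of $G$.

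For the space accounting, each $H_i$ has $\tilde O(\delta^{-2} n)$ edges by the leverage-score sum bound, so the chain uses $\tilde O(\delta^{-2} n L)=\delta^{-2} n\,\mathrm{poly}(\log n)$ words. The effective-resistance sketches on each level add another $\tilde O(\delta^{-2} n)$ space, and the Laplacian solver applied to $H_{i+1}$ runs within this budget. Each edge is processed when it arrives and thereafter only affects the sparsifier to which it belongs, so the entire construction takes a single pass.

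The main obstacle is the bootstrap of the inductive guarantee: to estimate $r_e^{(i)}$ for edges entering $H_i$, we need $H_{i+1}$ to already be a valid sparsifier of $G_{i+1}$, yet $H_{i+1}$ is itself being built online. The fix is to analyze the chain backwards: $H_L$ corresponds to a subgraph with $\tilde O(n)$ expected edges, for which even the trivial ``keep everything'' rule is a sparsifier; propagating the guarantee from $H_L$ down to $H_0$ via a careful martingale/freedman-type matrix concentration argument (as in the online row-sampling framework) closes the induction. A union bound over the $O(\log n)$ levels costs only a logarithmic factor in the failure probability, which is absorbed into the $1-1/\poly(n)$ bound.
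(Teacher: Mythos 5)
The paper does not actually prove this lemma: it is imported verbatim from \cite{klmss17} as a black box, so there is no in-paper argument to match. Your sketch also does not follow the cited construction. \cite{klmss17} is a \emph{linear-sketching} algorithm for dynamic streams: it applies $\ell_2$ heavy-hitter sketches to the normalized edge-vertex incidence matrix and recovers high-leverage edges through a chain of \emph{regularized} coarse approximations $L+\gamma_i I$ for a geometric sequence of $\gamma_i$'s. You instead propose an insertion-only, online effective-resistance sampling scheme over a chain of uniformly subsampled graphs, which is closer in spirit to \cite{kl11} and to online row sampling. For this paper's purposes (insertion-only streams) that is a legitimate alternative route, but as written it contains a genuine gap.

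The gap is the inductive step ``$H_{i+1}$ is a $\delta$-spectral sparsifier of $G_{i+1}$, hence yields constant-factor approximations to effective resistances in $G_i$.'' This requires that $G_{i+1}$, rescaled by $2$, be a constant-factor spectral approximation of $G_i$, and uniform subsampling at rate $1/2$ does \emph{not} guarantee this: if $G_i$ contains an edge of leverage score near $1$ (e.g.\ a bridge), discarding it with probability $1/2$ disconnects the graph, the effective resistance across that cut becomes infinite, and no rescaling of $G_{i+1}$ approximates $G_i$. The standard repairs are either (a) the regularized chain of \cite{klmss17}, where consecutive matrices $L+\gamma_{i+1}I$ and $L+\gamma_i I$ are deterministically within constant factors of each other, or (b) the iterative/online row-sampling machinery, where one proves only that resistances computed from a uniform subsample are \emph{over}-estimates of the true leverage scores whose total sum is still $\wt{O}(n)$ with high probability --- a statement that must be established separately and that replaces your appeal to $\sum_e w_e r_e = n-1$. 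Relatedly, since at the moment edge $e$ arrives $H_{i+1}$ contains only a prefix of the stream, your sampling probabilities are online leverage-score estimates; the sampling decisions are neither independent nor based on the final resistances, so the matrix Chernoff bound for independent samples in your first paragraph does not apply, and the edge-count accounting must be redone in the online regime (where the sum of online leverage scores is $O(n\log\kappa)$ rather than $n-1$). You gesture at the Freedman-type fix at the end, but the earlier steps would fail as currently written.
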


We will use the classic SDD solver in the sequential model, which is formally described below.

\begin{theorem}[\cite{st04}] \label{thm:classic_lsolve}
    There is an algorithm which takes input an SDD$_0$ matrix $A$, a vector $b \in \mathbb{R}^n$, and a parameter $\epsilon \in (0, 1/2)$,
    if there exists $x^* \in \mathbb{R}^n$ such that $A x^* = b$,
    then with probability $1 - 1 / \poly(n)$, the algorithm returns an $x \in \mathbb{R}^n$ such that $\|x - x^*\|_A \le \epsilon \cdot \|x^*\|_A$ in \begin{align*}
        \nnz(A) \cdot \poly(\log n) \cdot \log(1 / \epsilon)
    \end{align*} 
    time.
    The returned $x$ is called an \emph{$\epsilon$-approximate solution} to the SDD$_0$ system $Ax = b$.
\end{theorem}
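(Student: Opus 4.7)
The plan is to follow the recursive preconditioning framework of Spielman--Teng. First I would reduce solving an SDD$_0$ system to solving a Laplacian system via Gremban's standard 2x embedding (essentially duplicating each vertex and flipping signs of positive off-diagonal entries), as also referenced in Section~\ref{sec:solver_reduction}. Thus the core task is a nearly-linear time solver for Laplacian systems $L_G x = b$ where $G$ has $n$ vertices and $m$ edges.

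The engine is \emph{preconditioned Chebyshev iteration}: given a preconditioner $M$ satisfying $\frac{1}{\kappa} M \preceq L_G \preceq M$, one obtains an $\epsilon$-approximate solution in the $L_G$-norm using $O(\sqrt{\kappa}\log(1/\epsilon))$ iterations, each requiring one matrix-vector product against $L_G$ (time $O(m)$) and one linear solve against $M$. The task therefore reduces to constructing $M$ that is (a) spectrally close to $L_G$ so $\kappa$ is polylogarithmic or constant, and (b) itself easier to solve --- ideally corresponds to a strictly smaller graph so that one can recurse.

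The preconditioner I would construct is an \emph{ultra-sparsifier}: take a low-stretch spanning tree $T$ of $G$ (Abraham--Neiman type, with total stretch $\wt{O}(m)$), then sample off-tree edges with probability proportional to their stretch to add back roughly $n/k$ extra edges. Standard effective-resistance / matrix-Chernoff arguments show the resulting graph $H$ satisfies $L_H \preceq L_G \preceq O(k\log^c n)\cdot L_H$. Since $H$ is a tree plus $n/k$ off-tree edges, aggressive Cholesky elimination of degree-$1$ and degree-$2$ vertices reduces solving a system in $L_H$ to solving a system on a graph with only $O(n/k)$ vertices and $O(n/k)$ edges. Choosing $k$ a large polylog and recursing yields a geometric shrinkage, giving total work $m\cdot \poly(\log n)\cdot \log(1/\epsilon)$ as claimed, with the $\log(1/\epsilon)$ factor coming from Chebyshev convergence in the $A$-norm.

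The main obstacles in carrying this out rigorously are the spectral analysis of the ultra-sparsifier (showing the sampled graph is genuinely a good preconditioner with high probability, handled by matrix concentration applied to the sum of rank-one edge contributions reweighted by stretch) and balancing the recursion so that the cumulative preconditioning error across levels does not blow up --- this requires a careful interplay between the condition number $\kappa$ at each level, the number of Chebyshev iterations, and the recursion depth. Both ingredients are by now standard and I would cite Spielman--Teng and Koutis--Miller--Peng rather than re-deriving them; the final step is simply to observe that the iterative refinement performed in the $L_G$-norm matches exactly the error guarantee $\|x-x^*\|_A \leq \epsilon \|x^*\|_A$ demanded by the theorem statement.
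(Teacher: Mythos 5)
This theorem is not proved in the paper at all: it is imported as a black box by citation to \cite{st04}, so there is no in-paper argument to compare against. Your sketch is a correct high-level account of how the result is established in the literature --- Gremban's reduction, preconditioned Chebyshev with an ultra-sparsifier built from a low-stretch spanning tree plus stretch-weighted sampling, partial Cholesky elimination of low-degree vertices, and a balanced recursion. One small attribution point: the stretch-based sampling and matrix-Chernoff analysis you describe is really the Koutis--Miller--Peng refinement rather than the original Spielman--Teng construction (which used spectral sparsification via graph partitioning), but since you explicitly defer to those references rather than re-deriving the machinery, this does not affect the correctness of your outline.
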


\subsection{The preconditioner}
\label{sec:preconditioner}

To prove that our SDDM solver (Algorithm~\ref{alg:stream_lsolve}) gives the desired accuracy,
we need the concept of a \emph{preconditioner} (and how to compute the preconditioner of an SDDM matrix).

We define preconditioner as follows:
\begin{definition}[Preconditioner]\label{def:preconditioner}
For any positive definite matrix $A \in \R^{n \times n}$ and accuracy parameter $\delta > 0$, we say $P$ is a $\delta$-preconditioner if
\begin{align*}
    \| P^{-1} A x - x \|_{A} \leq \delta \cdot \| x \|_A, ~~~ \forall x \in \R^n.
\end{align*}
\end{definition}

Next, we give some properties of the preconditioner.
\begin{lemma}\label{lem:sss_property}
    Let $A$ be an SDDM matrix (Definition~\ref{def:sddm_sdd}) and let $A = L_G + D$ where $L_G$ is a Laplacian matrix of graph $G$ and $D \succ 0$ is a diagonal matrix with non-negative entries.\footnote{By Fact~\ref{fct:sddm_psd}, such decomposition always exists.}
    For any $\delta \in (0, 1/2)$, if $H$ is a $\delta$-spectral sparsifier of $G$, and we define $P:= L_H + D$. Then $P$ satisfies the following two conditions: 
    \begin{itemize}
        \item $P $ is a $(2\delta)$-preconditioner(Definition~\ref{def:preconditioner}) of $A$, 
        \item $P \succ 0$.
    \end{itemize}
\end{lemma}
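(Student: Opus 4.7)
The plan is to first establish the spectral approximation $(1-\delta) A \preceq P \preceq (1+\delta) A$, from which both conclusions follow cleanly. Starting from Definition~\ref{def:sparsifier} applied to $L_H$ and $L_G$, we have $(1-\delta) L_G \preceq L_H \preceq (1+\delta) L_G$. Adding $D$ to every side gives $(1-\delta) L_G + D \preceq P \preceq (1+\delta) L_G + D$. To rewrite these in terms of $A = L_G + D$, I will use the identities $(1+\delta) L_G + D = (1+\delta) A - \delta D$ and $(1-\delta) L_G + D = (1-\delta) A + \delta D$, and then drop the $\pm \delta D$ terms using $D \succeq 0$, yielding $(1-\delta) A \preceq P \preceq (1+\delta) A$.

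The positive-definiteness of $P$ is then immediate. Since $A$ is SDDM, Fact~\ref{fct:sddm_psd} gives $A \succ 0$, and therefore $P \succeq (1-\delta) A \succ 0$ (using $\delta < 1$).

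For the preconditioner bound, I will pass to the coordinate system in which $P$ is the identity. Define $B := P^{-1/2} A P^{-1/2}$, which is symmetric. The spectral approximation translates to $(1+\delta)^{-1} I \preceq B \preceq (1-\delta)^{-1} I$, and since $\delta \le 1/2$ we have $(1+\delta)^{-1} \ge 1-\delta$ and $(1-\delta)^{-1} \le 1 + 2\delta$, so $\|B - I\|_{\mathrm{op}} \le 2\delta$. Next, I will use the identity $P^{-1} A - I = P^{-1/2}(B - I) P^{1/2}$ together with the substitution $w := P^{1/2} x$ to obtain
\[
\|P^{-1} A x - x\|_A^2 \;=\; w^\top (B - I)\, P^{-1/2} A P^{-1/2} \,(B - I)\, w \;=\; w^\top (B - I)\, B \,(B - I)\, w,
\]
while $\|x\|_A^2 = w^\top B w$. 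Since $B$ and $B-I$ are symmetric and commute, the numerator equals $\|(B - I) B^{1/2} w\|_2^2$ and the denominator equals $\|B^{1/2} w\|_2^2$; substituting $v := B^{1/2} w$ reduces the ratio to $\|(B - I) v\|_2^2 / \|v\|_2^2 \le \|B - I\|_{\mathrm{op}}^2 \le (2\delta)^2$, which yields the claimed bound $\|P^{-1} A x - x\|_A \le 2\delta \, \|x\|_A$.

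The main subtlety is in the last step: keeping the commuting-matrix manipulation honest (so that $B(B-I)^2$ can be split symmetrically as $B^{1/2}(B-I)^2 B^{1/2}$), and tracking that the two substitutions $w = P^{1/2} x$ and $v = B^{1/2} w$ transform the two $A$-norms in exactly the right way so that the ratio collapses to a pure operator-norm bound. The factor $2$ in the statement is exactly the price of the asymmetry between $(1-\delta)^{-1}$ and $(1+\delta)^{-1}$, and is the reason $\delta \le 1/2$ is assumed rather than merely $\delta < 1$.
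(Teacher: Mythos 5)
Your proof is correct and follows essentially the same route as the paper's: both establish the two-sided Loewner bound $(1-\delta)A \preceq P \preceq (1+\delta)A$, use $\delta \le 1/2$ to convert it into a $2\delta$ operator-norm bound on a symmetric conjugate, and read off both positive definiteness and the preconditioner property. The only cosmetic difference is that the paper conjugates $P^{-1}$ by $A^{1/2}$, so that $\|P^{-1}Ay - y\|_A = \|(A^{1/2}P^{-1}A^{1/2} - I)A^{1/2}y\|_2$ falls out directly, whereas your choice of $B = P^{-1/2}AP^{-1/2}$ requires the extra (but valid) commuting-matrix step to split $(B-I)B(B-I)$.
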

\begin{proof}
    By Definition~\ref{def:sparsifier},
    $$(1 + \delta) x^{\top} L_G x \ge x^{\top} L_H x \ge (1 - \delta) x^{\top} L_G x \ge 0,$$
    then $L_H \succeq 0$ as it must be symmetric. 
    Since $D \succ 0$, we have that $A \succ 0$, $P \succ 0$, and 
    \[
    (1 + \delta) x^{\top} A x \ge x^{\top} P x \ge (1 - \delta) x^{\top} A x > 0.
    \]
    So we obtain
    \begin{equation*}
        \frac{1}{1 - \delta} x^{\top} A^{-1} x \ge x^{\top} P^{-1} x \ge \frac{1}{1 + \delta} x^{\top} A^{-1} x,
    \end{equation*}
    which, by $\delta \in (0, 1/2)$, implies
    \begin{equation*}
        (1 + 2\delta) x^{\top} A^{-1} x \ge x^{\top} P^{-1} x \ge (1 - 2\delta) x^{\top} A^{-1} x.
    \end{equation*}
    Since $A$ is positive definite, replacing $x$ with $A^{-1/2} x$ we get
    \begin{equation} \label{eq:IH}
        (1 + 2\delta) x^{\top} x \ge x^{\top} A^{1/2} P^{-1} A^{1/2} x \ge (1 - 2\delta) x^{\top} x.
    \end{equation}
    
    Therefore,
    \begin{align} \label{eq:HGx}
        2 \delta x^{\top} x \ge x^{\top} \underbrace{ ({A}^{1/2} P^{-1} {A}^{1/2} - I) }_{ :=M } x \ge -2 \delta x^{\top} x.
    \end{align}
    Since Eq.~\eqref{eq:HGx} holds for any $x \in \mathbb{R}^n$ and $M$ is symmetric, using the spectral theorem, we get
    \begin{equation*}
        2 \delta \ge \lambda_1(M) \ge \lambda_n(M) \ge -2 \delta.
    \end{equation*}
    Next, we have that 
    \begin{align*}
        \|Mx\|_2 \leq \max\{ | \lambda_n(M) | , | \lambda_1(M) | \}\|x\|_2 \le 2 \delta \|x\|_2.
    \end{align*}
    Let $y :=  {A}^{-1/2} x$, we get that 
    \begin{align} \label{eq:Gy}
        2 \delta \|{A}^{1/2} y\|_2 
        \ge & ~ \|M {A}^{1/2} y\|_2 \notag \\
        = & ~ \|{A}^{1/2} P^{-1} A y - {A}^{1/2} y\|_2 .
    \end{align}
    Finally, rewriting both sides of Eq.~\eqref{eq:Gy} by the definition of matrix norm, we obtain
    \begin{equation*}
        2 \delta \|y\|_{A} \ge \|P^{-1} A y - y\|_{A}
    \end{equation*}
    for any $y \in \mathbb{R}^n$, giving the lemma.
\end{proof}

\subsection{An iterative solver} \label{sec:iterative_solver}

In this section, we present a streaming SDDM solver (Algorithm~\ref{alg:stream_lsolve}) that takes matrix $A$, vector $b$ and the error parameter $\epsilon$ as input. Particularly, $A$ is an SDD$_0$ matrix that satisfies $A = L_G +D \in \R^{n \times n}$, where $L_G$ is from input stream and $D$ is a diagonal matrix stored in the memory. 

Here, we give a brief overview of our implementation (Algorithm \ref{alg:stream_lsolve}). We first compute and store a $\delta$-spectral sparsifier $H$ of $L_G$. Next, we compute and store a $\delta$-spectral sparsifier $H$ of $L_G$. Then, we update $r_t$ and $x_t$ iteratively by computing a $\wt{\delta}$-approximate solution $y_t$ to $P y = r_t$. Finally, the solver will return an $\epsilon$-approximate solution to $A^{-1}b$. This algorithm takes $\widetilde{O}(n)$ space and $O(\log(1/\epsilon) / \log\log n)$ passes for approximately solving SDDM system $Ax = b$ with error parameter $\epsilon \in (0, 1/10)$.

\begin{algorithm}[ht]
\caption{ 
A streaming SDDM solver. It takes matrix $A$, vector $b$ and the error parameter $\epsilon$ as input. It return an $\epsilon$-approximate solution to $A^{-1}b$}
\label{alg:stream_lsolve}
\begin{algorithmic}[1]
\Procedure{StreamLS}{$A = L_G + D \in \R^{n\times n}$, $b\in \R^{n}$, $\epsilon\in \R$} \Comment{Theorem \ref{thm:stream_lsolve} 
}
\State \Comment{Note that $L_G$ is from input stream. $D$ is diagonal matrix stored.}
\State $\delta \leftarrow 1 / \log n $
\State $\wt{\delta} \leftarrow \delta /2$
\State $T \leftarrow O(\max\{1, ( \log(1/\epsilon) ) / {\log\log n}\})$ \Comment{Number of iterations}
\State Compute and store a $\delta$-spectral sparsifier $H$ of $L_G$ \label{line:compute_H} \Comment{Use one pass, Lemma~\ref{lem:sss}}
\State Compute a $\delta$-preconditioner of $A$ as $P :=  L_H + D$
\State $r_0 \leftarrow b$, $x_0 \leftarrow \mathbf{0}_n$ \Comment{$r_0,x_0\in\R^n$}
\For {$t \leftarrow 0$ to $T-1$} \label{line:iteration_begin}
    \State Compute a $\wt{\delta}$-approximate solution $y_t$ to $P y = r_t$ by an SDD$_0$ solver and store $y_t$ in the memory \label{line:compute_y} \Comment{Theorem~\ref{thm:classic_lsolve}}
    \State $r_{t+1} \leftarrow r_t - A y_t$ \label{line:compute_ry} \\
    \State $x_{t+1} \leftarrow x_t + y_t$
\EndFor \label{line:iteration_end}
\State {\bf return} $x_T$.
\EndProcedure
\end{algorithmic}
\end{algorithm}

\subsection{An iterative solver: space and passes}
\label{sec:iterative_solver_space_pass}
We show that Algorithm~\ref{alg:stream_lsolve} takes $\wt O(n)$ space and $O(\log(1/\epsilon))$ passes.

\begin{lemma} \label{lem:streamls_space}
    Let $A = L_G + D \in \R^{n\times n}$ be an SDDM matrix where $L_G \in \R^{n\times n}$ is the Laplacian matrix of graph $G$ with weight $w$, and $D\in \R^{n\times n}$ is a diagonal matrix. If we can read all edge-weight pairs $(e,w_e)$ in one pass, and if we can read the diagonal of matrix $D$ in one pass, then
    $\textsc{StreamLS}(A, b, \epsilon)$ takes $O(\max\{1, \log(1/\epsilon) / \log\log n\})$ passes and $\widetilde{O}(n)$ space.\footnote{The algorithm can be implemented in the standard RAM model with finite precision by introducing an $\widetilde{O}(1)$ factor in the encoding, which translates to a multiplicative factor of $\widetilde{O}(1)$ in the space \cite{st04_sdd}.}
\end{lemma}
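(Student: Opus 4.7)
The plan is to separately bound the number of passes and the space footprint of Algorithm~\ref{alg:stream_lsolve}, enumerating every stored object and every stream touch.

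For the pass count, the only stream accesses are: (i) the single pass used on Line~\ref{line:compute_H} by Lemma~\ref{lem:sss} to build the spectral sparsifier $H$ of $L_G$, which can be folded together with the single pass needed to read the diagonal entries of $D$; and (ii) one pass per loop iteration on Line~\ref{line:compute_ry} to compute $A y_t = L_G y_t + D y_t$, since $L_G$ itself is never stored. Crucially, Line~\ref{line:compute_y} does \emph{not} access the stream, because after Line~\ref{line:compute_H} the preconditioner $P = L_H + D$ is held in memory, so Theorem~\ref{thm:classic_lsolve} can be invoked as a purely in-memory subroutine. Summing the $O(1)$ setup passes with the $T = O(\max\{1, \log(1/\epsilon)/\log\log n\})$ loop iterations gives the claimed pass bound.

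For the space bound I would go through each persistent object. With $\delta = 1/\log n$, Lemma~\ref{lem:sss} guarantees that $H$ has $\delta^{-2} n \poly(\log n) = \widetilde{O}(n)$ edges, so $P = L_H + D$ occupies $\widetilde{O}(n)$ cells. The diagonal $D$ adds another $O(n)$ cells. The vectors $b, r_t, y_t, x_t \in \R^n$ each use $O(n)$ cells, and because only the current iterates are needed we can overwrite earlier ones. Invoking the solver of Theorem~\ref{thm:classic_lsolve} on $P$ with accuracy $\wt{\delta} = \Theta(1/\log n)$ can be implemented in space proportional to $\nnz(P) \cdot \poly(\log n) = \widetilde{O}(n)$, since the iterative SDD solver only maintains a constant number of length-$n$ vectors alongside the sparse matrix $P$. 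Finally, streaming $L_G y_t$ requires only the accumulator and $y_t$ in memory, i.e.\ $O(n)$ cells. Adding these contributions yields $\widetilde{O}(n)$, and the footnote to the lemma absorbs finite-precision overhead into polylogarithmic factors per cell.

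The main obstacle I anticipate is the bookkeeping needed to justify that Theorem~\ref{thm:classic_lsolve} can be implemented within $\widetilde{O}(\nnz(P))$ space (rather than just near-linear \emph{time}); this is a standard property of SDD solvers, but it should be stated explicitly. A secondary subtlety is that since the stream is re-read across passes, we must verify that all objects built once and reused across iterations, namely $H$, $P$, and $D$, are stable, which is immediate because they are fully materialized before the loop begins and the stream content is static. Beyond these two points the argument is purely accounting.
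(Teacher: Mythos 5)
Your proposal is correct and follows essentially the same route as the paper's proof: one pass to build the sparsifier/preconditioner, one pass per iteration for the matrix--vector product $Ay_t$, no stream access for the inner solve against $P$, space reuse across iterations, and the observation that the near-linear running time of Theorem~\ref{thm:classic_lsolve} on $P$ (with $\nnz(P)=\widetilde{O}(n)$) bounds its space by $\widetilde{O}(n)$. The point you flag as the "main obstacle" is handled in the paper by exactly the same time-implies-space argument, so there is no gap.
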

\begin{proof}
    By Lemma~\ref{lem:sss} and $\delta = 1/\log n$, the $\delta$-spectral sparsifier $H$ has $\widetilde{O}(n)$ edges and can be computed in $1$ pass and $\wt{O}(n)$ space with probability $1 - 1 / \poly(n)$. Therefore, computing the $\delta$-preconditioner $P$ also takes $1$ pass and $\wt{O}(n)$ space.
    Note that $P \succ 0$ and thus the system $P y = r_t$ always has a solution.

    It remains to prove that each iteration of Lines~\ref{line:iteration_begin}-\ref{line:iteration_end} takes $1$ pass and $\widetilde{O}(n)$ space.
    Since any iteration $t$ only needs the vectors subscripted by $t$ and $t+1$, we can reuse the space such that the total space is $\widetilde{O}(n)$.

    Since $\text{nnz}(P) = \widetilde{O}(n)$ and
    \begin{equation*}
        \wt{\delta} = \delta/2 = 1/(2\log n),
    \end{equation*}
    in Line~\ref{line:compute_y}, by Theorem~\ref{thm:classic_lsolve}, with probability $1 - 1 / \poly(n)$ a $\wt{\delta}$-approximate solution $y_t$ can be found in $\widetilde{O}(n \log(1/ \wt{\delta} )) = \widetilde{O}(n)$ time, and therefore in $\wt{O}(n)$ space.
    Note that this step does not read the stream.

    In Line~\ref{line:compute_ry}, computing $r_{t+1}$ requires computing $A y_t$, which is done by reading the stream of $L_G$ and $D$ for $1$ pass and multiplying the corresponding entries and adding up to the corresponding coordinate.
    All vectors are in $\mathbb{R}^n$, so the total space used in each iteration is $\widetilde{O}(n)$.
    The lemma follows immediately from 
    \begin{align*}
        T = O(\max\{1, \log(1/\epsilon) / \log\log n\})
    \end{align*}
    and a union bound. 
\end{proof}

\subsection{An iterative solver: the accuracy}
\label{sec:iterative_solver_accuracy}

We prove the number of iterations required to converge is at most $\wt O(1)$, by demonstrating the effective of the iterative refinement via preconditioner. 

\begin{lemma} \label{lem:xt_iteration}
Let $\delta \in (0,1/10)$. 
    For any $t \in [0, T]$, $\|x_t - A^{-1} b\|_{A} \le (4 \delta)^t \cdot \|A^{-1} b\|_{A}$.
\end{lemma}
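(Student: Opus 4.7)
\medskip
\noindent\textbf{Proof plan for Lemma~\ref{lem:xt_iteration}.}

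The plan is to prove the statement by induction on $t$, reducing the bound on $\|x_t - A^{-1}b\|_A$ to a bound on the residual $r_t$ measured in the $A^{-1}$ norm. First I would establish the invariant $r_t = b - A x_t$, which follows by induction from $r_0 = b$, $x_0 = \mathbf{0}_n$, and the updates $r_{t+1} = r_t - A y_t$, $x_{t+1} = x_t + y_t$. This invariant gives the identity $\|x_t - A^{-1}b\|_A = \|A^{-1} r_t\|_A = \|r_t\|_{A^{-1}}$. Since $\|r_0\|_{A^{-1}} = \|b\|_{A^{-1}} = \|A^{-1}b\|_A$, the lemma reduces to showing the contraction $\|r_{t+1}\|_{A^{-1}} \le 4\delta \cdot \|r_t\|_{A^{-1}}$ for every iteration $t$.

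To prove this contraction, I would write $y_t = P^{-1} r_t + e_t$, where $e_t$ is the error vector from the inner SDD$_0$ solve; by Line~\ref{line:compute_y} and Theorem~\ref{thm:classic_lsolve}, $\|e_t\|_P \le \wt{\delta}\cdot \|P^{-1}r_t\|_P$ with $\wt{\delta} = \delta/2$. Substituting into the residual update gives
\begin{equation*}
r_{t+1} \;=\; r_t - A P^{-1} r_t - A e_t \;=\; (I - A P^{-1}) r_t \;-\; A e_t,
\end{equation*}
so by the triangle inequality in the $A^{-1}$ norm,
\begin{equation*}
\|r_{t+1}\|_{A^{-1}} \;\le\; \|(I - A P^{-1}) r_t\|_{A^{-1}} \;+\; \|A e_t\|_{A^{-1}}.
\end{equation*}

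The two terms are handled separately. For the first, setting $u := A^{-1} r_t$ gives $(I - A P^{-1}) r_t = A(u - P^{-1} A u)$, so $\|(I - A P^{-1}) r_t\|_{A^{-1}} = \|u - P^{-1} A u\|_A$. By Lemma~\ref{lem:sss_property}, $P$ is a $(2\delta)$-preconditioner of $A$, hence this quantity is at most $2\delta \|u\|_A = 2\delta\|r_t\|_{A^{-1}}$. For the second term, $\|A e_t\|_{A^{-1}} = \|e_t\|_A$, which I would bound via the spectral equivalence from Lemma~\ref{lem:sss_property} (namely $(1-\delta) A \preceq P \preceq (1+\delta) A$, implying $\|v\|_A \le \|v\|_P/\sqrt{1-\delta}$ and $P^{-1} \preceq A^{-1}/(1-\delta)$) to get
\begin{equation*}
\|e_t\|_A \;\le\; \tfrac{1}{\sqrt{1-\delta}}\,\|e_t\|_P \;\le\; \tfrac{\wt{\delta}}{\sqrt{1-\delta}}\,\|P^{-1} r_t\|_P \;\le\; \tfrac{\wt{\delta}}{1-\delta}\,\|r_t\|_{A^{-1}}.
\end{equation*}
Plugging in $\wt{\delta} = \delta/2$ and $\delta \le 1/10$, the second term is at most $(\delta/2)/(1-\delta)\cdot\|r_t\|_{A^{-1}} < 2\delta \|r_t\|_{A^{-1}}$, so the two terms together give the desired contraction factor $4\delta$, completing the inductive step.

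The only step that requires genuine care is the bound on $\|A e_t\|_{A^{-1}}$: one must translate the solver guarantee, which is stated in the $P$ norm, into the $A^{-1}$ norm on $r_t$, and this is exactly where the two-sided spectral equivalence between $P$ and $A$ is used twice (once to pass from $\|\cdot\|_P$ to $\|\cdot\|_A$, and once to bound $\|P^{-1} r_t\|_P$ by $\|r_t\|_{A^{-1}}$). Everything else is bookkeeping and the triangle inequality.
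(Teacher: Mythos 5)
Your proposal is correct and follows essentially the same route as the paper: the same induction, with the same triangle-inequality split into a ``preconditioner error'' term controlled by the $(2\delta)$-preconditioner property of Lemma~\ref{lem:sss_property} and a ``solver error'' term controlled by Theorem~\ref{thm:classic_lsolve} plus the spectral equivalence $(1-\delta)A \preceq P \preceq (1+\delta)A$. Your residual-based decomposition $r_{t+1} = (I - AP^{-1})r_t - Ae_t$ in the $\|\cdot\|_{A^{-1}}$ norm is exactly the image of the paper's error-based decomposition $x_{t+1} - A^{-1}b = (x_t - A^{-1}b + P^{-1}r_t) + (y_t - P^{-1}r_t)$ in the $\|\cdot\|_{A}$ norm under the isometry $v \mapsto Av$, so the two arguments coincide term by term.
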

\begin{proof}
    The proof is by an induction on $t$.
    In the basic case of $t = 0$, we have $x_t = 0$ so the statement clearly holds.
    Assuming the lemma holds for $t$, we prove the inductive step for $t+1$.

    Since $y_t$ is a $\wt{\delta}$-approximate solution (Line~\ref{line:compute_y}, Algorithm~\ref{alg:stream_lsolve}), by Theorem~\ref{thm:classic_lsolve} we have
    \begin{equation*}
        \|y_t - P^{-1} r_t\|_{P} \le \wt{\delta} \cdot \| P^{-1} r_t\|_{P} .
    \end{equation*}
    By definition of the matrix norm, this becomes
    \begin{equation} \label{eq:y_t_H}
        (y_t - P^{-1} r_t)^{\top} P (y_t - P^{-1} r_t) \le \wt{\delta}^2 \cdot ( P^{-1} r_t)^{\top} P ( P^{-1} r_t ) .
    \end{equation}
    Since $P$ is a $\delta$-preconditioner, assuming $y_t \neq P^{-1} r_t$ and applying Definition~\ref{def:sparsifier} on both sides of Eq.~\eqref{eq:y_t_H}, we have
    \begin{equation} \label{eq:y_t_G}
        (1 - \delta) \cdot (y_t - P^{-1} r_t)^{\top} A (y_t - P^{-1} r_t) \le (1 + \delta) \wt{\delta}^2 \cdot ( P^{-1} r_t )^{\top} A ( P^{-1} r_t ) .
    \end{equation}
    If $y_t = P^{-1} r_t$, then Eq.~\eqref{eq:y_t_G} also holds since the right-hand side of Eq.~\eqref{eq:y_t_G} is non-negative due to $A \succeq 0$.
    Note that Eq.~\eqref{eq:y_t_G} implies
    \begin{align} \label{eq:y_t_HG}
        \| y_t - P^{-1} r_t \|_{A} 
        \le & ~  \wt{\delta} \sqrt{ (1 + \delta) / (1 - \delta) } \cdot \| P^{-1} r_t\|_{A} \notag \\
        \le & ~ 2 \wt{\delta} \cdot \| P^{-1} r_t\|_{A} \notag \\
        \le & ~ \delta \cdot \| P^{-1} r_t \|_{A} ,
    \end{align}
    where the second step follows from $(1+\delta)/(1-\delta) \leq 4$ when $\delta \in (0,1/2)$, and the last step follows from $\wt{\delta} = \delta / 2$.
    
    Before continuing, we observe the following, which easily follows from the update rule and an induction on $t$:
    \begin{equation}\label{eq:r_t}
        r_t = r_0 - A \sum_{i=0}^{t-1} y_i = r_0 - A (x_t - x_0) = b - A x_t .
    \end{equation}
    Using Eq.~\eqref{eq:r_t}, we bound the left-hand side of Eq.~\eqref{eq:y_t_HG} from above by the following:
    \begin{align}\label{eq:bound_y_t}
        & ~ \| y_t - P^{-1} r_t \|_{A} \notag \\
        \le & ~ \delta \cdot \| P^{-1} (b - A x_t)\|_{A} \notag \\
        = & ~ \delta \cdot \| P^{-1} A ( A^{-1} b - x_t) - ( A^{-1} b - x_t) + ( A^{-1} b - x_t)\|_{A} \notag \\
        \le & ~ \delta \cdot \left(\| P^{-1} A (A^{-1} b - x_t) - ( A^{-1} b - x_t)\|_{A} + \| A^{-1} b - x_t\|_{A} \right) \notag \\
        \le & ~ \delta \cdot \left(2\delta \| A^{-1} b - x_t\|_{A} + \| A^{-1} b - x_t\|_{A} \right) \notag \\
        = & ~ 2 \delta \cdot \| A^{-1} b - x_t\|_{A} , 
    \end{align}
    where the third step follows from the triangle inequality, the fourth step follows from Lemma~\ref{lem:sss_property}, the last step follows from $\delta \in (0,1/2)$.

    Finally, we are ready to prove the inductive step:
    \begin{align*}
      \|x_{t+1} - A^{-1} b\|_{A} 
        = & ~ \|x_t - A^{-1} b + y_t\|_{A} \\
        = & ~ \|(x_t - A^{-1} b + P^{-1} r_t) + (y_t - P^{-1} r_t)\|_{A} \\
        \le & ~ \|x_t - A^{-1} b + P^{-1} r_t\|_{A} + \|y_t - P^{-1} r_t\|_{A} \\
        = & ~ \|x_t - A^{-1} b + P^{-1} b - P^{-1} A x_t\|_{A} + \|y_t - P^{-1} r_t\|_{A} \\
        \le & ~ \|P^{-1} A ( A^{-1} b -  x_t) - ( A^{-1} b -  x_t)\|_{A} + 2 \delta \cdot \| A^{-1} b - x_t\|_{A} \\
        \le & ~ 2 \delta \cdot \| A^{-1} b -  x_t\|_{A} + 2\delta \cdot \|{A}^{-1} b - x_t\|_{A} \\
        \le & ~ (4 \delta)^{t+1} \cdot \| A^{-1} b\|_{A} ,
    \end{align*}
    where the first step follows from $x_{t+1} = x_t + y_t$, the third step follows from the triangle inequality,
    the fourth step follows from Eq.~\eqref{eq:r_t},
    the fifth step follows from Eq.~\eqref{eq:bound_y_t},
    the sixth step follows from Lemma~\ref{lem:sss_property},
    and the last step follows from the induction hypothesis,
    completing the proof.
\end{proof}

\subsection{Main result}
\label{sec:SDD_main_result}
We are ready to show SDDM and ${\rm SDD}_0$ systems can be solved in the advertised space and passes.

\begin{lemma} \label{lem:stream_lsolve}
Let $\epsilon \in (0,1)$ and $b\in \R^{n}$. Let $A = L_G + D \in \R^{n\times n}$ be an SDDM matrix where $L_G \in \R^{n\times n}$ is the Laplacian matrix of graph $G$ with weight $w$, and $D\in \R^{n\times n}$ is a diagonal matrix.
If we can read all edge-weight pairs $(e,w_e)$ in one pass, and if we can read the diagonal of matrix $D$ in one pass, then with probability $1 - 1/\poly(n)$, 
$\textsc{StreamLS}(A, b, \epsilon)$ returns an $\epsilon$-approximate solution $x$,
i.e.,
\begin{equation*}
    \|x - A^{-1} b\|_{A} \le \epsilon \cdot \|A^{-1} b\|_{A} .
\end{equation*}
in $O(\max\{1, \log(1/\epsilon) / \log\log n\})$ passes and $\widetilde{O}(n)$ space.
\end{lemma}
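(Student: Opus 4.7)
The proof is essentially a clean combination of the two workhorse lemmas just established: Lemma~\ref{lem:streamls_space} for the resource bounds and Lemma~\ref{lem:xt_iteration} for the accuracy guarantee. The plan is to verify that the parameter choices in Algorithm~\ref{alg:stream_lsolve} ($\delta = 1/\log n$, $\wt\delta = \delta/2$, and $T = O(\max\{1, \log(1/\epsilon)/\log\log n\})$) simultaneously deliver the promised $\epsilon$-approximation and the claimed $O(\log(1/\epsilon)/\log\log n)$ pass count.

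For the accuracy, I would instantiate Lemma~\ref{lem:xt_iteration} at $t = T$, which yields
\[
\| x_T - A^{-1} b \|_A \le (4 \delta)^T \cdot \| A^{-1} b \|_A.
\]
Substituting $\delta = 1/\log n$ gives a per-iteration contraction factor of $4/\log n$, so it suffices to show $(4/\log n)^T \le \epsilon$. Taking logarithms, this reduces to $T \cdot (\log\log n - 2) \ge \log(1/\epsilon)$, which holds for sufficiently large $n$ once $T = C \cdot \log(1/\epsilon)/\log\log n$ with a large enough absolute constant $C$ (the $\max\{1,\cdot\}$ handles the degenerate case where $\log(1/\epsilon) \le \log\log n$ so that a constant number of iterations already suffices). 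The hypothesis $\delta \in (0,1/10)$ of Lemma~\ref{lem:xt_iteration} is satisfied as soon as $n$ is larger than a fixed constant, which we may assume without loss of generality.

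For the space and pass complexity, I would invoke Lemma~\ref{lem:streamls_space} directly with the same choice of $T$. That lemma already accounts for the one pass used to compute the spectral sparsifier $H$ via Lemma~\ref{lem:sss}, the $\wt{O}(n)$-space invocations of the classical SDD$_0$ solver of Theorem~\ref{thm:classic_lsolve} on the $\wt{O}(n)$-edge preconditioner $P = L_H + D$ (which requires no stream access), and the single pass per iteration used to form $A y_t$ from the streamed edge-weight pairs of $L_G$ together with the stored diagonal $D$. Reusing storage across iterations keeps the total space at $\widetilde{O}(n)$.

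The only genuinely delicate point — and the one I would flag as the main obstacle — is the failure-probability accounting. Both the one-shot sparsifier construction (Lemma~\ref{lem:sss}) and each of the $T$ classical SDD$_0$ calls (Theorem~\ref{thm:classic_lsolve}) succeed with probability $1 - 1/\poly(n)$. Since $T = O(\log(1/\epsilon)/\log\log n) \le \poly(n)$ under the standing assumption $\epsilon \ge 1/\poly(n)$, a union bound over these $T+1$ events still yields overall success probability $1 - 1/\poly(n)$ after inflating the polynomial in each per-step failure probability by a $\poly(n)$ factor; this is harmless because the solvers we invoke allow their failure probabilities to be driven down to $1/n^c$ for any constant $c$ at only polylogarithmic cost. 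Combining the accuracy bound, the resource bound, and the union bound gives the statement of Lemma~\ref{lem:stream_lsolve}.
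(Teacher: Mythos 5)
Your proposal is correct and follows exactly the paper's route: the paper's own proof is a one-line citation of Lemma~\ref{lem:streamls_space} (resources), Lemma~\ref{lem:xt_iteration} (accuracy), and the choices of $\delta$ and $T$, which is precisely the combination you spell out, including the $(4\delta)^T \le \epsilon$ calculation and the union bound already carried out inside Lemma~\ref{lem:streamls_space}. No gaps; your version is simply a more explicit rendering of the same argument.
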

\begin{proof}
    It follows by Lemma~\ref{lem:streamls_space}, our choices of $\delta, T$, and Lemma~\ref{lem:xt_iteration}.
\end{proof}

By Lemma~\ref{lem:stream_lsolve} and the reduction in Section~\ref{sec:solver_reduction}, we obtain our main result.
\begin{theorem} \label{thm:stream_lsolve}
    There is a streaming algorithm which takes input an SDD$_0$ matrix $A$, a vector $b \in \mathbb{R}^n$, and a parameter $\epsilon \in (0, 1)$,
    if there exists $x^* \in \mathbb{R}^n$ such that $A x^* = b$,
    then with probability $1 - 1 / \poly(n)$, the algorithm returns an $x \in \mathbb{R}^n$ such that $\|x - x^*\|_A \le \epsilon \cdot \|x^*\|_A$ in $O(\max\{1, \log(1/\epsilon) / \log\log n\})$ passes and $\widetilde{O}(n)$ space.
\end{theorem}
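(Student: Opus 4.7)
The plan is to deduce Theorem~\ref{thm:stream_lsolve} from Lemma~\ref{lem:stream_lsolve} via two classical reductions (SDD$_0 \Rightarrow$ SDDM$_0 \Rightarrow$ SDDM), each of which is streaming-friendly: they increase dimension by only a constant factor, cost $O(1)$ extra passes, preserve the $\widetilde{O}(n)$ space budget, and preserve relative accuracy in the appropriate energy norm up to a constant. This is the content of Section~\ref{sec:solver_reduction}. I would verify each reduction in turn and then compose them, rescaling the accuracy parameter passed to the SDDM solver by a constant.

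For the first reduction, given an SDD$_0$ matrix $A \in \mathbb{R}^{n\times n}$, decompose $A = D + A_- + A_+$, where $D$ is the diagonal, $A_-$ collects non-positive off-diagonals, and $A_+$ collects non-negative off-diagonals. I would then invoke Gremban's doubling construction
\[
\tilde A \;=\; \begin{pmatrix} D + A_- & -A_+ \\ -A_+ & D + A_- \end{pmatrix} \in \mathbb{R}^{2n\times 2n},
\]
which is SDDM$_0$, and observe that $\tilde A \tilde x = \binom{b}{-b}$ has a solution of the form $\tilde x = \binom{y}{-y}$ with $Ay = b$; recovering $y$ is immediate. Streaming compatibility is clear since each nonzero of $\tilde A$ is determined by a single nonzero of $A$ (so one pass over $A$ simulates one pass over $\tilde A$), the diagonal $D$ fits in $\widetilde{O}(n)$ space, and accuracy in $\|\cdot\|_{\tilde A}$ translates to accuracy in $\|\cdot\|_A$ up to a factor of two, by direct block-matrix computation.

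For the second reduction, I would distinguish two cases using Fact~\ref{fct:sddm_psd}. If the SDDM$_0$ matrix has at least one strictly positive row-sum, it is SDDM and Lemma~\ref{lem:stream_lsolve} applies directly after writing it as $L_G + D$ with $D \succ 0$ (the strictly dominant rows contribute to $D$). The remaining case, where the matrix is a pure Laplacian, is the main subtlety: since $L_G$ is singular, I would handle consistent systems ($b \perp \ker L_G$, which is guaranteed whenever a solution $x^*$ exists) by the standard trick of adjoining an auxiliary grounded vertex with uniform edge weight $\eta = 1/\mathrm{poly}(n)$ to produce an SDDM matrix $A_\eta$ of dimension $n+1$; for sufficiently small $\eta$, solving the grounded SDDM system to accuracy $\Theta(\epsilon)$ yields an $\epsilon$-approximate solution to the original Laplacian system in the $\|\cdot\|_A$ seminorm. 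The grounded edges are stored explicitly outside the stream, so the streaming cost is unchanged.

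The main obstacle I anticipate is the bookkeeping for the Laplacian subcase: one must choose $\eta$ small enough that the perturbation $\|A_\eta^{-1} - L_G^\dagger\|$ is dominated by $\epsilon$, yet not so small that the $\log(1/\epsilon)$-style dependence in Lemma~\ref{lem:stream_lsolve} blows up the pass count. Since Lemma~\ref{lem:stream_lsolve} depends on $\log(1/\epsilon)/\log\log n$, choosing $\eta = 1/n^{O(1)}$ and a rescaled accuracy $\epsilon/C$ for a suitable constant $C$ absorbs all constants while preserving the claimed $O(1 + \log(1/\epsilon)/\log\log n)$ pass bound and $\widetilde{O}(n)$ space bound. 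A union bound over the $O(1)$ invocations of the SDDM solver keeps the failure probability at $1/\mathrm{poly}(n)$.
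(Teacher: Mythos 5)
Your overall architecture --- compose Lemma~\ref{lem:stream_lsolve} with the two reductions SDD$_0$ $\Rightarrow$ SDDM$_0$ $\Rightarrow$ SDDM --- is exactly the paper's (Section~\ref{sec:solver_reduction}), and your first reduction (Gremban's doubling, with $x=(x_1-x_2)/2$ and the factor-of-two norm transfer) matches it in every detail. The divergence is in the Laplacian subcase of the second reduction, and that is where your argument has a genuine gap. You regularize $L_G$ to $A_\eta = L_G + \eta I$ (a grounded auxiliary vertex) and claim $\eta = 1/n^{O(1)}$ suffices. But the relative error incurred by replacing $L_G^{\dag} b$ with $A_\eta^{-1} b$, measured in the $\|\cdot\|_{L_G}$ seminorm, scales per eigencomponent like $\eta/(\lambda_i + \eta) \le \eta/\lambda_2(L_G)$, so you need $\eta \lesssim \eps \cdot \lambda_2(L_G)$. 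The theorem is stated for an arbitrary real SDD$_0$ matrix with no lower bound on the spectral gap, so no fixed $\eta = 1/\poly(n)$ works in general; to repair this you would have to either estimate $\lambda_2$ or impose a polynomial bound on the bit complexity of the input (which the paper assumes elsewhere for the IPM application, but does not need at this point).

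The paper sidesteps the issue entirely: it first splits the stream by connected components (one pass, a spanning forest in $\wt{O}(n)$ space), and within each component it grounds \emph{exactly} by deleting one row and column of $L$, yielding a nonsingular SDDM matrix $\wt{A}$ with $\|L^{\dag} b\|_{L} = \|\wt{A}^{-1}\wt{b}\|_{\wt{A}}$ and zero perturbation error, so no new parameter has to be tuned against an unknown spectral quantity. Your regularization has the minor advantage of not requiring the component decomposition (since $L+\eta I \succ 0$ regardless of connectivity), but as written the choice of $\eta$ is not justified. Everything else in your proposal --- the $O(1)$ extra passes, the constant-factor dimension blowup, rescaling $\eps$ by a constant, and the union bound over the constantly many solver invocations --- is correct and matches the paper.
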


\section{Minimum vertex cover}\label{sec:minimum_vertex_cover}

Given a linear programming form of fractional minimum vertex cover, we design an algorithm that outputs the set of tight constraints of some optimal solution. This set is crucial for us to turn the optimal dual solution into an optimal primal solution via complementary slackness (Theorem~\ref{thm:strong_duality_cs}).

The basic idea is to run interior point method (procedure \textsc{InteriorPointMethod}, Section~\ref{sec:ipm_algorithm}, Algorithm~\ref{alg:ipm}) to get a near-optimal solution. 
However, the hard part is that even if we get a near-optimal solution, we are still far from figuring out all tight constraints of ``some'' optimal solution, e.g., when $\langle x, c \rangle$ is close to $\langle x^*, c \rangle$, $x$ is not necessarily close to $x^*$. The key observation is that our LP is integral and does not have large bit complexity, which means once we get a near-optimal solution, we can align to some optimal solution. Technically, here we use the isolation lemma for the second time (Lemma~\ref{lem:lemma_43_ls14}). As a byproduct, we also give a formal theorem on solving fractional minimum vertex cover. 

In Section \ref{sec:algorithm_minimum_vertex_cover}, we present out implementation of the minimum vertex cover algorithm. In Section \ref{sec:correctness_minimum_vertex_cover}, we show the correctness of our implementation. In Section \ref{sec:pass_complexity_minimum_vertex_cover}, we also show the complexity of it. In Section \ref{sec:building_blocks}, we explain why Hessian matrix is an SDDM matrix. In Section \ref{sec:minimum_vertex_cover_solver}, we provide our main result of the minimum vertex cover solver.

\ifdefined\debug

\begin{table}[h]
    \centering
    \begin{tabular}{|l|l|l|l|}
        \hline
        {\bf Statement} & {\bf Section} & {\bf Where we use} & {\bf Comments} \\
        \hline
        Lemma~\ref{lem:correct_minimum_vertex_cover} & \ref{sec:correctness_minimum_vertex_cover} & Theorem~\ref{lem:correctness_main} & Main lemma, correctness\\
        \hline    
        Lemma~\ref{lem:time_minimum_vertex_cover} &
        \ref{sec:pass_complexity_minimum_vertex_cover} &  Theorem~\ref{lem:running_time_main} & Main lemma, running time\\
        \hline
        Lemma~\ref{lem:lemma_43_ls14} &
        \ref{sec:correctness_minimum_vertex_cover} &  Lemma~\ref{lem:correct_minimum_vertex_cover} & From nearly-optimal solution to exact solution\\
        \hline
        Lemma~\ref{lem:lambda_min_of_hessian} &
        \ref{sec:correctness_minimum_vertex_cover}& Lemma~\ref{lem:correct_minimum_vertex_cover} & Bound the eigenvalue of Hessian\\
        \hline 
        Lemma~\ref{lem:hessian_sdddm} &
        \ref{sec:building_blocks} & Lemma~\ref{lem:lambda_min_of_hessian}, \ref{lem:time_minimum_vertex_cover} & Proof: Hessian is SDDM\\
        \hline
    \end{tabular}
    \caption{The structure of Section~\ref{sec:minimum_vertex_cover}.}
\end{table}
\else
\fi

\subsection{Our algorithm}\label{sec:algorithm_minimum_vertex_cover}

\begin{algorithm}[h]
\caption{Minimum Vertex Cover. We present the algorithm that given an linear programming form of fractional minimum vertex cover, output the set of tight constraints of some optimal solution.}
\label{alg:minimum_vertex_cover}
\begin{algorithmic}[1]
\Procedure{\textsc{MinimumVertexCover}}{$G=(V_L,V_R,E),b_1 \in \Z^{m},c_1 \in \Z^n, n,m$} \Comment{Lemma~\ref{lem:correct_minimum_vertex_cover}}
\State Let $A_1$ be the signed edge-vertex incident matrix of $G$ with direction $V_L$ to $V_R$ \label{line:mvc_A_1}
\State $L\leftarrow $ the bit complexity of $A_1,b_1,c_1$ \label{line:mvc_L}
\State Modify $A_1,b_1,c_1$ by adding constraints $x_{V_L} \geq 0$ and $x_{V_R}\leq 0$ to get $A_2,b_2,c_2$ \label{line:mvc_A_2_b_2_c_2}
\State Modify $A_2,b_2,c_2$ according to Lemma~\ref{lem:lemma_43_ls14} and get $A_3,b_3,c_3$ 
\label{line:mvc_A_3_b_3_c_3}
\Comment{$A_2=A_3$, $b_2=b_3$}
\State Let $F(x)$ be logarithmic barrier function \Comment{Def.~\ref{def:log_barrier_function}}
\State $x_{\init} \leftarrow 2^L \cdot (\mathbf{1}_{V_L} - \mathbf{1}_{V_R})$ \label{line:mvc_x_init}
\State $t_{\init} \leftarrow 1$ \label{line:mvc_t_init}
\State $c_{\init} \leftarrow -g(x_{\init})$
\Comment{$g(x) = \nabla F(x)$}\label{line:mvc_c_init}
\State $\epsilon_{\Phi} \leftarrow 1/100$ \label{line:mvc_epsilon_Phi}
\State $t_1 \leftarrow \epsilon_{\Phi} \cdot (m2^{4L+10})^{-1}$ \label{line:mvc_t_1}
\State $x_{\tmp} \leftarrow \textsc{InteriorPointMethod}(A_3,b_3,c_{\init},t_{\init},x_{\init},t_1,0,\epsilon_{\Phi}/4)$ \Comment{Algorithm~\ref{alg:ipm}} \label{line:first_call_to_ipm} \label{line:mvc_x_tmp}
\State $t_2 \leftarrow nm2^{3L+10}$ \label{line:mvc_t_2}
\State $x \leftarrow \textsc{InteriorPointMethod}(A_3,b_3,c_3,t_1,x_{\tmp},t_2,\epsilon_{\Phi},1)$ \Comment{Algorithm~\ref{alg:ipm}} \label{line:second_call_to_ipm} \label{line:mvc_x}
\State Use Lemma~\ref{lem:lemma_43_ls14} to turn $x$ into a set $S$ of tight constraints on LP of $(A_1,b_1,c_1)$ \label{line:mvc_xxxx}
\State \Comment{$S\subseteq[m+2n]$} 
\State \Return $S\cap [m]$
\EndProcedure
\end{algorithmic}
\end{algorithm}

{\bf Description.} The constraints of minimum vertex cover is $Ax\geq b,~x\geq 0$ while our IPM algorithm only accepts the form $Ax\geq b$. In order to call IPM, at the beginning we transform $A_1,b_1,c_1$ to $A_2,b_2,c_2$. Then by Lemma~\ref{lem:lemma_43_ls14}, we do slight perturbation over $A_2,b_2,c_2$ to get $A_3,b_3,c_3$. Then the algorithm executes the standard two IPM walks. The first walk starts from a far point on the central path which is ensured to be feasible and then goes along the central path to the analytic center. Then we switch the central path which correspond to the true LP we want to solve, starting from this analytic center and solve it. Basically, the first walk is to find a good initial point for the second walk.

\subsection{Correctness of Algorithm~\ref{alg:minimum_vertex_cover}}\label{sec:correctness_minimum_vertex_cover}
We prove that Algorithm~\ref{alg:minimum_vertex_cover} can return a set of tight constraints of some optimal solutions to guide us find a maximum weight bipartite matching.

\begin{lemma}[Correctness of Algorithm~\ref{alg:minimum_vertex_cover}]\label{lem:correct_minimum_vertex_cover}
In Algorithm~\ref{alg:minimum_vertex_cover},
let $A=A_1$ be signed edge-vertex incident matrix of $G$ with direction $V_L$ to $V_R$. Let $[n] = V_L \cup V_R$. 
Let $b = b_1$ and $c = c_1$ be the input.
If the linear programming 
\begin{align}\label{eq:lp_a_1_b_1_c_1}
\min_{x\in \R^n} & ~ c^{\top} x \\
\mathrm{s.t.} & ~ A x \geq b \notag \\
& ~ x_v \geq 0, \forall v \in V_L \notag \\
& ~ x_v \leq 0, \forall v \in V_R \notag
\end{align}
is both feasible and bounded, then with probability at least $1/2$, Algorithm~\ref{alg:minimum_vertex_cover} returns a set of tight constraints on some optimal solution to the LP Eq.\eqref{eq:lp_a_1_b_1_c_1}.
\end{lemma}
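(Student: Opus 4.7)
The overall plan is to execute Renegar's two-phase path-following method followed by a rounding step.

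First, I would verify that the starting point $x_{\init}$ is strictly feasible for the perturbed system $A_3 x \ge b_3$. Every sign constraint in Line~\ref{line:mvc_A_2_b_2_c_2} is slack at $x_{\init} = 2^L(\mathbf{1}_{V_L} - \mathbf{1}_{V_R})$ by $2^L$, while each edge constraint $(A_1)_{e,\cdot} x_{\init}$ evaluates to $2^{L+1} > \|b_1\|_{\infty} + 1$ since $L$ bounds the bit complexity of $(A_1,b_1,c_1)$; the Lemma~\ref{lem:lemma_43_ls14} perturbation only shifts $b_2,c_2$ by $1/\poly(n, 2^L)$, which preserves strict feasibility. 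Moreover, with $c_{\init}$ chosen so that $t_{\init} c_{\init} + g(x_{\init}) = \mathbf{0}_n$, we have $\nabla f_{t_{\init}}^{(c_{\init})}(x_{\init}) = \mathbf{0}_n$, hence $\Phi_{t_{\init}}(x_{\init}, x_{\init}) = 0 \le \epsilon_{\Phi}/4$.

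Second, I would apply Lemma~\ref{lem:ipm_error} to the first IPM call on Line~\ref{line:first_call_to_ipm}: with $o=0$, the parameter $t$ is driven from $t_{\init}=1$ down to $t_1$, and upon termination $\Phi_{t_1}^{(c_{\init})}(x_{\tmp}, x_{\tmp}) \le \epsilon_{\Phi}/4$. The key technical step is then to bound the centering discrepancy introduced by switching the objective from $c_{\init}$ to $c_3$ at the fixed point $x_{\tmp}$. By the triangle inequality applied to $\nabla f_{t_1}(x_{\tmp}) = t_1 c + g(x_{\tmp})$,
\begin{align*}
    \Phi_{t_1}^{(c_3)}(x_{\tmp}, x_{\tmp}) \le \Phi_{t_1}^{(c_{\init})}(x_{\tmp}, x_{\tmp}) + t_1 \cdot \|c_3 - c_{\init}\|_{H(x_{\tmp})^{-1}}.
\end{align*}
The first summand is at most $\epsilon_{\Phi}/4$. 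For the second, Lemma~\ref{lem:lambda_min_of_hessian} (whose proof relies on the SDDM structure given by Lemma~\ref{lem:hessian_sdddm}) furnishes a bound $\lambda_{\min}(H(x_{\tmp})) \ge 2^{-O(L)}$, and the construction of $c_3$ bounds $\|c_3 - c_{\init}\|_2 \le 2^{O(L)}$, so the choice $t_1 = \epsilon_{\Phi}/(m \cdot 2^{4L+10})$ forces the second summand below $3\epsilon_{\Phi}/4$. Hence the precondition $\Phi_{t_1}^{(c_3)}(x_{\tmp}, x_{\tmp}) \le \epsilon_{\Phi}$ of the second IPM call is met.

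Third, the second IPM call on Line~\ref{line:second_call_to_ipm} runs with $o=1$ and drives $t$ from $t_1$ up to $t_2 = nm \cdot 2^{3L+10}$. By Lemma~\ref{lem:ipm_error} again, the returned point $x$ satisfies $c_3^{\top} x - \mathrm{OPT}_3 \le (m/t_2)(1+2\epsilon_{\Phi}) = O(2^{-3L}/n)$, where $\mathrm{OPT}_3$ is the optimum of the LP determined by $(A_3,b_3,c_3)$. By Lemma~\ref{lem:lemma_43_ls14}, with probability at least $1/2$ the perturbed LP has a unique optimal vertex $x^*$, and the objective-value gap between $x^*$ and the next best vertex exceeds $O(2^{-3L}/n)$; by standard vertex-rounding from the same lemma, $x$ can be rounded to $x^*$, and the tight constraints at $x^*$ correspond to tight constraints at an optimal vertex of the original LP~\eqref{eq:lp_a_1_b_1_c_1}. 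Returning $S \cap [m]$ discards the auxiliary sign constraints and yields the claim.

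The main obstacle is the third paragraph, namely coordinating $t_1$, $\epsilon_{\Phi}$, $L$, and $\lambda_{\min}(H(x_{\tmp}))$ so that the mid-flight change of objective preserves the IPM centering. This is exactly why the algorithm adopts the specific values $t_1 = \epsilon_{\Phi}/(m \cdot 2^{4L+10})$ and $t_2 = nm \cdot 2^{3L+10}$, and it is where the eigenvalue bound from Lemma~\ref{lem:lambda_min_of_hessian}---and hence the SDDM structure of the Hessian from Lemma~\ref{lem:hessian_sdddm}---is indispensable.
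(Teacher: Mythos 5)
Your proposal is correct and follows essentially the same route as the paper's proof: zero initial potential for the first IPM call, a triangle-inequality bound on the centering error from switching the objective $c_{\init}\to c_3$ at $x_{\tmp}$, controlled by $\lambda_{\min}(H(x_{\tmp}))$ (Lemma~\ref{lem:lambda_min_of_hessian}) together with the choice $t_1 = \epsilon_{\Phi}(m2^{4L+10})^{-1}$, and then Lemma~\ref{lem:ipm_error} with $t_2$ followed by the rounding of Lemma~\ref{lem:lemma_43_ls14}. Your explicit verification that $x_{\init}$ is strictly feasible is a worthwhile detail the paper leaves implicit (and note the perturbation of Lemma~\ref{lem:lemma_43_ls14} touches only the objective, so feasibility is trivially preserved).
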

\begin{proof}
First we need to show these two calls into \textsc{InteriorPointMethod} satisfy the initial condition that $x_{\start},t_{\start}$ is a good start point.

{\bf The first call}
In the first call (Line~\ref{line:first_call_to_ipm}), we have
\begin{align*}
\| t_{\init} \cdot c_{\init} + \nabla g(x_{\init}) \|_{H(x_{\init})^{-1}} = \|0\|_{H(x_{\init})^{-1}} = 0 \leq \epsilon_{\Phi}/4,
\end{align*}
where the first step is by definition of $c_{\init}$.

Since the algorithm \textsc{InteriorPointMethod} ends up in parameter $x_{\tmp}$ and $t_1$, by Lemma~\ref{lem:bound_phi}, we have
\begin{align}\label{eq:first_call}
\| t_1 \cdot c_{\init} + \nabla g(x_{\tmp})\|_{H(x_{\tmp})^{-1}} \leq \epsilon_{\Phi}/4.
\end{align}
Note that $t_1$ and $x_{\tmp}$ is the input to the second call.

{\bf The second call}
In the second call (Line~\ref{line:second_call_to_ipm}), the desired term can be upper bounded by
\begin{align*}
    & ~ \| t_1 \cdot c_3 + \nabla g(x_{\tmp}) \|_{H(x_{\tmp})^{-1}} \\
    \leq & ~ \| t_1 \cdot c_{\init} + \nabla g(x_{\tmp})\|_{H(x_{\tmp})^{-1}} + \| t_1 \cdot c_{\init} - t_1 \cdot c_3 \|_{H(x_{\tmp})^{-1}}\\
    \leq & ~ \epsilon_{\Phi}/4 + t_1 \cdot \| c_{\init} - c_3 \|_{H(x_{\tmp})^{-1}}
\end{align*}
where the first step is by triangle inequality, the second step is from Eq.~\eqref{eq:first_call}.

Now let's bound the term $\| c_{\init} - c_3 \|_{H(x_{\tmp})^{-1}}$.
\begin{align*}
    \| c_{\init} - c_3 \|_{H(x_{\tmp})^{-1}}
    \leq & ~ \|c_{\init} - c_3\|_2 \cdot \lambda_{\min}(H(x_{\tmp}))^{-1/2}\\
    \leq & ~ \|c_{\init} - c_3\|_2 \cdot 2^{L+3} \\
    \leq & ~ (\|c_{\init}\|_2 + \|c_3\|_2) \cdot 2^{L+3} \\
    \leq & ~ (\sqrt{m} 2^{L+3} + \|c_3\|_2)\cdot 2^{L+3}\\
    \leq & ~ (\sqrt{m} 2^{L+3} + m2^{3L+4})\cdot 2^{L+3}\\
    \leq & ~ m 2^{4L+8},
\end{align*}
where the first step is by $\|x\|_{H} = \sqrt{x^{\top} H x}\leq \|x\|_2 \cdot \sqrt{\lambda_{\max}(H)}$ and $\lambda_{\max}(H) = \lambda_{\min}(H^{-1})^{-1}$, the second step is by Lemma~\ref{lem:lambda_min_of_hessian}, the fourth step is by $c_{\init} = -\nabla g(x_{\init} )$, $x_{\init} = 2^L (\mathbf{1}_{V_L} - \mathbf{1}_{V_R})$ and the definition of $g = \nabla \phi(x)$ (Definition~\ref{def:log_barrier_function}), the fifth step is by the definition of $c_3$ (in Lemma~\ref{lem:lemma_43_ls14}).

By our choice of $t_1 := \epsilon_{\Phi} \cdot (m2^{4L+10})^{-1}$ (Line~\ref{line:mvc_t_1}), we finally have
\begin{align*} 
\| t_1 \cdot c_3 + \nabla g(x_{\tmp}) \|_{H(x_{\tmp})^{-1}} 
\leq & ~ \epsilon_{\Phi}/4 + \epsilon_{\Phi}/4 \\
\leq & ~ \epsilon_{\Phi}.
\end{align*} 

Let $\OPT$ be the optimal solution to the linear programming
\begin{align*}
    \min_{x\in \R^n, A_3x\geq b_3}c_3^{\top}x.
\end{align*}
By $t_2:=nm2^{3L+10}$ and Lemma~\ref{lem:ipm_error}, we know our solution $x$ (Line~\ref{line:second_call_to_ipm}) is feasible and has value $c_3^{\top} x - \OPT \leq \frac{m}{t_2}\cdot(1 + 2\epsilon_{\Phi}) \leq n^{-1}2^{-3L-2}$.
Now, we can apply Lemma~\ref{lem:lemma_43_ls14} to show that after one matrix vector multiplication, with probability at least $1/2$, we can output the tight constraints $S$ of a basic feasible optimal solution of 
\begin{align}\label{eq:lp_a_2_b_2_c_2}
    \min_{x\in \R^n, A_2x\geq b_2}c_2^{\top}x.
\end{align}

Note that this LP is exactly the same as Eq.~\eqref{eq:lp_a_1_b_1_c_1} by our construction on $A_2,b_2,c_2$ (Line~\ref{line:mvc_A_2_b_2_c_2}).

\end{proof}

Next, we provide how to use the perturbed linear programming to approximate the solution of the linear programming.
\begin{lemma}[Lemma 43 of \cite{ls14}]\label{lem:lemma_43_ls14}
    Given a feasible and bounded linear programming
    \begin{align}\label{eq:lemma_43_ls14_before}
        & ~ \min_{x\in\R^n} c^{\top} x\\
        \mathrm{s.t.~} & ~ Ax\geq b \notag,
    \end{align}
    where $A\in \Z^{m\times n}$, $b\in \Z^m$, $c\in \Z^n$ all having integer coefficients.
    Let $L$ be the bit complexity of Eq.~\eqref{eq:lemma_43_ls14_before}.
    
    Let $r\in \Z^{n}$ be chosen uniformly at random from the integers $\{-2^{L+1}n,\cdots,2^{L+1}n\}$.
    Consider the perturbed linear programming
    \begin{align}\label{eq:lemma_43_ls14_after}
        & ~ \min_{x\in\R^n} (2^{2L+3}n\cdot c + r)^{\top} x\\
        \mathrm{s.t.~} & ~ Ax\geq b \notag.
    \end{align}
    Then with probability at least $1/2$ over the randomness on $r\in \Z^n$, we have the following. 
    
    Let $\OPT$ be defined as the optimal value of linear programming Eq.~\eqref{eq:lemma_43_ls14_after}. 
    Let $x$ be any feasible solution for Eq.~\eqref{eq:lemma_43_ls14_after} with objective value less than $\OPT + n^{-1}2^{-3L-2}$, then we can find the tight constraints of a basic feasible optimal solution of Eq.~\eqref{eq:lemma_43_ls14_before} using one matrix vector multiplication with $A$. 
    Moreover, we have $\|x-x^*\|_{\infty} \leq 1/n$, where $x^*$ is the unique optimal solution for Eq.~\eqref{eq:lemma_43_ls14_after}. Additionally, the bit complexity of Eq.~\eqref{eq:lemma_43_ls14_after} is at most $3L + \log(8n)$. 
\end{lemma}

Next, we show the lower bound of $\lambda_{\min}(H(x)) $.

\begin{lemma}\label{lem:lambda_min_of_hessian}
Let $H(x)$ be defined as in Lemma~\ref{lem:hessian_sdddm}, then we have
\begin{align*}
    \lambda_{\min} (H(x)) \geq 2^{-2L-6}.
\end{align*}
\end{lemma}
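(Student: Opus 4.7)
The plan is to exploit the box constraints $x_v \ge 0$ for $v \in V_L$ and $-x_v \ge 0$ for $v \in V_R$ that were explicitly added to the system in Line~\ref{line:mvc_A_2_b_2_c_2}. These $2n$ extra rows of $A_2$ are exactly $\pm e_v$ (standard basis vectors), so they contribute rank-one Hessian terms of a very special diagonal form. First I would write
\[
H(x) \;=\; \sum_{e\in E}\frac{a_e a_e^{\top}}{s_e(x)^2} \;+\; \sum_{v\in [n]} \frac{e_v e_v^{\top}}{x_v^2},
\]
observe that the first sum is positive semidefinite, and conclude
\[
H(x) \;\succeq\; \operatorname{diag}\!\bigl(1/x_v^2 : v\in[n]\bigr),
\qquad
\lambda_{\min}(H(x)) \;\geq\; \min_{v \in [n]} \frac{1}{x_v^2}.
\]
After this reduction, the lemma is equivalent to showing $|x_v| \le 2^{L+3}$ for every coordinate of the point $x$ at which we evaluate the Hessian.

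Next I would establish the coordinate bound $\|x\|_\infty \le 2^{L+3}$. Since the lemma is applied to iterates of \textsc{InteriorPointMethod} (see the use in Lemma~\ref{lem:correct_minimum_vertex_cover}), the relevant $x$ is strictly feasible and close to the central path of a perturbed LP whose bit complexity is at most $3L+\log(8n)$ by Lemma~\ref{lem:lemma_43_ls14}. A standard LP bit-complexity argument shows that every basic feasible solution of such a system has coordinates bounded in absolute value by $2^{O(L)}$, and since the central path is contained in the convex hull of the analytic center and the LP optimum, the same bound extends to central-path points. Combining this with the invariant $\Phi_t(x,x)\le\epsilon_\Phi$ (Lemma~\ref{lem:bound_phi}) and the Hessian equivalence from Lemma~\ref{lem:hessian_aprroximation}, I can transport the bound from the central-path point to the iterate $x$ itself with only a constant blow-up, which is easily absorbed into the extra factor $2^3$ appearing in $2^{L+3}$.

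Putting the two steps together yields
\[
\lambda_{\min}(H(x)) \;\geq\; \min_{v\in[n]} \frac{1}{x_v^2} \;\geq\; \frac{1}{(2^{L+3})^2} \;=\; 2^{-2L-6},
\]
as claimed. The first step is a completely mechanical computation, essentially by definition of $H(x)$ and the structure of $A_2$; I expect the main obstacle to be the second step, namely the explicit $\|x\|_\infty \le 2^{L+3}$ bound. Without some form of bounding constraint, the feasible polytope $\{x \ge 0 \text{ on } V_L,\ x\le 0 \text{ on } V_R,\ A_1 x \ge b_1\}$ is unbounded, so the proof must crucially use that the IPM iterates stay close to the central path (which is bounded because the objective is bounded and the problem has integer, totally unimodular data via Lemma~\ref{lem:totally_unimodular}), rather than just using feasibility alone.
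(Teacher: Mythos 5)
Your first step is exactly the paper's argument: the paper writes $H(x)=L(x)+D(x)$ with $L(x)\succeq 0$ the edge part and $D(x)$ the diagonal contribution of the box constraints $x_{V_L}\ge 0$, $x_{V_R}\le 0$, and concludes $\lambda_{\min}(H(x))\ge\min_i D(x)_{i,i}=\min_i s_{i+m}(x)^{-2}$, which is your $\min_v 1/x_v^2$. Where you diverge is the second step: the paper's proof simply \emph{asserts} that the slacks satisfy $\|s\|_\infty\le 2^{L+3}$ (equivalently $\|x\|_\infty\le 2^{L+3}$) with no justification, whereas you correctly observe that feasibility alone cannot give this because the polytope $\{A_1x\ge b_1,\ x_{V_L}\ge 0,\ x_{V_R}\le 0\}$ is unbounded, and you propose to derive the bound from bit complexity plus proximity of the iterates to the (bounded portion of the) central path. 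That is a real contribution relative to what is written in the paper, but your sketch of it is not yet a proof: the claim that ``the central path is contained in the convex hull of the analytic center and the LP optimum'' is false as stated (the central path is a curve, not a segment, and here the algorithm additionally switches objectives between the two IPM calls), so the boundedness of the relevant central-path points, and hence of the iterates via Lemma~\ref{lem:x_z_distance_close} and $\Phi_t(x,x)\le\epsilon_\Phi$, still needs a correct argument — e.g., via level sets of $f_t$ together with the integrality/total unimodularity bounds on vertices of the polytope. In short: same decomposition and same reduction to $\|x\|_\infty\le 2^{L+3}$; you flag and partially address a gap the paper leaves implicit, but the central-path boundedness step as you state it is not yet rigorous.
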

\begin{proof}
We can lower bound $\lambda_{\min}(H(x))$ in the following sense,
\begin{align*}
 \lambda_{\min} (H(x)) 
 \geq & ~ \lambda_{\min} (D(x)) \\
\geq & ~ \min_{i \in [n]} D(x)_{i,i} \\
\geq & ~ 2^{-2L-6}
\end{align*}
where the first step is by Lemma~\ref{lem:hessian_sdddm} that $H(x) = L(x) + D(x)$ and both $L(x)$ and $D(x)$ are positive semi-definite matrices, the second step is by the fact that $D(x)$ is diagonal matrix, the third step is by Lemma~\ref{lem:hessian_sdddm} that $D(x)_{i,i} = s_{i+m}(x)^{-2} + s_{i+m+n}(x)^{-2}$ and $s(x) := Ax - b$ is bounded by $\|s\|_{\infty} \leq 2^{L+3}$.
\end{proof}

\subsection{Pass complexity of Algorithm~\ref{alg:minimum_vertex_cover}}\label{sec:pass_complexity_minimum_vertex_cover}

We show that Algorithm~\ref{alg:minimum_vertex_cover} takes $\wt O(\sqrt{m})$ passes and $\wt O(n)$ space.
\begin{lemma}[Pass complexity of Algorithm~\ref{alg:minimum_vertex_cover}]\label{lem:time_minimum_vertex_cover}
Suppose there's an oracle $\mathcal{I}$ running in $f(n)$ space that can output $(b_1)_i$ given any $i\in[m]$. If the input satisfies $\|b\|_{\infty}, \|c\|_{\infty}$ is polynomially bounded, then Algorithm~\ref{alg:minimum_vertex_cover} can be implemented in the streaming model within $\wt{O}(n) + f(n) + |S|$ space and $\wt{O}(\sqrt{m})$ passes.
\end{lemma}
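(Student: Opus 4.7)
The plan is to reduce the pass and space cost of Algorithm~\ref{alg:minimum_vertex_cover} to that of its two calls to \textsc{InteriorPointMethod} on Lines~\ref{line:first_call_to_ipm} and \ref{line:second_call_to_ipm}, and then invoke Lemma~\ref{lem:ipm_running_time} twice. In each invocation the constraint matrix is $A_3$, which is the signed edge-vertex incidence matrix $A_1$ augmented with the $n$ sign constraints $x_v \ge 0$ for $v\in V_L$ and $x_v \le 0$ for $v\in V_R$. I would split the analysis into three pieces: the per-iteration streaming cost inside IPM, the cost of computing and storing the two objective vectors $c_{\init}$ and $c_3$, and the cost of the final tight-constraint extraction.

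First I would verify the three streaming conditions C.1--C.3 of Lemma~\ref{lem:ipm_running_time} for $A_3$. For C.1, at any feasible $x$ stored in $O(n)$ space, the $m$ edge contributions to $\nabla g(x) = -\sum_i a_i / s_i(x)$ can be accumulated in a single pass: for each streamed edge $e=(u,v)$ with index $i$ we query the oracle $\mathcal{I}$ for $(b_1)_i$ in $f(n)$ space, compute $s_i(x) = x_u - x_v - (b_1)_i$, and add $(\mathbf{1}_u - \mathbf{1}_v)/s_i(x)$ into a running $n$-vector; the $n$ sign-constraint terms depend only on $x$ and need no stream access. This gives C.1 in $O(n) + f(n)$ space. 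For C.2 and C.3, Lemma~\ref{lem:hessian_sdddm} supplies the decomposition $H(x) = L_G(x) + D(x)$, where the Laplacian $L_G(x)$ has edge-weight $1/s_i(x)^2$ on each streamed edge (producible in one pass via $\mathcal{I}$) and the diagonal $D(x)$ comes entirely from the $n$ sign constraints and is a function of the stored $x$ alone, hence readable without any additional stream access.

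Next I would handle the two objective vectors. The centering objective $c_{\init} = -\nabla g(x_{\init})$ is an $n$-dimensional vector, computed once from $x_{\init} = 2^L(\mathbf{1}_{V_L} - \mathbf{1}_{V_R})$ in one pass and $\wt{O}(n) + f(n)$ space and then stored for reuse. The second-call objective $c_3 = 2^{2L+3} n \cdot c_2 + r$ involves $c_2 = c_1$ (already $n$-dimensional and polynomially bounded by assumption) and the integer vector $r \in \Z^n$ of Lemma~\ref{lem:lemma_43_ls14}; both fit in $\wt{O}(n)$ bits since $L = O(\log n) = \wt{O}(1)$ under the polynomial-boundedness assumption. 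Sampling and storing $r$ at initialization ensures the same LP is presented across every pass. By Lemma~\ref{lem:ipm_running_time}, each of the two IPM calls then runs in $\wt{O}(\sqrt{m})$ passes and $\wt{O}(n) + f(n)$ space; the number of outer iterations is $\wt{O}(\sqrt{m})$ because $|\log(t_{\final}/t_{\start})| = O(\log m + L) = \wt{O}(1)$ for both calls under the choices of $t_{\init}, t_1, t_2$ in Lines~\ref{line:mvc_t_init}, \ref{line:mvc_t_1}, \ref{line:mvc_t_2}.

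Finally, the conversion of the near-optimal $x$ into the tight-constraint set $S$ on Line~\ref{line:mvc_xxxx} is, by Lemma~\ref{lem:lemma_43_ls14}, a single matrix-vector product with $A_1$: streaming the edges one more time and querying $\mathcal{I}$ for each $(b_1)_i$, we compute $(A_1 x - b_1)_i$ on the fly and append $i$ to $S$ whenever the constraint is tight, using the $\|x - x^*\|_\infty \le 1/n$ guarantee of Lemma~\ref{lem:lemma_43_ls14} to round correctly. This last pass uses $f(n)$ space for the oracle and $|S|$ additional space for the output. Summing the three pieces and reusing space between the IPM calls yields the claimed $\wt{O}(\sqrt{m})$ passes and $\wt{O}(n) + f(n) + |S|$ space. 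The main delicate point will be confirming that every intermediate object carried across passes ($x_{\init}, x_{\tmp}, x, c_{\init}, c_3, r$) is $n$-dimensional and can be stored persistently, so that the only size-$m$ object ever materialized is the edge stream itself.
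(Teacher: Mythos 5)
Your proposal follows essentially the same route as the paper's proof: verify conditions C.1--C.3 of Lemma~\ref{lem:ipm_running_time} using the slack-stream oracle (the paper's Lemma~\ref{lem:output_slack_stream}) and the Hessian decomposition of Lemma~\ref{lem:hessian_sdddm}, invoke Lemma~\ref{lem:ipm_running_time} for the two IPM calls, and charge the final tight-constraint extraction to one matrix-vector pass plus $|S|$ output space via Lemma~\ref{lem:lemma_43_ls14}. Your extra care about storing $c_{\init}$, $c_3$, and $r$ persistently across passes is a correct (and slightly more explicit) elaboration of what the paper dismisses as local computation.
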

\begin{proof}
In Line~\ref{line:mvc_A_1}, we defined $A_1$ but never explicitly compute and store $A_1$ in memory.

In Line~\ref{line:mvc_L}, we let $L \leftarrow O(\log n)$ as the upper bound on the bit complexity. This cost one pass. Indeed, Lemma~\ref{lem:totally_unimodular} implies that $|\dmax(A_1)|\leq 1$.
So $L$ is the upper bound on the bit complexity.

In Line~\ref{line:mvc_A_2_b_2_c_2} and Line~\ref{line:mvc_A_3_b_3_c_3}, we defined $A_2,b_2,c_2,A_3,b_3,c_3$. This doesn't involve computation.

In Lines~\ref{line:mvc_x_init}, \ref{line:mvc_t_init}, \ref{line:mvc_c_init}, \ref{line:mvc_epsilon_Phi}, \ref{line:mvc_t_1}, \ref{line:mvc_t_2}, they are done locally in memory.

In the rest of the proof, in Line~\ref{line:mvc_x_tmp} and Line~\ref{line:mvc_x}, we will calculate the pass complexity and space usage for calling IPM in graph input setting.

In Algorithm~\ref{alg:ipm}, we let 
\begin{align*}
\delta_x=\delta_x' = H(x)^{-1}\cdot \nabla f_t(x)
\end{align*} 
and 
\begin{align*} 
\wt{\delta}_x = \textsc{StreamLS}(H(x),\nabla f_t(x),10^{-5}).
\end{align*}

First, by Definition~\ref{def:perturbed_objective_function}, 
\begin{align*} 
\nabla f_t(x) = c\cdot t - \sum_{i\in [m]} a_i / s_i(x).
\end{align*}
By assumption, using $f(n)$ space, when given $i$, we can output $b_i$. Thus $\nabla f_t(x)$ can be computed exactly in one pass and $n+f(n)$ space.

Second, by Lemma~\ref{lem:hessian_sdddm}, the Hessian $H(x)\in \R^{n\times n}$  can be decomposed into 
\begin{align*} 
H(x) = L_G(x) + D(x),
\end{align*}
where $L_G(x) \in \R^{n\times n}$ is the Laplacian of some graph $G$ with edge weight $w_i=s_i(x)^{-2}$ and $D(x)\in \R^{n\times n}$ is diagonal matrix where
\begin{align*} 
D(x)_{i,i} = s_{i+m}(x)^{-2} + s_{i+m+n}^{-2}.
\end{align*}

Therefore, we are able to first read $D$ in one pass and store its entries in space $O(n)$, and then apply Theorem~\ref{thm:stream_lsolve} 
to show $\wt{\delta}_x$ can be computed in $\wt{O}(1)$ pass and $\wt{O}(n)$ space, with 
\begin{align*} 
\|\wt{\delta}_x - \delta_x\|_{H(x)} \leq \epsilon_x = 10^{-5}
\end{align*}

Since there are totally $T=O(\sqrt{m}\log(m/\epsilon_{\ipm}))$ iterations, the totally number of passes used is $\wt{O}(\sqrt{m})$, the space usage is $\wt{O}(n) + f(n)$.

And outputting set $S$ requires $|S|$ space.

\end{proof}

\subsection{Building blocks}\label{sec:building_blocks}

Next, we show that the Hessian matrix is an SDDM matrix.
\begin{lemma}[Hessian is SDDM matrix] \label{lem:hessian_sdddm}
Let the graph $G=(V,E)$ be the input of Algorithm~\ref{alg:minimum_vertex_cover}.
Let $n=|V|$, $m=|E|$.
Let $A_3\in \R^{(m+n)\times n}$, $b_3\in \R^{m+n}$ be defined as in Line~\ref{line:mvc_A_3_b_3_c_3} (Algorithm~\ref{alg:minimum_vertex_cover}).

Let $x\in \R^{n}$ be any feasible point, i.e. $A_3 x > b_3$. Let $s(x)\in \R^{m+n}$ and $H(x) \in \R^{n\times n}$ be the slack and hessian defined w.r.t. $A_3$ and $b_3$. Then $H(x) \in \R^{n\times n}$ can be written as
\begin{align*}
H(x) = L(x) + D(x),
\end{align*}
where $L(x) \in \R^{n\times n}$ is the Laplacian matrix of graph $G$ with edge weight $\{s_1(x)^{-2},\cdots,s_m(x)^{-2}\}$, $D(x) \in \R^{n\times n}$ is the diagonal matrix with each diagonal entry $D(x)_{i,i} = s_{i+m}(x)^{-2}$.
\end{lemma}
\begin{proof}
Let $A_1\in \R^{m\times n}$ be signed edge-vertex incident matrix of $G$ with direction $V_L$ to $V_R$, then $A_3\cdot x\geq b_3$ can be written as
\begin{align*}
\begin{bmatrix}
A_1\\
I_{V_L} - I_{V_R}
\end{bmatrix}
\cdot 
x \geq
\begin{bmatrix}
b_1\\
0
\end{bmatrix},
\end{align*}
where $I_{V_L}\in \R^{n\times n}$ denotes the diagonal matrix with 
\begin{align*}
    I_{i,i} = 
    \begin{cases}
    1, & \mathrm{~if~} i\in V_L;\\
    0, & \mathrm{~otherwise}.
    \end{cases}
\end{align*}
and
$I_{V_R}\in \R^{n\times n}$ denotes the diagonal matrix with 
\begin{align*}
    I_{i,i} = 
    \begin{cases}
    1, & \mathrm{~if~} i\in V_R;\\
    0, & \mathrm{~otherwise}.
    \end{cases}
\end{align*}

Denote $S\in \R^{m\times m}$ the diagonal matrix with each diagonal entry $S_{i,i}:= s_i(x)$.
Thus $H(x)$ can be written as
\begin{align*}
    H(x) 
    = & ~ A_1^{\top}S^{-2}A_1 + \sum_{i=m+1}^{m+n} e_ie_i^{\top} \cdot s_i(x)^{-2}\\
    = & ~ L(x) + D(x),
\end{align*}
where the first step is by definition of $H(x)$ (Definition~\ref{def:log_barrier_function}), the second step is by the definition of $L(x)$ and $D(x)$.
\end{proof}

\subsection{A minimum vertex cover solver}\label{sec:minimum_vertex_cover_solver}
Note that Algorithm~\ref{alg:minimum_vertex_cover} is actually a high-accuracy fractional minimum vertex cover solver for general graph (not necessarily bipartite graph), since we do not use the bipartite property of matrix $A$ in IPM.
\begin{theorem}\label{thm:frac_minimum_vertex_cover}
    Let $G$ be a graph with $n$ vertices and $m$ edges. Consider the \emph{fractional minimum vertex cover} problem Eq.~\eqref{eq:dual} in which every edge $e$ needs to be covered at least $b_e$ times. 
    Let $x^*\in \R^n$ be the optimal solution of Eq.~\eqref{eq:dual}. There exists a streaming algorithm (Algorithm~\ref{alg:minimum_vertex_cover}) such that for any $\delta > 0$, it outputs a feasible vertex cover $x\in \R^n$ such that
    \[
        \mathbf{1}_n^{\top} x \leq \mathbf{1}_n^{\top}x^* + \delta
    \]
    in $\wt{O}(\sqrt{m})\cdot \log (1/\delta)$ passes and $\wt{O}(n) \cdot \log (1/\delta)$ space with probability $1-1/\poly(n)$.
\end{theorem}
\begin{proof}
    The proof follows directly from the proof of Lemma~\ref{lem:correct_minimum_vertex_cover} and Lemma~\ref{lem:time_minimum_vertex_cover}.
\end{proof}

As a byproduct, we obtain a fast semi-streaming algorithm for (exact, integral) minimum vertex cover in bipartite graph.

\begin{theorem}\label{thm:minimum_vertex_cover}
    Given a bipartite graph $G$ with $n$ vertices and $m$ edges, there exists a streaming algorithm that computes a minimum vertex cover of $G$ in $\wt{O}(\sqrt{m})$ passes and $\wt{O}(n)$ space with probability $1-1/\poly(n)$.
\end{theorem}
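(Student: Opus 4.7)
The plan is to reduce the (integral) minimum vertex cover problem to the fractional version already handled by Algorithm~\ref{alg:minimum_vertex_cover}, and then use the total unimodularity of the bipartite edge--vertex incidence matrix to round the fractional solution to the unique integral optimum. The heavy lifting---running IPM in the streaming model---is done by the lemmas in Sections~\ref{sec:ipm_algorithm}--\ref{sec:minimum_vertex_cover}; the present theorem adds only a setup and a rounding step on top of Theorem~\ref{thm:frac_minimum_vertex_cover}.

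First I would cast standard minimum vertex cover into the form consumed by Algorithm~\ref{alg:minimum_vertex_cover}. Let $A_1$ be the signed incidence matrix from Line~\ref{line:mvc_A_1}, so that the row for an edge $e=(u,v)$ with $u \in V_L, v \in V_R$ reads $x_u - x_v \geq 1$. Appending $x_{V_L} \geq 0$ and $x_{V_R} \leq 0$ (as in Line~\ref{line:mvc_A_2_b_2_c_2}) and setting $c = \mathbf{1}_{V_L} - \mathbf{1}_{V_R}$, $b = \mathbf{1}_m$, the change of variables $y_v = x_v$ for $v \in V_L$ and $y_v = -x_v$ for $v \in V_R$ recovers the classical covering LP $\min \mathbf{1}_n^\top y$ subject to $y \geq \mathbf{0}_n$ and $y_u + y_v \geq 1$ for each edge. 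Since $b \equiv \mathbf{1}_m$ can be answered by a trivial $O(1)$-space oracle, the hypothesis of Lemma~\ref{lem:time_minimum_vertex_cover} holds with $f(n)=O(1)$; the generalized coverage setting of the footnote is handled identically using an oracle that reads $b_e$ off the stream.

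Next I would invoke Theorem~\ref{thm:frac_minimum_vertex_cover} at accuracy small enough to meet the threshold $n^{-1} 2^{-3L-2}$ required by Lemma~\ref{lem:lemma_43_ls14}, where $L = O(\log n)$ since Lemma~\ref{lem:totally_unimodular} gives $\dmax(A_1) = 1$. By Lemma~\ref{lem:lemma_43_ls14} this returns a feasible $x$ with $\|x - x^*\|_\infty \leq 1/n$, where $x^*$ is the unique optimum of the isolated LP; by total unimodularity (Lemma~\ref{lem:totally_unimodular}) every extreme point of the original unperturbed LP is integral, and Lemma~\ref{lem:lemma_43_ls14} guarantees that $x^*$ coincides with one such integral extreme point. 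I would then round coordinate-wise by setting $\widehat x_v = 1$ for $v \in V_L$ with $x_v > 1/2$, $\widehat x_v = -1$ for $v \in V_R$ with $x_v < -1/2$, and $\widehat x_v = 0$ otherwise, and output $V = \{v \in V_L : \widehat x_v = 1\} \cup \{v \in V_R : \widehat x_v = -1\}$. The $\ell_\infty$ bound $1/n \leq 1/2$ ensures $\widehat x = x^*$ exactly, and after the change of variables this is a minimum vertex cover of the original bipartite graph. The pass count $\wt{O}(\sqrt{m})$ and space $\wt{O}(n)$ follow directly from Lemma~\ref{lem:time_minimum_vertex_cover}, and a union bound over the IPM, SDD solver, spectral sparsifier, and isolating perturbation yields success probability $1 - 1/\poly(n)$ (with a constant boost via $O(\log n)$ repetitions if needed, since the isolation step succeeds with probability only $1/4$).

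The main obstacle I expect is less conceptual than bookkeeping: verifying that Lemma~\ref{lem:lemma_43_ls14} leaves the feasible region intact (it only perturbs the objective $c \mapsto 2^{2L+3} n \cdot c + r$, so $A_1, b_1$ are untouched and the rounded $V$ still covers the original edge constraints), and ensuring that solving the perturbed LP does not destroy integrality---this is precisely what total unimodularity buys us, since the perturbation affects only the objective and hence only the choice among (integral) extreme points. A secondary subtlety is that Algorithm~\ref{alg:minimum_vertex_cover} as written returns the set $S \cap [m]$ of tight edge constraints rather than the cover itself; I would replace its final line with the coordinate-wise rounding above, which requires only $\wt O(n)$ extra space and no additional pass.
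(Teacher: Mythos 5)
Your proposal is correct and follows essentially the same route as the paper's (very terse) proof: total unimodularity of the bipartite incidence matrix makes all extreme points integral, Lemma~\ref{lem:lemma_43_ls14} isolates a unique integral optimum of the perturbed LP, the IPM of Algorithm~\ref{alg:minimum_vertex_cover} produces an $\ell_\infty$-close solution, and coordinate-wise rounding recovers the exact minimum vertex cover within the pass and space bounds of Lemma~\ref{lem:time_minimum_vertex_cover}. Your additional bookkeeping (the sign change of variables, the trivial oracle for $b=\mathbf{1}_m$, and the $O(\log n)$ repetitions with feasibility checks to amplify the constant success probability of the perturbation) fills in details the paper leaves implicit, and is all sound.
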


\begin{proof}
    Because $G$ is bipartite, by Theorem~\ref{lem:totally_unimodular}, all extreme points of the polytope of LP~\eqref{eq:dual} are integral. 
    Call Algorithm~\ref{alg:minimum_vertex_cover} to solve the perturbed LP in Lemma~\ref{lem:lemma_43_ls14}.
    Since LP~\eqref{eq:dual} is feasible and bounded, the high-accuracy solution obtained from IPM can be rounded to the optimal integral solution, which in total takes $\wt{O}(\sqrt{m})$ passes and $\wt{O}(n)$ space with probability $1 - 1 / \poly(n)$.
\end{proof} 

\section{Maximum weight bipartite matching}\label{sec:combine}
In this section, we give our main algorithm that combines all previous subroutines to give our final Theorem~\ref{thm:main_theorem}. The proof consists of the correctness part and the pass complexity part. We give a roadmap of this section as follows. In Section~\ref{sec:algorithm_main}, we present the main algorithm. In Section~\ref{sec:running_time_main}, we prove the pass complexity. The correctness is a bit complex. In Section~\ref{sec:combine_primal_to_dual}, we give preliminary knowledge on primal-to-dual transformation, and then in section~\ref{sec:properties_of_lp_solution}, combined with isolation lemma, we show how to isolate a maximum weight matching from an optimal solution of the dual. In Section~\ref{sec:correctness_main}, we conclude the correctness part.

\begin{theorem}[Main theorem, formal version of Theorem~\ref{thm:main_theorem_in_intro}]\label{thm:main_theorem}
    Given a bipartite graph $G$ with $n$ vertices and $m$ edges, there exists a streaming algorithm that computes a maximum weighted matching of $G$ in $\wt{O}(\sqrt{m})$ passes and $\wt{O}(n)$ space with probability $1 - 1 / \poly(n)$.
\end{theorem}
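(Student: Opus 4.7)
The plan is to reduce the max-weight bipartite matching problem to (i) solving the dual LP (the generalized fractional minimum vertex cover) via Theorem~\ref{thm:frac_minimum_vertex_cover} and then (ii) applying a clean-up step that exploits complementary slackness to sparsify the graph to at most $n$ candidate edges. First I would invoke the streaming isolation oracle $\mathcal{I}$ from Lemma~\ref{lem:isolation_main_lemma} with $Z = (n+1)^n$, which stores $\wt{O}(n)$ random bits so that a consistent perturbation $\eta_e \in \{0, 1, \ldots, n^7\}$ can be recovered on every pass. Defining the perturbed weights $w'_e := N \cdot w_e + \eta_e$ for a sufficiently large polynomial $N$, the ordering of matchings by original weight is preserved while the generalized isolation lemma guarantees, with probability at least $1/4$, that the maximum weight matching under $w'$ is unique; amplifying by $O(\log n)$ independent copies of $\mathcal{I}$ boosts this to $1 - 1/\poly(n)$, and any such unique maximizer is also an original maximum weight matching.

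Next I would view the matching LP in primal/dual form
\begin{align*}
\text{(P)}\quad \max_{y \ge \mathbf{0}_m}\, (w')^{\top} y \ \text{ s.t. }\ A^{\top} y \le \mathbf{1}_n,
\qquad
\text{(D)}\quad \min_{x \ge \mathbf{0}_n}\, \mathbf{1}_n^{\top} x \ \text{ s.t. }\ A x \ge w',
\end{align*}
where $A$ is the unsigned edge-vertex incidence matrix. By total unimodularity (Lemma~\ref{lem:totally_unimodular}) both LPs admit integral extreme optima, and the isolation step makes the integral primal optimum $y^{\star}$ unique; this $y^{\star}$ is the indicator vector of a maximum weight matching. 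I would then invoke the streaming generalized minimum vertex cover solver (Theorem~\ref{thm:minimum_vertex_cover} / Theorem~\ref{thm:frac_minimum_vertex_cover}) on input $(G, w')$, accessing $w'_e$ through $\mathcal{I}$, to obtain an integral optimal dual solution $x^{\star}$ in $\wt{O}(\sqrt{m})$ passes and $\wt{O}(n)$ space.

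With $x^{\star}$ in memory, one additional pass reads each edge $e=(u,v)$, queries $\eta_e$, recomputes $w'_e$, and retains $e$ in a set $S$ if and only if $x^{\star}_u + x^{\star}_v = w'_e$. Complementary slackness forces $y^{\star}_e = 0$ whenever $(A x^{\star})_e > w'_e$, so $\supp(y^{\star}) \subseteq S$. I would then run any classical in-memory maximum weight matching algorithm on the subgraph $(V, S)$ with the original weights $w$ to produce $M^{\star}$. The total pass count is $\wt{O}(\sqrt{m})$ from the IPM plus $O(1)$ for setup and clean-up, and the working space stays $\wt{O}(n)$.

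The main obstacle is showing $|S| \le n$, which is exactly what makes the clean-up step fit into $\wt{O}(n)$ space. The argument leverages the uniqueness of $y^{\star}$: fixing the dual optimum $x^{\star}$, complementary slackness says that the unique primal optimum must satisfy the linear system $y_e = 0$ for $e \notin S$ together with $(A^{\top} y)_v = 1$ for $v \in \supp(x^{\star})$. This system has at most $(m - |S|) + |\supp(x^{\star})|$ equations in $m$ unknowns, and uniqueness forces its rank to equal $m$, which yields $|S| \le |\supp(x^{\star})| \le n$. The remaining verifications---that the scaling $N$ preserves the original weight ordering, that the perturbed dual LP has polynomially bounded bit complexity so Lemma~\ref{lem:correct_minimum_vertex_cover} applies, and that $\mathcal{I}$ returns identical values across passes---are routine given the machinery assembled in Sections~\ref{sec:isolation_lemma}--\ref{sec:minimum_vertex_cover}.
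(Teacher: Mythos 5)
Your proposal is correct and follows essentially the same route as the paper: perturb the edge weights via the streaming isolation oracle to make the primal optimum unique, solve the dual (generalized minimum vertex cover) with the streaming IPM, use complementary slackness plus the uniqueness/rank argument to show the tight constraints form a set of at most $n$ edges containing the optimal matching, solve in memory, and amplify over $O(\log n)$ independent trials. The paper packages the tight-constraint extraction inside \textsc{MinimumVertexCover} via Lemma~\ref{lem:lemma_43_ls14} rather than as a separate clean-up pass, but this is a presentational difference only.
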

\begin{proof}
    By combining Lemma~\ref{lem:running_time_main} and Lemma~\ref{lem:correctness_main}.
\end{proof}

\ifdefined\debug
\begin{table}[ht]
    \centering
    \begin{tabular}{|l|l|l|l|}
        \hline
        {\bf Statement} & {\bf Section} & {\bf Where we use} & {\bf Comments} \\
        \hline
        Theorem~\ref{thm:main_theorem} & \ref{sec:combine} & / & Main theorem of this section\\ 
        \hline
        Lemma~\ref{lem:running_time_main} & \ref{sec:running_time_main} & Theorem~\ref{thm:main_theorem} & Pass complexity \\
        \hline
        Lemma~\ref{lem:correctness_main} & \ref{sec:correctness_main} & Theorem~\ref{thm:main_theorem} & Correctness \\
        \hline
        Theorem~\ref{thm:strong_duality_cs} & \ref{sec:combine_primal_to_dual} & Lemma~\ref{lem:properties_of_lp_solution} & Preliminary on proof of Lemma~\ref{lem:correctness_main} \\
        \hline
        Lemma~\ref{lem:properties_of_lp_solution} & \ref{sec:properties_of_lp_solution} & Lemma~\ref{lem:correctness_main} & Properties of primal and dual solutions\\
        \hline
    \end{tabular}
    \caption{The structure of Section~\ref{sec:combine}.}
    \label{tab:my_label}
\end{table}
\fi

\subsection{Algorithms}\label{sec:algorithm_main}

Algorithm~\ref{alg:main} is our main algorithm. The outer loop uses $O(\log n)$ calls to boost the success probability to $1-1/\poly(n)$. In each call, it first prepares the isolation oracle from Section~\ref{sec:isolation_lemma} and then passes it to the minimum vertex cover solver from Section~\ref{sec:minimum_vertex_cover}. The solver will return an answer successfully with probability at least $1/2$. We choose the maximum weighted matching over all $O(\log n)$ tries as the final answer.

\begin{algorithm}[ht]
\caption{
We present our main algorithm. It takes vertices $V_L,V_E$, edges $E$ and weights $w$ as input, and eventually it will output a maximum weighted matching $M$.}
\label{alg:main}
\begin{algorithmic}[1]
\Procedure{$\textsc{Main}$}{$G=(V_L,V_R,E,w)$} \Comment{Theorem~\ref{thm:main_theorem}}
\State \Comment{$w$ denotes edge weights that are integers.}
\State $n\leftarrow |V_L| + |V_R|$,~$m\leftarrow |E|$
\State $M \leftarrow \emptyset$ \Comment{$M$ is maximum matching}
\For{$i = 1 \to O(\log n)$} \label{line:main_for_loop}
    \State $\ov{b}\leftarrow \textsc{Isolation}(m,n^n)$ \Comment{$\ov{b}\in \Z^m$, Algorithm~\ref{alg:isolation_lemma}, Lemma~\ref{lem:isolation_lemma}} \label{line:main_ov_b}
    \State $b\leftarrow \ov{b} + n^{10}\cdot w$\label{line:main_b}
    \State $c\leftarrow \mathbf{1}_{V_L} - \mathbf{1}_{V_R}$ \label{line:main_c}
    \State $S \leftarrow \textsc{MinimumVertexCover}(G,b,c,n,m)$  \Comment{$S \subseteq[m]$, Algorithm~\ref{alg:minimum_vertex_cover}} \label{line:main_S}
    \If{$|S|\leq n$} \label{line:main_if_S_leq_n} 
        \State Let $M'$ be maximum weight matching found in edge set $S$ \label{line:main_find_matching}
        \If{$w(M') > w(M)$} \Comment{Update maximum weight matching}
            \State $M\leftarrow M'$
        \EndIf
    \EndIf
\EndFor
\State \Return $M$
\EndProcedure
\end{algorithmic}
\end{algorithm}

\subsection{Pass complexity}\label{sec:running_time_main}
The goal of this section is to bound the pass number (Lemma~\ref{lem:running_time_main}).

\begin{lemma}[Pass complexity]\label{lem:running_time_main}
    Given a bipartite graph with $n$ vertices and $m$ edges, Algorithm~\ref{alg:main} can be implemented such that it runs in $\wt{O}(\sqrt{m})$ passes in streaming model in $\wt{O}(n)$ space.
\end{lemma}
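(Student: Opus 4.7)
The plan is to trace Algorithm~\ref{alg:main} line-by-line and invoke the previously established resource bounds, then observe that the outer \textbf{for} loop multiplies the per-iteration cost by only an $O(\log n)$ factor that is absorbed in the $\wt{O}(\cdot)$.

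First I would handle Line~\ref{line:main_ov_b}. Since $|\mathcal{F}| \le (n+1)^n$ for the family of bipartite matchings on $n$ vertices, setting $Z = n^n$, Lemma~\ref{lem:isolation_main_lemma} supplies an oracle using $O(\log(n^n) + \log n) = \wt{O}(n)$ space that on query $i \in [m]$ returns $\ov{b}_i$. Next, for Lines~\ref{line:main_b}--\ref{line:main_S}, the vector $b = \ov{b} + n^{10}\cdot w$ has $m$ entries and cannot be stored. Instead I would build a composite oracle $\mathcal{I}$ that, on query $i$, reads $w_i$ from the edge stream (used via Lemma~\ref{lem:output_slack_stream}, which only queries $b_i$ while scanning edge $e_i$) and adds $n^{10}\cdot w_i$ to the isolation oracle's output $\ov{b}_i$. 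This composite oracle runs in $f(n) = \wt{O}(n)$ space. With this oracle in hand, Lemma~\ref{lem:time_minimum_vertex_cover} guarantees that $\textsc{MinimumVertexCover}$ uses $\wt{O}(n) + f(n) + |S| = \wt{O}(n) + |S|$ space and $\wt{O}(\sqrt{m})$ passes.

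The potential snag is the $|S|$ summand in the space bound, since \textit{a priori} $S \subseteq [m+2n]$ could have size $\Theta(m)$. To avoid this, I would modify the final step of Algorithm~\ref{alg:minimum_vertex_cover} (which applies Lemma~\ref{lem:lemma_43_ls14} via a single matrix-vector multiplication with $A_3$) to stream the resulting tight constraints and count them, aborting and skipping this iteration of the outer loop as soon as the count exceeds $n$. This ensures that whenever we reach Line~\ref{line:main_if_S_leq_n} with $|S| \le n$, $S$ has already been stored using $\wt{O}(n)$ space; otherwise the iteration contributes nothing and we simply proceed. Then Line~\ref{line:main_find_matching} computes a maximum-weight matching on a subgraph of at most $n$ edges, which is done entirely in memory in $\wt{O}(n)$ space by any classical algorithm, without touching the stream.

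Finally, multiplying per-iteration costs by the $O(\log n)$ outer-loop iterations of Line~\ref{line:main_for_loop} yields $O(\log n) \cdot \wt{O}(\sqrt{m}) = \wt{O}(\sqrt{m})$ total passes, while the space is reused across iterations and stays $\wt{O}(n)$. The main obstacle is the bookkeeping described above — making sure that the isolation oracle's queries interleave correctly with the single-pass edge scan required by the slack oracle (Lemma~\ref{lem:output_slack_stream}), and that the $|S|$ term is controlled via the streaming early-abort. Everything else follows by plugging Lemmas~\ref{lem:isolation_main_lemma} and \ref{lem:time_minimum_vertex_cover} together.
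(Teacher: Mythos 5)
Your proposal is correct and follows essentially the same route as the paper's proof: implement $b$ as a composite oracle combining the isolation oracle with the streamed edge weights, invoke Lemma~\ref{lem:time_minimum_vertex_cover}, observe that $S$ need only be stored when $|S|\le n$, and absorb the $O(\log n)$ outer-loop factor into $\wt{O}(\cdot)$. The only nuance you omit is explicitly checking the hypothesis of Lemma~\ref{lem:time_minimum_vertex_cover} that $\|b\|_\infty$ and $\|c\|_\infty$ are polynomially bounded (which follows from $\|\ov{b}\|_\infty\le n^7$ and $\|c\|_\infty=1$), a minor bookkeeping point the paper does verify.
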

\begin{proof}
In Line~\ref{line:main_ov_b}, \ref{line:main_b}, we actually do not explicitly calculate $\ov{b}$ and $b$ and stored them in memory. We instead use an oracle $\mathcal{I}$ stated in Lemma~\ref{lem:isolation_main_lemma} that gives $b_i$ bits by bits. So this step does not cost space.

In Line~\ref{line:main_c}, we calculate and store $c\in \Z^n$ in $\wt{O}(n)$ space.

In Line~\ref{line:main_S}, we call $\textsc{MinimuVertexCover}$. In order to use Lemma~\ref{lem:time_minimum_vertex_cover}, we need to prove the following properties.

1. By Lemma~\ref{lem:isolation_lemma}, $\|\ov{b}\|_{\infty}\leq n^7$ so $\|b\|_{\infty}\leq n^{10}$. And $\|c\|_{\infty} = 1$.  So both $\|b\|_{\infty}$ and $\|c\|_{\infty}$ is polynomially bounded.

2. By Lemma~\ref{lem:isolation_main_lemma}, there is an oracle $\mathcal{I}$ that uses $O(\log(n^n) +\log(m))=\wt{O}(n)$ space that can output $\ov{b}_i$. Since each edge $e_i$ comes with its weight $w_i$ in the stream, we can output $b_i = \ov{b}_i + n^{10}w_i$ when given $i\in[m]$.

3. We can always assume $|S| \leq n$ since otherwise we will never enter Line~\ref{line:main_if_S_leq_n}.

By applying Lemma~\ref{lem:time_minimum_vertex_cover}, this call can be done in $\wt{O}(n)$ space and $\wt{O}(\sqrt{m})$ passes.

In Line~\ref{line:main_find_matching}, since we are finding maximum matching in a graph with $n$ vertices and $n$ edges, we can store them in memory, then check all the $2^n$ possible sets of edges. 
This cost $\wt{O}(n)$ space without any pass. 
We find the set of edges that is a matching and has the maximum weight.

With $O(\log n)$ iterations overhead (Line~\ref{line:main_for_loop}), the number of passes blow up by an $O(\log n)$ factor. So overall, we used $\wt{O}(n)$ space and $\wt{O}(\sqrt{m})$ passes.
\end{proof}

\subsection{Correctness}\label{sec:correctness_main}
The goal of this section is to prove the correctness of our algorithm (Lemma~\ref{lem:correctness_main}).

\begin{lemma}[Correctness]\label{lem:correctness_main}
    Given a bipartite graph $G$ with $n$ vertices and $m$ edges. With probability at least $1-1/\poly(n)$, Algorithm~\ref{alg:main} outputs one maximum matching.
\end{lemma}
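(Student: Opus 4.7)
The plan is to show that each iteration of the loop in Algorithm~\ref{alg:main} finds a true maximum weight matching of $G$ with at least constant probability, so that the $\Theta(\log n)$ independent iterations drive the overall failure probability to $1/\poly(n)$.

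Fix one iteration, and consider the perturbed edge weights $b = \bar{b} + n^{10} w$. Since $\bar{b} \in [0, n^7]^m$ by Lemma~\ref{lem:isolation_lemma} and every matching has at most $n$ edges, for any two matchings $M_1, M_2$ with $w(M_1) - w(M_2) \ge 1$ (recall $w \in \Z$) we have $b(M_1) - b(M_2) \ge n^{10} - n \cdot n^7 > 0$. Hence every max weight matching under $b$ is simultaneously a max weight matching under $w$. Invoking Lemma~\ref{lem:isolation_main_lemma} on the family $\mathcal{F}$ of matchings of $G$, where $|\mathcal{F}| \le (n+1)^n \le n^n = Z$, shows that with probability at least $1/4$ over $\bar{b}$ there is a unique max weight matching $M^\ast$ under $b$, and any such $M^\ast$ is automatically an optimum under $w$.

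The second step recovers $M^\ast$ from the set $S$ returned by \textsc{MinimumVertexCover}. Conditional on the isolation succeeding, Lemma~\ref{lem:correct_minimum_vertex_cover} gives, with probability at least $1/2$ (over the independent random perturbation of $c$ inside Lemma~\ref{lem:lemma_43_ls14}), that $S$ is the set of tight edge constraints at a basic feasible optimal vertex $x^\ast$ of the generalized vertex cover LP
\begin{align*}
    \min_{x \in \R^n}\; c^\top x \quad \text{s.t.}\quad A x \ge b,\; x_v \ge 0\;\text{for }v \in V_L,\; x_v \le 0\;\text{for }v \in V_R,
\end{align*}
whose LP dual (after the substitution $y_v = x_v$ on $V_L$ and $y_v = -x_v$ on $V_R$) is the fractional matching LP $\max_{y \ge 0} b^\top y$ s.t.\ $A^\top y \le \mathbf{1}_n$. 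By integrality of the bipartite matching polytope (Lemma~\ref{lem:totally_unimodular}) and the uniqueness from the first step, this dual's unique optimum is $y^\ast = \mathbf{1}_{M^\ast}$. Complementary slackness applied to the pair $(y^\ast, x^\ast)$ forces $(A x^\ast)_e = b_e$ for every $e \in M^\ast$, i.e., $M^\ast \subseteq S$. Uniqueness of $y^\ast$ also implies non-degeneracy of every optimal dual basis, so $|S| \le n$. Algorithm~\ref{alg:main} therefore enters Line~\ref{line:main_if_S_leq_n}, and since $M^\ast \subseteq S$ is itself a valid matching, the brute-force enumeration of subsets of $S$ in Line~\ref{line:main_find_matching} outputs $M'$ with $w(M') \ge w(M^\ast) = \OPT$.

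Combining the two steps gives a single-iteration success probability of at least $\tfrac{1}{4}\cdot\tfrac{1}{2} = \tfrac{1}{8}$. With independent randomness across the $\Theta(\log n)$ iterations and the global max-over-iterations aggregation in Algorithm~\ref{alg:main}, the overall failure probability falls to $(7/8)^{\Theta(\log n)} = 1/\poly(n)$. The main technical point requiring care is the bound $|S| \le n$: degeneracy of the dual LP could in principle make more than $n$ edge constraints tight, so the proof must verify the classical LP fact that uniqueness of the primal optimum (which the isolation perturbation of $b$ secures) forces every optimal dual basis to be non-degenerate, and therefore that the at most $n$ basic tight edge constraints returned by Lemma~\ref{lem:lemma_43_ls14} already contain $M^\ast$.
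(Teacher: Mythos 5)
Your proof follows the same route as the paper's: isolation of a unique maximum-weight matching $y^*$ under the perturbed weights $b$ (probability $1/4$), recovery of the tight-constraint set $S$ of a dual (vertex cover) optimum via Lemma~\ref{lem:correct_minimum_vertex_cover} (probability $1/2$), complementary slackness to conclude $M^* \subseteq S$, and brute force over $S$; the probability accounting is identical. The one genuine gap is the step $|S| \le n$, which you yourself flag as the main technical point but then dispose of by appeal to ``the classical LP fact that uniqueness of the primal optimum forces every optimal dual basis to be non-degenerate.'' This is not a classical fact, and as stated it is false in general: a unique primal optimum can coexist with degenerate dual vertices (the standard implications run the other way, e.g., a \emph{nondegenerate} primal optimal basis implies a \emph{unique} dual optimum, and the converses fail). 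Moreover, even strict complementarity only guarantees the existence of \emph{some} dual optimum with few tight edge constraints, not that the particular basic optimal solution whose tight set is returned by Lemma~\ref{lem:lemma_43_ls14} has this property.

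What the paper actually proves, and what you need, is the weaker and problem-specific statement of Lemma~\ref{lem:properties_of_lp_solution}, Part 2(b): for any dual optimum $(x^*,s^*)$, at most $n$ of the $m$ edge constraints $Ax \ge b$ are tight. The argument there is by contradiction: if more than $n$ edge slacks vanish, the complementary-slackness system characterizing primal optima, namely $y_i = 0$ for all $i$ with $s^*_i > 0$ together with $A^{\top} y = A^{\top} y^*$, has coefficient matrix of rank strictly less than $m$, so its solution set is an affine subspace of positive dimension through $y^*$; one then argues this subspace meets the feasible region $y \ge \mathbf{0}_m$ in points other than $y^*$, producing a second primal optimum and contradicting the uniqueness secured by the isolation lemma. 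You should replace the appeal to the ``classical fact'' with this (or an equivalent) argument; without it the check $|S| \le n$ in Line~\ref{line:main_if_S_leq_n} is not justified and the algorithm could discard a successful iteration. (A minor slip: the inequality $(n+1)^n \le n^n$ is backwards, though harmless to the argument.)
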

\begin{proof}
Consider the linear programming Eq.~\eqref{eq:primal}. According to part 1 of Lemma~\ref{lem:properties_of_lp_solution}, with probability at least $1/4$, there is a unique solution $y^*$ to Eq.~\eqref{eq:primal}.

By Lemma~\ref{lem:correct_minimum_vertex_cover}, with probability at least $1/2$ the algorithm \textsc{MinimumVertexCover} successfully returns a subset $S$ of tight constraints which corresponds to an optimal solution $x^*,s^*$ on dual problem Eq.~\eqref{eq:dual} (note that the linear programming in Lemma~\ref{lem:correct_minimum_vertex_cover} and Eq.~\eqref{eq:dual} only differ in signs). This means $s^*_i = 0$ if and only if $i\in S$.
According to part 1 and part 2 of Lemma~\ref{lem:properties_of_lp_solution}, $|S| \leq n$ and $y_i^* = 0$ for all $i\notin S$. Therefore, there exists a maximum matching using only edges in $S$, and we will find it in Line~\ref{line:main_find_matching}.

Overall, in each iteration of the for loop (Line~\ref{line:main_for_loop}), with probability at least $(1/2)\cdot (1/4) = 1/8$ we can find a maximum matching. After $O(\log n)$ loops, we can find a maximum matching with probability at least $1-1/\poly(n)$.
\end{proof}

\subsection{Primal to dual}\label{sec:combine_primal_to_dual}
This section provides a preliminary overview for proving the correctness of Lemma~\ref{lem:correctness_main}. We present Theorem~\ref{thm:strong_duality_cs}, which is later used in Lemma~\ref{lem:properties_of_lp_solution}.

\begin{definition}[Maximum weighted matching]\label{def:primal}
Given a bipartite graph $G=(V,E)$ with $|V|=n$ and $|E|=m$.
Let $A \in \{0,1\}^{m \times n}$ be the unsigned edge-vertex incident matrix. Given weight $b \in \Z^m$ on every edge, the maximum weighted matching can be written as the following linear programming:
\begin{align}\label{eq:primal}
{\bf Primal}~~~
\max_{y \in \R^m} & ~ b^{\top} y \\
\mathrm{~s.t.~} & ~ A^{\top}y \leq \mathbf{1}_n \notag \\
& ~ y\geq 0 \notag 
\end{align}
\end{definition}
Its dual form is
\begin{definition}[Fractional minimum vertex cover]\label{def:dual}
Let $A \in \Z^{m \times n}$, $b \in \Z^m$ be defined as in Definition~\ref{def:primal}. The dual form of Eq.~\eqref{eq:primal} is\footnote{The dual LP is a \emph{generalized} version of the minimum vertex cover problem: each edge $i$ needs to be \emph{covered} by at least $b_i$ times, where the case of $b = \mathbf{1}_m$ is the classic minimum vertex cover.}
\begin{align}\label{eq:dual}
{\bf Dual}~~~
\min_{x \in \R^n} & ~ \mathbf{1}_n^{\top}  x\\
\mathrm{s.t.} & ~ A x \ge b \notag \\
 & ~ x\geq 0 \notag
\end{align}
\end{definition}

Next, we show the strong duality from complementary slackness.
\begin{theorem}[Strong duality from complementary slackness \cite{ps98}] \label{thm:strong_duality_cs}
    Let $y \in \R^m$ be a feasible solution to the primal Eq.~\eqref{eq:primal}, and let $x \in \R^n$ be a feasible solution to the dual Eq.~\eqref{eq:dual}.
    Let $s := Ax - b \in \R^m$. 
    Then $x \in \R^n, s \in \R^m, y \in \R^m$ satisfy 
    \begin{equation*}
        y^{\top} s = 0 \text{~~and~~} x^{\top} (\mathbf{1}_n - A^{\top} y) = 0
    \end{equation*}
    {\bf if and only if} $x \in \R^n, s \in \R^m$ is optimal to the dual and $y \in \R^m$ is optimal to the primal.
\end{theorem}

\subsection{Properties of primal and dual LP solutions}\label{sec:properties_of_lp_solution}

We show that with isolation lemma, we can obtain a unique solution to the primal LP, and it highlights a set of $n$ tight constraints of the dual LP.

\begin{lemma}[Properties of LP solutions]\label{lem:properties_of_lp_solution}
    Given a bipartite graph $G$ with $n$ vertices and $m$ edges. Let $w\in \mathbb{N}^m$ be edge weight. Let $\ov{b}\in \mathbb{Z}^{m}$ be the output of $\textsc{Isolation}(m,Z)$ where $Z := n^n$ (Algorithm~\ref{alg:isolation_lemma}). Let $b := \ov{b} + n^{10}\cdot w \in \Z^m$. Let $A \in \{0,1\}^{m \times n}$ be edge-vertex incident matrix of $G$ (unsigned). Consider the linear programming in Eq.~\eqref{eq:primal} and Eq.~\eqref{eq:dual} with parameter $A$ and $b$. With probability at least $1/4$,
    we have 
    \begin{enumerate}
        \item There is a unique solution $y^*\in \R^m$ to the primal LP (Eq.~\eqref{eq:primal}). Furthermore, $y^*\in \{0,1\}^m$, $y^*$ is the maximum candidate matching of $G$;
        \item Let $x^*\in \R^n$, $s^*\in\R^m$ be optimal solution to the dual LP (Eq.\eqref{eq:dual}). 
        Then we have the following properties on $s^*$.
        \begin{enumerate}
            \item For any $i \in [m]$, if $s^*_{i} > 0$ then $y^*_i = 0$;
            \item $\|s^*\|_0 \geq m-n$.
        \end{enumerate} 
    \end{enumerate}
\end{lemma}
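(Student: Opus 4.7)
For Part 1 my plan is to combine total unimodularity of the bipartite incidence matrix (Lemma~\ref{lem:totally_unimodular}) with the isolation lemma (Lemma~\ref{lem:isolation_lemma}). The key observation is the scale separation in $b = \bar b + n^{10} w$: since $|\bar b_i| \le n^7$ and any matching uses at most $n$ edges, we have $|\bar b^{\top} y| \le n^8$, which is dwarfed by the $n^{10}$ granularity of the dominant term $n^{10} w^{\top} y$. Consequently every $b$-maximum matching must be a $w$-maximum matching, and within the family $\mathcal{F}_w$ of $w$-maximum matchings the ranking under $b$ and under $\bar b$ coincides. Since the total number of matchings in a bipartite graph satisfies $|\mathcal{F}| \le (n+1)^n \le n^n = Z$, Lemma~\ref{lem:isolation_lemma} (applied with a sign flip to convert minimization into maximization) yields, with probability at least $1/4$, a unique $b$-maximum matching $y^* \in \mathcal{F}_w$. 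Lemma~\ref{lem:totally_unimodular} then implies that every vertex of the feasibility polytope of Eq.~\eqref{eq:primal} is integral, so the LP-optimal face is the convex hull of integral optimal matchings; since this face is bounded and contains the unique integral maximizer $y^*$, it must coincide with $\{y^*\}$, establishing unique LP optimality with $y^* \in \{0,1\}^m$.

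Part 2(a) is immediate from Theorem~\ref{thm:strong_duality_cs}: complementary slackness yields $y^{*\top} s^* = 0$, and since $y^*, s^* \ge 0$ componentwise, this forces $y^*_i s^*_i = 0$ for every $i$. For Part 2(b) my plan is to invoke strict complementary slackness (the Goldman--Tucker theorem), which guarantees the existence of a primal--dual optimal triple $(y, x, s)$ satisfying $y_i + s_i > 0$ for every $i \in [m]$; because the primal optimum is unique (Part 1), the primal component $y$ is forced to equal $y^*$. Combined with Part 2(a), this yields $s_i = 0 \iff y^*_i = 1 \iff i \in M$, where $M = \supp(y^*)$, so that this particular $s$ satisfies $\|s\|_0 = m - |M| \ge m - n/2 \ge m - n$, because any matching in an $n$-vertex graph has at most $n/2$ edges.

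The main obstacle is to identify the $(x^*, s^*)$ appearing in the lemma statement with the Goldman--Tucker dual produced above. The cleanest resolution is to argue that under uniqueness of the primal (Part 1), every optimal basic feasible dual solution of Eq.~\eqref{eq:dual} is non-degenerate: the $|T|$ active edge constraints (where $T = \{i : s^*_i = 0\}$) together with the $n - |\supp(x^*)|$ active non-negativity constraints must furnish exactly $n$ linearly independent rows, forcing $|T| \le n$ and hence $\|s^*\|_0 \ge m - n$. The algorithm's dual solution, extracted via Lemma~\ref{lem:lemma_43_ls14} as a basic feasible optimum, therefore inherits this bound. Verifying non-degeneracy at this specific algorithmic BFS --- equivalently, pushing the Goldman--Tucker pair through the perturbation introduced by Lemma~\ref{lem:lemma_43_ls14} --- is what I expect to be the technical heart of the argument.
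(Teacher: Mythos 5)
Your Part~1 and Part~2(a) follow essentially the paper's own route (total unimodularity plus the isolation lemma to get a unique integral primal optimum, then complementary slackness for 2(a)), and are fine; your observation that the scale separation $n^{10}w$ versus $\|\ov{b}\|_\infty \le n^7$ confines the isolation to the $w$-maximum matchings is if anything slightly more careful than the paper's phrasing.

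Part~2(b) is where there is a genuine gap, and you have correctly located it yourself, but your proposed fix does not work. Strict complementarity (Goldman--Tucker) only produces \emph{some} optimal dual pair $(x,s)$ with $y^*_i + s_i > 0$ for all $i$; the lemma is stated for an arbitrary optimal dual $(x^*,s^*)$, and in the application it must hold for the particular dual solution extracted from the IPM via Lemma~\ref{lem:lemma_43_ls14}. Your fallback claim --- that every optimal basic feasible dual solution is non-degenerate, so the number of tight edge constraints is at most $n$ --- is not a consequence of being a BFS: a BFS of the dual in $\R^n$ needs only $n$ linearly independent tight constraints but may have arbitrarily many additional (degenerate) tight constraints. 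For the unperturbed vertex-cover LP on a graph with a perfect matching, $x=\mathbf{1}_{V_L}$ is an optimal BFS with all $m$ edge constraints tight; ruling this out after perturbing $b$ is precisely the content of 2(b), so asserting non-degeneracy is circular.

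The paper closes this by running the implication in the opposite direction, deriving 2(b) directly from the uniqueness proved in Part~1. Suppose $\|s^*\|_0 < m-n$. The complementary slackness conditions certifying primal optimality against $(x^*,s^*)$ are the linear system $y_i = 0$ for all $i$ with $s^*_i \neq 0$ (fewer than $m-n$ equations) together with $A^{\top}y = A^{\top}y^*$ ($n$ equations); this system has rank strictly less than $m$, so its solution set is an affine subspace of positive dimension through $y^*$, which (using $y^* \ge \mathbf{0}_m$, $y^* \neq \mathbf{0}_m$) meets $\{y \ge \mathbf{0}_m\}$ in infinitely many feasible points, each primal-optimal by Theorem~\ref{thm:strong_duality_cs} --- contradicting the uniqueness of $y^*$. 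If you keep the Goldman--Tucker framing you would still need a dimension-counting argument of this kind to transfer the conclusion from the strictly complementary dual to an arbitrary one, at which point the detour through strict complementarity buys you nothing.
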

\begin{proof}
{\bf Part 1}

Let the feasible space of $y$ be $S := \{y\in \R^m \mid Ay\leq \mathbf{1}_n, y\geq 0\}$. We implicitly have that $y\leq \mathbf{1}_m$, so $S$ is a bounded region. Let $\ov{S}$ denote all extreme points on $S$.

First, we argue that there is a unique extreme point in $\ov{S}$ which has the optimal solution.

By Lemma~\ref{lem:totally_unimodular}, 
we know all extreme points is integral. Since $\mathbf{0}_m\leq y\leq \mathbf{1}_m$, all extreme points are in $\{0,1\}^m$, which correspond to a matching. Because we set $b := \ov{b} + n^{10}\cdot w$ as our objective vector, we can write 
\begin{align*} 
\langle b , y \rangle= \langle \ov{b}, y \rangle + n^{10} \cdot w^{\top}y.
\end{align*}
Since $\|\ov{b}\|_{\infty} \leq n^7$ by Lemma~\ref{lem:isolation_lemma}, the extreme point who has the optimal objective value must be a maximum weighted matching. Let $\mathcal{F}$ be all possible matchings. We have $|\mathcal{F}| \leq n^n = Z$.

By applying Lemma~\ref{lem:isolation_lemma}, with probability at least $1/4$, we know that there is a unique extreme point $y^*$ in $\ov{S}$ which has the optimal solution.

Because our feasible space $S$ is bounded, all point $y \in S\backslash \ov{S}$ can be written as a linear combination of extreme points on $S$. That is, if we write $\ov{S} = \{y^{(1)},\cdots,y^{(s)}\}$ where $s:=|\ov{S}|$, then all point $y \in S\backslash \ov{S}$ can be written as
\begin{align*}
y = \sum_{i\in [s]} a_i y^{(i)},
\end{align*}
where $0 \leq a_i < 1, \forall i \in [s]$ and $\sum_{i \in[s]}a_i = 1$. Therefore, we have
\begin{align*}
b^{\top} y = \sum_{i\in[s]} a_i \cdot (b^{\top} y^{(i)}) < \max_{i\in[s]} b^{\top} y^{(i)}.
\end{align*}

So $y^*$ is actually the unique optimal solution among all points in $S$.

{\bf Part 2}
Part (a) follows trivially from Theorem~\ref{thm:strong_duality_cs}.

Now we prove Part (b). Assume $\|s^*\|_0 < m-n$. Let $x^* \in \R^n$ be any optimal dual solution that relates to $s^*$, i.e. $Ax^* - b = s^*$. 
We will show that there exist a feasible solution to the primal $y' \in \mathbb{R}^m$ such that $y' \ne y^*$, $\langle y', s^* \rangle = 0$, $\langle x^*, \mathbf{1}_n - A^{\top} y'\rangle = 0$. 
By Theorem~\ref{thm:strong_duality_cs}, $y'$ is also an optimal solution to the primal LP, contradicting with the uniqueness of $y^*$. (In fact, we will prove that such $y'$'s are infinitely many.)

Consider the following linear system 
\begin{align}
    y_i &= 0,~\forall i \in [m] \text{~such that~} s^*_i \ne 0 \label{eq:equality_constraint_zero} \\
    A^{\top} y & = A^{\top} y^* \label{eq:equality_constraint_matrix}.
\end{align}
In constraint Eq.~\eqref{eq:equality_constraint_zero} there are less than $\|s^*\|_0 < m - n$ equalities, while in constraint Eq.~(\ref{eq:equality_constraint_matrix}) there are $n$ equalities. 
So if we write the above system in the matrix form $\wt{A} y = \wt{c},~y \ge \mathbf{0}_m$, it must be that $\textrm{rank}(\wt{A}) \leq \|s\|_0 + n < m$. 
We obtain
$$ y = \wt{A}^{\dag} \wt{c} + (I - \wt{A}^{\dag} \wt{A}) z ,$$
where $z \in \mathbb{R}^m$ is a free variable. 
Since $\text{rank}(\wt{A}^{\top} \wt{A}) \le \text{rank}(\wt{A}) < m$, it must be $I_m - \wt{A}^{\dag} \wt{A} \ne \mathbf{0}_{m \times m}$. 
Observe that $f(z) := \wt{A}^{\dag} \wt{c} + (I - \wt{A}^{\dag} \wt{A}) z$ is an affine function passing through point $y^*$.
Also note that $y^*\geq \mathbf{0}_m$ and $y^* \neq \mathbf{0}_m$ (otherwise $y^* = \mathbf{0}_m$ then we can increase an arbitrary component of $y^*$ to $1$ to increase $b^{\top}y$, contradicting with the optimality). 
As a result, $f(z)$ must pass through infinitely many points in the subspace $y \ge \mathbf{0}_m$.
Let $y'$ be any such solution. By Eq.~\eqref{eq:equality_constraint_zero}, we have 
\[
\langle y', s^* \rangle = 0
\]
By Eq.\eqref{eq:equality_constraint_matrix} and Theorem~\ref{thm:strong_duality_cs}, we have
\[
\langle x^*, \mathbf{1}_n - A^{\top} y'\rangle = \langle x^*, \mathbf{1}_n - A^{\top} y\rangle = 0
\]
By Theorem~\ref{thm:strong_duality_cs}, $y'$ is also an optimal solution, contradicting with the uniqueness of $y^*$.
\end{proof}
 
\appendix
\section*{Appendix}

{\bf Roadmap} We first give a brief summary of prior non-IPM techniques for computing the maximum matching in Section \ref{sec:summary_non_ipm}. Then we show how to reduce the problem of solving an SSD$_0$ system to solving an SDDM system in Section \ref{sec:solver_reduction}. We present the streaming implementation of the isolation lemma in Section \ref{sec:isolation_lemma}. In Section \ref{sec:additional_algorithms}, we provide some small space implementations of various barrier functions.
In Section~\ref{sec:lp_in_small_treewidth}, we provide a more space-efficient algorithm when the linear program has small treewidth.

\section{A brief summary of prior non-IPM techniques} \label{sec:summary_non_ipm}

In this section, we summarize the previous techniques for computing the maximum matching in the streaming model. 

\begin{itemize}
    \item In Section~\ref{sec:previous_techniques_approx}, we introduce some representative approximation algorithms for bipartite matching. 
    \item In Section~\ref{sec:previous_techniques_approx_to_exact}, we present a method to compute an exact bipartite matching, showcasing the current state-of-the-art in the field.
    \item In Section~\ref{sec:previous_techniques_folklore}, we discuss a simple folklore semi-streaming algorithm that uses $O(n \log n)$ passes.
\end{itemize}

\subsection{Approximation algorithms}\label{sec:previous_techniques_approx}

Given a parameter $\eps \in (0, 1)$, many streaming algorithms are to find a matching of size $(1 - \eps)$ times the size of the maximum matching. 
The space and passes usages of these approximation algorithms are increasing functions of $1/\eps$.\footnote{We will be focusing on approximate algorithms that find a matching that is \emph{close to} (or can potentially be used to find) an exact maximum matching, so all the constant-approximate algorithms are not introduced here. We refer the interested readers to \cite{ab19} and the references therein.}

A natural idea to find an approximate matching is to iteratively sample a small subset of edges and use these edges to refine the current matching. These algorithms are called \emph{sampling-based} algorithms. 
In \cite{ag18}, Ahn and Guha show that by adaptively sampling $\wt{O}(n)$ edges in each iteration, one can either obtain a certificate that the sampled edges admit a desirable matching, or these edges can be used to refine the solution of a specific LP. 
The LP is a nonstandard relaxation of the matching problem, and will eventually be used to produce a good approximate matching. 
The algorithm of Ahn and Guha can compute a $(1-\eps)$-approximate matching for weighted (not necessarily bipartite) graph in $\wt{O}(1/\eps)$ passes and $\wt{O}(n \poly(1/\eps))$ space. 
However, the degree of $\poly(1/\eps)$ in the space usage can be very large, making their algorithm inapplicable for small (non-constant) $\eps = o(1 / \log n)$ in the semi-streaming model. 

Finding a $(1-\eps)$-approximate maximum matching with no space dependence on $\eps$ requires different methods. 
Inspired by the well-studied \emph{water filling} process in online algorithms (see \cite{djk13} and the references therein), Kapralov proposes an algorithm that generalizes the water filling process to multiple passes \cite{k13}. This algorithm works in the vertex arrival semi-streaming model, where a vertex and all of its incident edges arrive in the stream together.
The observation is that the water filling from pass $(k-1)$ to pass $k$ follows the same manner as that in the first pass (with a more careful double-counting method), then solving differential equations gives a $(1 - 1 / \sqrt{2 \pi k})$-approximate matching in $k$ passes.

Kapralov's algorithm removes the $\poly(\log n)$ factor in the number of passes comparing to \cite{ag11}, giving a $(1 - \eps)$-approximate maximum matching in $O(1/\eps^2)$ passes, albeit in a stronger vertex arrival model. 
Recently, Assadi, Liu, and Tarjan give a simple semi-streaming algorithm based on auction that computes a $(1 - \eps)$-approximate maximum matching in $O(1/\eps^2)$ passes, removing the vertex arrival condition \cite{alt20}. 
Very recently, \cite{ajj+22} propose an algorithm that computes a $(1-\eps)$-approximate maximum cardinality matching in $O(\eps^{-1}\log n \log\eps^{-1})$ passes and $\wt{O}(n)$ space. 
Their method leverages recent advances in $\ell_1$-regression with several ideas for implementing it in low space, leading to a streaming algorithm with no dependence on $\eps$ in the space usage, and thus improving over \cite{ag18}. 
In the next subsection, we show how they manage to get an $n^{3/4+o(1)}$-pass semi-streaming algorithm using this new approximate algorithm.

\subsection{From approximate to exact maximum matching}\label{sec:previous_techniques_approx_to_exact}

One of the methods to compute an exact maximum cardinality matching is to augment an approximate matching by repeatedly finding augmenting paths.\footnote{Given a matching in a graph, an \emph{augmenting path} is a path that starts and ends at an unmatched vertex, and alternately contains edges that are outside and inside the matching.}
Note that currently there is no semi-streaming algorithm that solves directed graph reachability -- a problem that is no harder than finding one augmenting path -- in $o(\sqrt{n})$ passes \cite{jls19}.
The linear-work parallel algorithm of \cite{jls19} can be translated into a semi-streaming algorithm that finds an augmenting path in $n^{1/2 + o(1)}$ passes.
Under this observation, a followup\footnote{The second arxiv version of \cite{ajj+22} is released ten months after our first released version. The first arxiv version of \cite{ajj+22} does not contain the $n^{3/4+o(1)}$-pass exact result of computing maximum cardinality bipartite matching.} work of our paper by \cite{ajj+22} computes a matching of size at least $\OPT - O(n^{1/4})$ in $\wt{O}(n^{3/4})$ passes (assuming the maximum matching has size $\OPT = \Theta(n)$), then augments this matching to size of $\OPT$ by a streaming version of \cite{jls19} in $n^{3/4 + o(1)}$ passes. 
In the first released version of our paper ($\wt{O}(\sqrt{m})$-pass semi-streaming algorithm for maximum weight bipartite matching), we were unaware of any approximate matching algorithm that is better than \cite{alt20} (recall that the algorithm of \cite{ag18} does not work in semi-streaming when $\eps$ is too small), therefore we merely stated this framework of augmenting an approximate matching to exact by a streaming version of \cite{jls19}. 
The $n^{3/4 + o(1)}$-pass semi-streaming algorithm (\cite{ajj+22}) can only deal with maximum cardinality bipartite matching, while our $\wt{O}(\sqrt{m})$-pass semi-streaming algorithm can solve maximum weight bipartite matching. 
\subsection{A folklore algorithm with \texorpdfstring{$O(n \log n)$}{} passes}\label{sec:previous_techniques_folklore}

A simple folklore algorithm inspired by the classic algorithm of Hopcroft and Karp \cite{hk73} can actually find the exact maximum cardinality bipartite matching in $\wt{O}(n)$ passes using $\wt{O}(n)$ space. 
The main idea is the following. 
Let $\OPT$ be the size of the maximum matching in the given $n$-vertex bipartite graph.
If the current matching has size $i$, then there must exist $(\OPT-i)$ disjoint augmenting paths, so the shortest augmenting path has length at most $n/(\OPT-i)$. 
Using a breath-first search (simply ignore the edge in the stream that is not incident with the frontier of the breath-first search), one can find this path in $n/(\OPT-i)$ passes and augment the current matching. 
Therefore, the total number of passes to compute the perfect matching is at most
$ \sum_{i = 0}^{\OPT-1} \frac{n}{\OPT-i} = n \cdot \sum_{i = 1}^{\OPT} \frac{1}{i} = O(n\log n) . $
This simple algorithm was state-of-the-art before \cite{ajj+22} and this work.

\section{Solver reductions} \label{sec:solver_reduction}

In this section, we reduce the problem of solving an SDD$_0$ system to solving an SDDM system, giving an SDD$_0$ solver in the streaming model, completing Section~\ref{sec:sdd_solver_streaming_model}.

We first reduce the problem of solving an SDD$_0$ system to solving an SDDM$_0$ system by decomposing $A$ into $D + A_{\mathrm{neg}} + A_{\mathrm{pos}}$ in Section \ref{sec:app_sddm0_sdd0}. Next, we reduce the problem of solving an SDDM$_0$ system to solving an SDDM system by approximating the solution of the SDDM$_0$ system using the approximate solution of the corresponding SDDM system in Section \ref{sec:app_sddm_sddm0}.

\subsection{From SDDM\texorpdfstring{$_0$}{} solver to SDD\texorpdfstring{$_0$}{} solver}\label{sec:app_sddm0_sdd0}
We recall Gremban's reduction in \cite{st04} that reduces the problem of solving an SDD$_0$ system to solving an SDDM$_0$ system. 
Let $A$ be an SDD$_0$ matrix, decompose $A$ into $D + A_{\mathrm{neg}} + A_{\mathrm{pos}}$, where $D$ is the diagonal of $A$, $A_{\mathrm{neg}}$ contains all the negative off-diagonal entries of $A$ with the same size, and $A_{\mathrm{pos}}$ contains all the positive off-diagonal entries of $A$ with the same size. 
Consider the following linear system

\begin{align*}
    \wh{A} 
    \begin{bmatrix}
    x_1\\
    x_2
    \end{bmatrix}
    =\wh{b},
\end{align*}
where
\begin{align*}
    \wh{A}=
    \begin{bmatrix}
    D + A_{\mathrm{neg}} & -A_{\mathrm{pos}}\\
    -A_{\mathrm{pos}} & D + A_{\mathrm{neg}}
    \end{bmatrix}
    ~~~~~~
    \text{and}
    ~~~~~~
    \wh{b}=
    \begin{bmatrix}
    b\\
    -b
    \end{bmatrix}
\end{align*}

The matrix $\widehat{A}$ can be (implicitly) computed in the streaming model: in one pass we compute and store the diagonal matrix $D$ by adding the edge weights incident on each vertex; then $\widehat{A}$ is given as a stream of edges (entries) since whenever an edge (an entry in $A$) arrives, we immediately know its position in $\widehat{A}$. 
Note that if $Ax = b$ admits a solution, then $x = (x_1 - x_2) / 2$ is exactly its solution. 
Moreover, if 
$$ \left\| 
    \begin{bmatrix}
    x_1\\
    x_2
    \end{bmatrix} - \wh{A}^{\dag} \wh{b} \right\|_{\wh{A}} \le \epsilon \|\wh{A}^{\dag} \wh{b}\|_{\wh{A}} ,
$$
then $x$ satisfies $\|x - A^{\dag} b\|_A \le \epsilon \|A^{\dag} b\|_{A}$. 
So we obtain an SDD$_0$ solver with asymptotically the same number of passes and space as an SDDM$_0$ sovler. 

\subsection{From SDDM solver to SDDM\texorpdfstring{$_0$}{} solver}\label{sec:app_sddm_sddm0}

In this section, we show that to approximately solve an SDDM$_0$ system $Ay = b$, it suffices to pre-process the input in $O(1)$ passes, approximately solve an SDDM system $\wt{A} y = \wt{b}$ with at most the same size, and possibly do some post-process in $O(1)$ passes. 

If $A$ is positive definite, then we can solve the system by an SDDM solver, so assume not in the following.

From Fact~\ref{fct:sddm_psd} we know that $A$ must be a Laplacian matrix. 
Therefore, it remains to reduce the problem of (approximately) solving a Laplacian system $Ly = b$ to (approximately) solving an SDDM system. 
The following facts are well-known.
\begin{fact} \label{fct:connected}
    Given a Laplacian matrix $L$ corresponding to graph $G$, the following holds:
    \begin{itemize}
        \item $L \succeq 0$;
        \item $L \mathbf{1}_n = 0$ and thus $\lambda_1(L) = 0$;
        \item $G$ is connected iff $\lambda_2(L) > 0$.
    \end{itemize}
\end{fact}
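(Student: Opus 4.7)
\begin{proofsketch}
The plan is to prove the three bullets in order, using the standard factorization of a Laplacian matrix through a signed edge-vertex incidence matrix. Write $L = B^\top W B$, where $B \in \R^{m \times n}$ is the signed edge-vertex incidence matrix of $G$ (with an arbitrary orientation of each edge) and $W \in \R^{m \times m}$ is the diagonal matrix whose $e$-th entry is the weight $w_e \ge 0$. This gives the quadratic form
\[
x^\top L x \;=\; \sum_{e = (u,v) \in E} w_e (x_u - x_v)^2,
\]
which is the workhorse of the proof.

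For the first bullet, I would note that $W \succeq 0$ since all weights are nonnegative, hence for any $x \in \R^n$, $x^\top L x = (Bx)^\top W (Bx) \ge 0$, so $L \succeq 0$. For the second bullet, I would observe that each row of $B$ has exactly one $+1$ and one $-1$, so $B \mathbf{1}_n = \mathbf{0}_m$ and consequently $L \mathbf{1}_n = B^\top W B \mathbf{1}_n = \mathbf{0}_n$; combined with $L \succeq 0$, this shows $0$ is the smallest eigenvalue, i.e., $\lambda_1(L) = 0$.

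The third bullet is the only substantive part. For the forward implication, if $G$ has connected components $C_1, \ldots, C_k$ with $k \ge 2$, then for each $j$ the indicator vector $\mathbf{1}_{C_j}$ satisfies $\sum_{e=(u,v)} w_e(\mathbf{1}_{C_j,u} - \mathbf{1}_{C_j,v})^2 = 0$, because every edge has both endpoints in the same component. Hence $\mathbf{1}_{C_1}$ and $\mathbf{1}_{C_2}$ are two linearly independent vectors in $\ker L$, forcing $\lambda_2(L) = 0$. For the converse, suppose $G$ is connected and $Lv = 0$; then $v^\top L v = \sum_e w_e (v_u - v_v)^2 = 0$, so $v_u = v_v$ for every edge $(u,v) \in E$. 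Since $G$ is connected, this propagates along any path to show $v$ is constant on $V$, i.e., $v \in \mathrm{span}(\mathbf{1}_n)$. Thus $\ker L = \mathrm{span}(\mathbf{1}_n)$ is one-dimensional, and combined with $\lambda_1(L)=0$ this gives $\lambda_2(L) > 0$. No step is really an obstacle; the only point to be careful about is using the assumption $w_e > 0$ (rather than $w_e \ge 0$) on existing edges when propagating equality along paths, which is standard for a weighted graph.
\end{proofsketch}
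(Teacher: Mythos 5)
Your proof is correct and complete; the paper itself offers no proof of this fact (it simply labels it as ``well-known'' in the appendix), and your argument via the factorization $L = B^{\top} W B$, the quadratic form $\sum_e w_e (x_u - x_v)^2$, and the identification of $\ker L$ with the span of component indicator vectors is exactly the standard one. You also correctly flag the only subtle point, namely that propagating $v_u = v_v$ along edges requires the weights of present edges to be strictly positive.
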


Given a Laplacian matrix $L$ as a stream of entries, it is equivalent to treat it as a stream of edges of $G$.
In one pass, we can identify all the connected components of $G$ using $\wt{O}(n)$ space (e.g., by maintaining the spanning forest of $G$). 
Next, any entry in the stream is identified and assigned to the subproblem corresponding to the connected component that contains it.\footnote{The above process is equivalent to partition $L$ into block diagonal matrices, solve each linear system with respect to the submatrices and corresponding entries of $b$, and combine the result.} 
This does not influence the worst-case pass and space complexity, because each subproblem uses space proportional to the size of its connected component and the total number of passes depends on the connected component that takes up the most passes. 
Therefore, we can assume that $G$ is connected, which implies that $\text{rank}(L) = n-1$ by Fact~\ref{fct:connected}.

The goal of approximately solving $Ly = b$ is for given error parameter $\epsilon > 0$, finding an $\epsilon$-approximate solution $x$ satisfying
\begin{equation*} 
    \|x - L^{\dag} b \|_L \le \epsilon \|L^{\dag} b\|_L .
\end{equation*}

If $y$ is an (exact) solution to system $Ly = b$, then $y' := y - y_1 \mathbf{1}_n$ is also a solution, where $y_1$ is the first entry of $y$. 
So we can assume that the first entry of $L^{\dag} b$ is $0$. 
(There might be many solutions, but we fix one with the first entry being $0$.)
Let $\wt{A}$ be the matrix $L$ with the first row and column deleted, and let $\wt{b}$ be the vector $b$ with the first entry deleted. 
Note that $\wt{A} \succ 0$.

Let $\wt{x}$ be an $\epsilon$-approximate solution to the system $\wt{A} x = \wt{b}$, and let $x$ be the vector $\wt{x}$ with $0$ inserted as its first entry. It must be that 
$$\|x - L^{\dag} b\|_L = \|\wt{x} - {\wt{A}}^{-1} \wt{b}\|_{\wt{A}}$$ 
because vector ${\wt{A}}^{-1} \wt{b}$ is the vector $L^{\dag} b$ with the first entry deleted. 

Finally, we have that 
$\|x - L^{\dag} b \|_L \le \epsilon \|L^{\dag} b\|_L$ since 
\begin{equation*}
    \|\wt{x} - {\wt{A}}^{-1} \wt{b} \|_{\wt{A}} \le \epsilon \|{\wt{A}}^{-1} \wt{b}\|_{\wt{A}} \text{~~~and~~~} \|L^{\dag} b\|_{L} = \|{\wt{A}}^{-1} \wt{b}\|_{\wt{A}} ,
\end{equation*}  
which gives an $\epsilon$-approximate solution to the original system $Ly = b$.

\section{Isolation lemma in the streaming model}\label{sec:isolation_lemma}

\cite{crs95} shows how to implement the isolation lemma using a small amount of randomness, and in particular, in our application, the amount of randomness is $O(n\log n)$ and therefore just fits into our memory. However, since their focus is on the number of randomness, they still use extra space that we cannot afford. In this section, we make their algorithm into an oracle. This oracle only stores the random seed and performs exactly the same as the original algorithm, so that we can use it in the streaming model. Formally, our result is stated in  Lemma~\ref{lem:isolation_main_lemma}.

Before the proof, we set up some simple notations.

For a vector $w\in \mathbb{Z}^n$ and a set $S\subseteq[n]$, we denote $w_S:=\sum_{i\in S} w_i$. We can define $w_S:=\sum_{i\in S} w_i$ similarly when $w_i:\mathbb{Z}^t \rightarrow \mathbb{Z}$ is a function.

\begin{lemma}[Streaming implementation of the isolation lemma]\label{lem:isolation_main_lemma}
Let $n,\mathcal{F},Z,w$ be define as in Lemma~\ref{lem:isolation_lemma}. The Algorithm~\ref{alg:isolation_lemma} in Lemma~\ref{lem:isolation_lemma} can be implemented into such an oracle $\mathcal{I}$: $\mathcal{I}$ can output $w_i$ given any $i\in [n]$. Furthermore, $\mathcal{I}$ uses $O(\log (Z) + \log(n))$ space.
\end{lemma}
\begin{proof}
Our $\mathcal{I}$ is the streaming implementation of  Algorithm~\ref{alg:streaming_isolation_lemma}. 
It is easy to see that after running the procedure $\textsc{Initialize}$, the procedure $\textsc{Query}(i)$ will output $w_i$ given any $i\in[n]$.

This oracle stores $m,Z\in \mathbb{N}$ and $r\in \mathbb{N}^t$ in memory. 
Note that $m,Z$ can be stored in $O(\log (Z) + \log (n))$ bits, and $r$ can be stored in 
\begin{align*} 
O(t \cdot \log n) 
= & ~ O(\lceil \log (m) / \log (n) \rceil \cdot \log (n)) \\
= & ~ O(\log (Z) + \log (n))
\end{align*} 
bits. 
And it is easy to see in \textsc{Query}, all computation can be done within $O(\log (Z) + \log (n))$ space.
\end{proof}

The rest of this section is organized as follows: In Section~\ref{sec:isolation_lemma:isolation}, we the generalized isolation lemma. In Section~\ref{sec:isolation_lemma:algorithms}, we provide our streaming algorithm (in fact a data-structure). In Section~\ref{sec:proof_step_1}, Section~\ref{sec:proof_step_2}, Section~\ref{sec:proof_step_3} and Section~\ref{sec:proof_step_4}, we provide the proof details of the uniqueness.

\subsection{Isolation lemma}\label{sec:isolation_lemma:isolation}

We state the generalized isolation lemma from previous work \cite{crs95}.

\begin{lemma}[Generalized isolation lemma \cite{crs95}]\label{lem:isolation_lemma}
Fix $n\in \mathbb{N}$. Fix an unknown family $\mathcal{F} \subseteq [n]$. Let $Z$ denote a positive integer such that $Z\geq |\mathcal{F}|$, there exists an algorithm (Algorithm~\ref{alg:isolation_lemma}) that
uses $O(\log(Z) + \log(n))$ random bits to output a vector $w \in [0,n^7]^n$, such that with probability at least $1/4$, there is a unique set $S^*\in \mathcal{F}$ that has minimum weight $w_{S^*}$.
\end{lemma}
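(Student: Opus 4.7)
My plan is to follow the derandomization strategy of Chari, Rohatgi, and Srinivasan~\cite{crs95}, which replaces the $n$ fully-independent uniform weights from the classical Mulmuley--Vazirani--Vazirani proof by a pseudorandom weight sequence generated from an $O(\log Z + \log n)$-bit seed, while retaining enough independence to support an isolation analysis.

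Concretely, I would sample a short seed $r \in \{0,1\}^{O(\log Z + \log n)}$ uniformly at random and then set $w_i := g_r(i) \in [0, n^7]$, where $g_r$ is a pseudorandom map computable in $O(\log n)$ space---for instance, $g_r(i) = \sum_{j=1}^{t} r_j \cdot i^{j-1} \bmod q$ for a prime $q = \Theta(n^7)$, coefficients $r_1, \dots, r_t \in \mathbb{F}_q$, and $t = \Theta(\lceil \log Z / \log n \rceil + 1)$. The seed length and range requirements of the lemma are then immediate by construction, and since $w_i$ is a deterministic function of the stored seed, the oracle of Lemma~\ref{lem:isolation_main_lemma} can consistently reproduce $w_i$ across streaming passes---this consistency property is exactly what makes the streaming implementation work.

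The correctness argument, which is the main technical step, proceeds via the element-based view of the classical isolation lemma: the minimum-weight set in $\mathcal{F}$ is unique unless some element $i \in [n]$ is \emph{ambiguous}, i.e., unless there exist $S_1, S_2 \in \mathcal{F}$ with $i \in S_1 \triangle S_2$ and $w_{S_1} = w_{S_2}$. The main obstacle is that the naive pairwise union bound over the $\binom{|\mathcal{F}|}{2} \le Z^2$ pairs of sets would force $\Omega(\log Z)$ random bits \emph{per element}, which is too much to afford under the $O(\log Z + \log n)$-bit budget. The CRS argument circumvents this by an elegant counting: each collision $w_{S_1} = w_{S_2}$ is a low-degree polynomial identity in the seed coordinates over $\mathbb{F}_q$, and the number of \emph{distinct} such identities that can involve a fixed element $i$ is tightly controlled by $|\mathcal{F}| \le Z$ rather than $|\mathcal{F}|^2$. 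Combining this with a Schwartz--Zippel bound on each identity yields per-element ambiguity probability at most $1/(2n)$, and a final union bound over the $n$ elements gives a unique minimum with probability at least $1-1/2 = 1/2 \ge 1/4$, completing the proof.
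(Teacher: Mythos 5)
Your high-level framing (replace $n$ independent weights by a short-seed pseudorandom sequence) is the right motivation, but both your concrete construction and your correctness sketch diverge from what \cite{crs95} actually does, and each has a genuine flaw. First, the construction. Setting $w_i = \sum_{j} r_j i^{j-1} \bmod q$ is a $t$-wise independent hash with $t = \Theta(\log Z/\log n)$, and this can fail with probability $1$: two distinct sets $S_1,S_2$ with equal power sums $\sum_{i\in S_1} i^{j} = \sum_{i\in S_2} i^{j}$ for $j=0,\dots,t-1$ (e.g.\ $\{1,4\}$ and $\{2,3\}$ when $t=2$) satisfy $w_{S_1}=w_{S_2}$ for \emph{every} seed, so no isolation is possible. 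The CRS construction is different: it starts from the deterministic weights $w^{(1)}_i = 2^i$ (which make all set-weights distinct), reduces them modulo a \emph{random} $m \in \{1,\dots,(2nZ^2)^2\}$, and proves via a lower bound on the least common multiple of any half of $\{1,\dots,L^2\}$ that with probability $\ge 1/2$ all $\le Z^2$ pairwise differences survive the reduction. This modular-hashing step is where the $O(\log Z)$ random bits go and where the dependence on $Z$ is paid once and for all.

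Second, the isolation step. Your claim that a per-element union bound over ``at most $Z$ distinct collision identities'' with a Schwartz--Zippel bound gives ambiguity $\le 1/(2n)$ per element does not survive a quantitative check: with $q=\Theta(n^7)$ and degree $O(t)$, each identity fails with probability $\Omega(1/\mathrm{poly}(n))$, so $Z$ such events give a bound of order $Z/\mathrm{poly}(n)$, which is vacuous when $Z$ is superpolynomial --- and in this paper's application $Z = n^n$. The actual CRS argument avoids any union bound over $\mathcal{F}$ in the isolation step: after the modular reduction, each $w^{(2)}_i < m \le 4n^2Z^4$ is written in base $n$ as a linear form $w^{(3)}_i(y_0,\dots,y_{t-1})$ in only $t = \lceil \log m/\log n\rceil = O(\log Z/\log n)$ variables, the distinctness of the set-weights $w^{(2)}_S$ transfers to distinctness of the linear forms $w^{(3)}_S$, and the MVV-style ``singular variable'' argument is then run over the $t$ digit-variables $y_j$ (each assigned a fresh uniform value in $[n^5]$, failure probability $1/(2t)$ per variable, union bound over $t$ variables), not over the $n$ elements. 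The two $1/2$ success probabilities multiply to give the stated $1/4$. To repair your proof you would need to adopt this three-stage structure (distinct weights $\to$ random modular reduction with the LCM lemma $\to$ isolation over the base-$n$ digits); the $t$-wise independent polynomial family is not a substitute for it.
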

\begin{proof}
The proof is already done in \cite{crs95}. For the completeness, we rewrite their proof here. 
By Lemma~\ref{lem:proof_step_2}, with probability at least $1/2$, all sets $S\in\mathcal{F}$ has distinct $w^{(2)}_S$. Conditioning on the event, by Lemma~\ref{lem:proof_step_3}, we get that all sets $S\in\mathcal{F}$ has distinct $w^{(3)}_S$. Then by Lemma~\ref{lem:proof_step_4}, we get that with probability at least $1/2$, the output $w$ will give a unique minimum set in $\mathcal{F}$. Since the two events
of success are independent, the final success probability is at least $1/4$.
\end{proof}

\subsection{Algorithms}\label{sec:isolation_lemma:algorithms}

We present an algorithm 
that implements the isolation lemma in the streaming model. We give the original implementation appeared in \cite{crs95} in Algorithm~\ref{alg:isolation_lemma}.
Then in Algorithm~\ref{alg:streaming_isolation_lemma}, we show how to implement the algorithm in the streaming model.

We explain Algorithm~\ref{alg:streaming_isolation_lemma}. Algorithm~\ref{alg:streaming_isolation_lemma} is in fact a data-structure and has three parts. The first part is all the members. The second part is a function for initialization (see \textsc{Initialize}). The third part is a function for query (see \textsc{Query}). The function \textsc{Initialize} is initializing variables $m, t$ and vector $r$. The \textsc{Query} function takes $i \in [n]$ as input and output an integer $w^{(4)}$. We want to remark that this data-structure is a static data-structure. Therefore, it does not need an \textsc{Update} function.

\begin{algorithm}[ht]
\caption{Conceptual implementation of the isolation lemma. Algorithm~\ref{alg:streaming_isolation_lemma} is the implementation of this algorithm in streaming model.}
\label{alg:isolation_lemma}
\begin{algorithmic}[1]
\Procedure{$\textsc{Isolation}$}{$n,Z\in \mathbb{N}$} \Comment{$Z \geq |\mathcal{F}|$, Lemma~\ref{lem:isolation_lemma}}
\State $w_i^{(1)} \leftarrow 2^i$, $\forall i \in[n]$ \label{line:isolation_w_1} \Comment{$w^{(1)} \in \mathbb{N}^n$}
\State Choose $m$ uniformly at random from $\{1,2,\cdots,(2nZ^2)^2\}$.  \Comment{$m \leq 4n^2Z^4$}
\State For each $i$, define $w^{(2)}_i \leftarrow w^{(1)}_i \mod m$. \label{line:isolation_w_2} \Comment{$w^{(2)} \in [m]^n$}
\State $t\leftarrow \lceil \log (m) / \log (n)\rceil$
\For{$i = 1 \to n$}
    \State $\ov{b_{i,t-1},\cdots,b_{i,1},b_{i,0}} \leftarrow w_i^{(2)}$  \Comment{Write $w_i^{(2)}$ in base $n$. $b_{i,j}\in [n]$ are digits.}
    \State \Comment{Note that $t$ is an upper bound on the length}
    \State $w^{(3)}_i(y_0,\cdots,y_{t-1}) \leftarrow \sum_{j=0}^{t-1} b_{i,j}\cdot y_j$ \label{line:isolation_w_3} \Comment{$w^{(3)}_i: \mathbb{Z}^{t}\rightarrow \mathbb{Z}$ is a linear form.}
\EndFor
\State Choose $r_0,\cdots,r_{t-1}$ uniformly and independently at random from $\{1,2,\cdots,n^5\}$.
\State $w^{(4)}_i \leftarrow w^{(3)}_i(r_0,\cdots,r_{t-1})$, $\forall i \in [n]$ \label{line:isolation_w_4}
\State \Return $w^{(4)} \in \mathbb{N}^{n}$.
\EndProcedure
\end{algorithmic}
\end{algorithm}

\begin{algorithm}[ht]
\caption{Streaming implementation of Algorithm~\ref{alg:isolation_lemma}.}
\label{alg:streaming_isolation_lemma}
\begin{algorithmic}[1]
\State {\bf data structure} \Comment{Lemma~\ref{lem:isolation_main_lemma}}
\State {\bf members}
\State \hspace{4mm} $n,t \in \mathbb{N}$ \Comment{$t = O(\log(Z) / \log(n))$}
\State \hspace{4mm} $m,Z \in \mathbb{N}$ \Comment{$m = O(n^2 Z^4)$, $Z \geq |\mathcal{F}|$}
\State \hspace{4mm} $r_0,\cdots,r_{t-1} \in \mathbb{N}$ \Comment{$r_i \leq n^5$}
\State {\bf end members}
\Procedure{Initialize}{$n,Z\in \mathbb{N}$} \Comment{Initialization}
\State Choose $m\in \mathbb{N}$ uniformly at random from $\{1,2,\cdots,(2nZ^2)^2\}$
\State $t\leftarrow \lceil \log m / \log n\rceil$ \Comment{$t\in \mathbb{N}$}
\State Choose $r_0,\cdots,r_{t-1}$ uniformly and independently at random from $\{1,2,\cdots,n^5\}$ \Comment{$r\in \mathbb{N}^t$}
\EndProcedure
\Procedure{Query}{$i\in [n]$}
\State $w^{(1)} \leftarrow 2^i$ \Comment{$w^{(1)} \in \mathbb{N}$, $w^{(1)}\leq 2^n$}
\State $w^{(2)} \leftarrow w^{(1)}_i \mod m$ \Comment{$w^{(2)} \in \mathbb{N}$, $w^{(2)}\leq 2^n$}
\State $\ov{b_{t-1},\cdots,b_{1},b_{0}} \leftarrow w^{(2)}$ \Comment{$b_j \in [n]$, $\forall j\in[t]$}
\State \Comment{Write $w^{(2)}$ in base $n$. Note that $t$ is an upper bound on the length}
\State $w^{(4)} \leftarrow \sum_{j=0}^{t-1} b_j\cdot r_j$
\State \Return $w^{(4)}$ \Comment{$w^{(4)}\leq n^7$}
\EndProcedure
\State {\bf end data structure}
\end{algorithmic}
\end{algorithm}

\newpage
\subsection{Proof of uniqueness: step 1}\label{sec:proof_step_1}

The goal of this section is to prove Lemma~\ref{lem:proof_step_1}.
The following lemma is from \cite{t93}.
\begin{lemma}[Step 1, \cite{t93}]\label{lem:proof_step_1}
Let $L \geq 100$ and let $S$ be any subset of $\{1,\cdots,L^2\}$ such that $|S|\geq \frac{1}{2}L^2$. Then, the least common multiple of the elements in $S$ exceeds $2^L$.
\end{lemma}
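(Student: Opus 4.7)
The plan is to argue by contradiction: suppose $M := \operatorname{lcm}(S) \le 2^L$. For each prime $p$, set $f_p := \max_{s \in S} v_p(s)$; since every $s \le L^2$, we have $p^{f_p} \le L^2$, so every maximal prime-power divisor of $M$ is at most $L^2$, and
\[
\log M \;=\; \psi(L^2) \;-\; X, \qquad X := \sum_{\substack{q = p^k \le L^2 \\ q \,\nmid\, M}} \log p,
\]
where $\psi$ is Chebyshev's function. Since $\psi(L^2) \ge (1-o(1))L^2$, the task reduces to upper bounding the ``missing weight'' $X$ by $\psi(L^2) - L \log 2$.

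Setting $T := \{1,\dots,L^2\} \setminus S$, so $|T| \le L^2/2$, observe that a prime power $q \le L^2$ is missed (i.e.\ $q \nmid M$) exactly when all $\lfloor L^2/q\rfloor$ of its multiples in $\{1,\dots,L^2\}$ lie in $T$. My key combinatorial observation is that any integer $t \le L^2$ has at most one prime factor exceeding $L$ (otherwise $t \ge pq > L^2$); consequently the sets of multiples of distinct primes in $(L, L^2]$ are pairwise disjoint, giving
\[
\sum_{p \in (L, L^2],\ \text{missed}} \lfloor L^2/p \rfloor \;\le\; |T| \;\le\; L^2/2.
\]
Comparing with Mertens' theorem $\sum_{L < p \le L^2} 1/p = \log 2 + o(1)$, I deduce $\sum_{p \in (L,L^2],\ p \mid M} 1/p \ge \log 2 - \tfrac12 - o(1)$, and since $1/p < 1/L$ for these primes, many distinct primes in $(L, L^2]$ must divide $M$. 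A parallel inclusion–exclusion estimate will bound the missed primes $p \le L$ via $\prod_{p \le L,\ \text{missed}}(1 - 1/p) \ge \tfrac12 - o(1)$, so the missed-prime weight below $L$ is also controlled.

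Combining the two regimes with Chebyshev's $\theta(L) = (1-o(1))L$ and the trivial bound $f_p \ge 1$ for each surviving prime, I will conclude $\log M \ge L \log 2 + \Omega(L/\log L)$ for $L \ge 100$, contradicting $M \le 2^L$. The main obstacle will be carrying the constants through at the explicit threshold $L = 100$: the primes-in-$(L,L^2]$ argument alone yields only $\sum_{p \mid M,\ p > L} \log p \gtrsim 0.19\, L \log L$ asymptotically, which comfortably exceeds $L \log 2$ only once $\log L$ is sufficiently large, so the contribution from primes $p \le L$ must also be invoked. The hypothesis $L \ge 100$ is precisely what forces the $o(1)$ error terms in Mertens' and Chebyshev's estimates to be small enough for the combined bound to cross $L \log 2$.
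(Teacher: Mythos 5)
First, note that the paper does not prove this lemma at all: it is imported verbatim from the cited reference, so there is no in-paper argument to compare against, and your proposal has to stand on its own. Its skeleton is sound --- the identity $\log M = \psi(L^2) - X$, the observation that a prime power $q$ is ``missed'' exactly when all $\lfloor L^2/q\rfloor$ of its multiples lie in $T$, and the disjointness of the multiple-sets of distinct primes in $(L,L^2]$ are all correct and are the right ingredients.

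The gap is in the quantitative endgame, and it is real. For the primes $p>L$, passing from $\sum_{\text{missed}}\lfloor L^2/p\rfloor\le L^2/2$ to $\sum_{\text{missed}}1/p\le \tfrac12+o(1)$ hides a correction of $\bigl(\pi(L^2)-\pi(L)\bigr)/L^2$, which at $L=100$ is about $0.12$ --- comparable to your entire margin $\ln 2-\tfrac12\approx 0.19$. The surviving reciprocal sum then drops to roughly $0.06$, so only a handful of primes in $(L,L^2]$ are guaranteed to survive, and bounding each contribution by $\log L$ yields $\log M\gtrsim 30$, far short of the required $L\log 2\approx 69.3$; even a careful version of this count (weighting each survivor by $2\log L-\log(\lfloor L^2/p\rfloor+1)$) tops out around $55$--$66$. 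Your proposed rescue via primes $p\le L$ has two problems. The sieve step is unjustified as stated: na\"ive inclusion--exclusion for integers coprime to the $k$ missed small primes carries an error of order $2^k$, and $k$ can be as large as $\pi(100)=25$, so $2^k\approx 3\times 10^7\gg L^2=10^4$; one cannot conclude $\prod(1-1/p)\ge\tfrac12-o(1)$ this way. And even granting it, the conclusion is too weak: it only forces a set of surviving primes $\le L$ with $\sum 1/p\gtrsim 1.1$, which could be just $\{2,3,5,7\}$ (note $S$ may consist entirely of squarefree numbers, so each surviving prime may contribute only $\log p$ once), adding a mere $\log 210\approx 5.3$ to $\log M$. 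The combined bound therefore does not reach $L\log 2$ at $L=100$. To close the argument you need a sharper device, e.g.\ an exchange (knapsack) argument showing that, since $\log p/\lfloor L^2/p\rfloor$ is increasing in $p$, the missed primes $>L$ are dominated by a top segment $(y,L^2]$ with $\sum_{y<p\le L^2}\lfloor L^2/p\rfloor\le L^2/2$, which forces every prime in $(L,y]$ with $y\approx L^{1.2}$ to divide $M$ and gives $\log M\ge\theta(y)-\theta(L)$; alternatively, bound $|S|$ by the number of integers $n\le L^2$ with $\mathrm{rad}(n)\mid\mathrm{rad}(M)$ and invoke an explicit smooth-number estimate $\Psi(L^2,y)<L^2/2$. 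Either route survives at $L=100$ only with a margin of roughly $15\%$, so explicit Mertens/Chebyshev-type inequalities (not asymptotic $o(1)$'s) are unavoidable.
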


\subsection{Proof of uniqueness: step 2}\label{sec:proof_step_2}

The goal of this section is to prove Lemma~\ref{lem:proof_step_2}.

\begin{lemma}[Step 2] \label{lem:proof_step_2}
With probability at least $1/2$, all sets $S \in \mathcal{F}$ have distinct weights $w^{(2)}_S$.
\end{lemma}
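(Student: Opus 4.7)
The plan is to rewrite the event ``all sets in $\mathcal{F}$ have distinct $w^{(2)}$-weights'' as a divisibility condition involving the random modulus $m$, and then apply Tiwari's Lemma (Lemma~\ref{lem:proof_step_1}) to show that bad $m$'s are rare.

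First, for distinct $S, T \in \mathcal{F}$, I would define $D_{S,T} := w^{(1)}_S - w^{(1)}_T = \sum_{i \in S} 2^i - \sum_{i \in T} 2^i$. Since distinct subsets of $[n]$ produce distinct binary representations as sums of distinct powers of $2$, $D_{S,T}$ is a nonzero integer with $|D_{S,T}| < 2^{n+1}$. Using $w^{(2)}_i = w^{(1)}_i \bmod m$ from Line~\ref{line:isolation_w_2}, we have $w^{(2)}_S \equiv w^{(1)}_S \pmod{m}$, so an integer equality $w^{(2)}_S = w^{(2)}_T$ already forces $m \mid D_{S,T}$. Hence it suffices to bound the probability that $m$ divides at least one $D_{S,T}$.

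Let $L := 2nZ^2$ so that $m$ is drawn uniformly from $\{1, \dots, L^2\}$, and let $B \subseteq \{1, \dots, L^2\}$ be the set of $m$'s that divide some nonzero $D_{S,T}$. I will show $|B| < L^2/2$ by contradiction: if $|B| \ge L^2/2$, then Lemma~\ref{lem:proof_step_1} gives $\mathrm{lcm}(B) > 2^L = 2^{2nZ^2}$. On the other hand, each $m \in B$ divides some $D_{S,T}$, and the number of distinct values $D_{S,T}$ is at most $\binom{|\mathcal{F}|}{2} \le Z^2/2$; hence $\mathrm{lcm}(B)$ divides the LCM of these $D_{S,T}$'s, which is in turn bounded by $\prod_{S \neq T} |D_{S,T}| \le (2^{n+1})^{Z^2/2} = 2^{(n+1)Z^2/2}$. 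Since $(n+1)Z^2/2 < 2nZ^2 = L$ for every $n \ge 1$, this contradicts the Tiwari lower bound, yielding $|B| < L^2/2$ and thus $\Pr_m[m \in B] < 1/2$, which is exactly the claim.

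The main obstacle is compressing the rather weak ``$m$ divides \emph{some} $D_{S,T}$'' information across all $|B|$ bad moduli into a single LCM upper bound that actually beats Tiwari's exponential lower bound; the key observation that makes this work is that the number of distinct differences $D_{S,T}$ is only $O(Z^2)$, even though $|B|$ may in principle be as large as $L^2$. A minor technical point is the hypothesis $L \ge 100$ in Lemma~\ref{lem:proof_step_1}, which holds whenever $nZ^2 \ge 50$; any smaller instances can be handled either by slightly enlarging the sampling range of $m$ by a fixed constant factor or by checking them directly, without affecting the asymptotic space bound.
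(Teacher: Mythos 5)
Your proposal is correct and follows essentially the same route as the paper: apply Tiwari's lemma (Lemma~\ref{lem:proof_step_1}) to the set of bad moduli and contradict the resulting $\mathrm{lcm} > 2^{L}$ lower bound with an upper bound coming from the at most $Z^2$ pairwise differences, each bounded by $2^{n+1}$. The only cosmetic differences are that you bound the lcm of the differences rather than their product $I=\prod_{i<j}|w^{(1)}_{S_i}-w^{(1)}_{S_j}|$ (giving slightly more slack) and that you explicitly note the $L\geq 100$ hypothesis of Tiwari's lemma, which the paper glosses over.
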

\begin{proof}
We write $\mathcal{F} = \{S_1,\cdots,S_k\}$ where $k\leq Z$. Define
\[
I := \prod_{1\leq i < j\leq k} \big| w^{(1)}_{S_i} - w^{(1)}_{S_j} \big|.
\]

First note that $I\neq 0$ since every set $S\in\mathcal{F}$ have distinct weights $w^{(1)}_S$ by definition of $w^{(1)}_i = 2^i$ (Line~\ref{line:isolation_w_1}, Algorithm~\ref{alg:isolation_lemma}).

Next, we give an upper bound on $I$. For each pair of $1\leq i < j \leq k$, $|w^{(1)}_{S_i} - w^{(1)}_{S_j}| \leq 2^{n+1}$. There are totally $Z^2$ pairs, so $I < 2^{2nZ^2}$.

Let $L = 2nZ^2$. Let $S=\{1,\cdots,L^2\}$ be all the possible choices of $m$. 
We have that at least half choices of $m$ satisfies $I \mod m \neq 0$, since otherwise by applying Lemma~\ref{lem:proof_step_1}, $I$ is at least the least common multiplier of half numbers of $S$, i.e., $I\geq 2^L$, contradicting with the upper bound on $I<2^{2nZ^2}$.

Therefore, with probability at least $1/2$ (the randomness is over the choice of $m$), $I \mod m\neq 0$, which means that all sets $S\in \mathcal{F}$ have distinct weight $w^{(2)}_S$ by our definition of $w^{(2)}_i = w^{(1)}_i \mod m$ (Line~\ref{line:isolation_w_2},  Algorithm~\ref{alg:isolation_lemma}).
\end{proof}

\subsection{Proof of uniqueness: step 3}\label{sec:proof_step_3}
The goal of this section is to prove Lemma~\ref{lem:proof_step_3}.
\begin{lemma}[Step 3] \label{lem:proof_step_3}
If $w^{(2)}_S$ are distinct for all $S\in \mathcal{F}$, then the linear form $w^{(3)}_S$ are all distinct for all $S \in \mathcal{F}$.
\end{lemma}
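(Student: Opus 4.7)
The plan is to argue by contrapositive: I will show that if two sets $S, S' \in \mathcal{F}$ yield identical linear forms $w^{(3)}_S = w^{(3)}_{S'}$, then they must already have $w^{(2)}_S = w^{(2)}_{S'}$, contradicting the distinctness hypothesis. The crucial observation is that the map from the vector of coefficients of a linear form back to an integer in base $n$ is injective in the sense that matters here, even when the digit-sums $\sum_{i\in S} b_{i,j}$ exceed $n-1$ (so they are not literal base-$n$ digits of $w^{(2)}_S$).

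Concretely, I would proceed as follows. First, unfold the definitions: by Line~\ref{line:isolation_w_3} of Algorithm~\ref{alg:isolation_lemma},
\begin{align*}
    w^{(3)}_S(y_0,\ldots,y_{t-1}) \;=\; \sum_{i\in S} w^{(3)}_i(y_0,\ldots,y_{t-1}) \;=\; \sum_{j=0}^{t-1}\Bigl(\sum_{i\in S} b_{i,j}\Bigr)\, y_j ,
\end{align*}
so that $w^{(3)}_S$ and $w^{(3)}_{S'}$ agree as linear forms in $y_0,\ldots,y_{t-1}$ if and only if their coefficients agree term-by-term, i.e.
\begin{align*}
    \sum_{i\in S} b_{i,j} \;=\; \sum_{i\in S'} b_{i,j} \qquad \text{for every } j=0,1,\ldots,t-1.
\end{align*}
Assume this equality. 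I would then multiply the $j$-th equation by $n^j$ and sum over $j$; by swapping the order of summation and using the base-$n$ expansion $w^{(2)}_i = \sum_{j=0}^{t-1} b_{i,j}\, n^j$ on each side, both sides collapse into $w^{(2)}_S$ and $w^{(2)}_{S'}$ respectively. This yields $w^{(2)}_S = w^{(2)}_{S'}$, contradicting the hypothesis of the lemma.

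There is no real obstacle here; the only subtlety worth flagging is that the partial sums $\sum_{i\in S} b_{i,j}$ can exceed $n-1$, so the argument is not about uniqueness of base-$n$ representations but rather about the fact that the map $(c_0,\ldots,c_{t-1}) \mapsto \sum_j c_j n^j$ is linear. That linearity is exactly what lets the weighted combination of the coefficient equalities telescope into $w^{(2)}_S = w^{(2)}_{S'}$. Once this is observed, the contrapositive yields the lemma immediately.
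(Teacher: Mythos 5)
Your proof is correct and is essentially the paper's argument: the paper also derives $w^{(2)}_{S_1} = w^{(2)}_{S_2}$ from equality of the linear forms by substituting $y_j = n^j$, which is exactly your step of multiplying the coefficient equalities by $n^j$ and summing. The extra remark about the coefficient sums possibly exceeding $n-1$ is a valid (and worthwhile) clarification, but it does not change the substance of the argument.
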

\begin{proof}
Use proof by contradiction. Suppose there exists two distinct sets $S_1,S_2\in \mathcal{F}$ that $w^{(3)}_{S_1} = w^{(3)}_{S_2}$.
Let $y_i = n^i$, $\forall i\in[t]$. 
We will have 
\[
w^{(2)}_{S_1} = w^{(3)}_{S_1}(y_0,\cdots,y_{t-1}) = w^{(3)}_{S_2}(y_0,\cdots,y_{t-1}) = w^{(2)}_{S_2}\]
by definition of $w^{(3)}_i$ (Line~\ref{line:isolation_w_3}), which is a contradict to our assumption that all $w^{(2)}_S$ are distinct.
\end{proof}

\subsection{Proof of uniqueness: step 4}\label{sec:proof_step_4}
The goal of this section is to prove Lemma~\ref{lem:proof_step_4}.
\begin{lemma}[Step 4] \label{lem:proof_step_4}
Let $\mathcal{C}$ be any collection of distinct linear forms over at most $t$ variables $y = y_0,\cdots,y_{t-1}$ with coefficients in $\{0,1,\cdots,n-1\}$. Choose a random $r = r_0,\cdots,r_{t-1}$ by assigning each $r_i$ uniformly and independently from $[n^2\cdot t]$. Then in the assignment $y=r$ there will be a unique linear form with minimum value, with probability at least $1/2$.
\end{lemma}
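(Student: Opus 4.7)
The plan is to follow the classical Mulmuley--Vazirani--Vazirani isolation argument, adapted to linear forms. For a random assignment $y = r$, call coordinate $j \in \{0,\ldots,t-1\}$ \emph{singular} if there exist two distinct forms $L_1, L_2 \in \mathcal{C}$ which both attain the value $\min_{L \in \mathcal{C}} L(r)$ and whose $j$-th coefficients differ. The first observation is that if the minimizer is not unique, then any two distinct minimizers $L_1 \ne L_2$, being distinct linear forms, must disagree in some coordinate $j^*$; hence $j^*$ is singular. So it is enough to show that with probability at least $1/2$ no coordinate is singular.

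To bound $\Pr[j \text{ is singular}]$ for a fixed $j$, condition on the values $r_{j'}$ for $j' \ne j$. With those fixed, every form $L$ becomes an affine function $f_L(y_j) = \alpha_L + c_{L,j}\, y_j$ in the single variable $y_j$, with slope $c_{L,j} \in \{0,1,\ldots,n-1\}$. The lower envelope $M(y_j) := \min_{L \in \mathcal{C}} f_L(y_j)$ is concave and piecewise linear, and by definition $j$ is singular at $y_j = r_j$ exactly when $r_j$ is a breakpoint of $M$ (where two lines of different slopes are simultaneously tied for the minimum). Since $M$ is concave, its slope is non-increasing as $y_j$ grows, and since the slopes take values in a set of size at most $n$, the envelope has at most $n-1$ breakpoints on the real line. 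Therefore at most $n-1$ values of $y_j \in \{1,\ldots,n^2 t\}$ make $j$ singular, yielding $\Pr[j \text{ is singular} \mid r_{-j}] \le (n-1)/(n^2 t) \le 1/(nt)$ for $n \ge 2$ (the case $n=1$ is trivial: all coefficients vanish, so $|\mathcal{C}| \le 1$).

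A union bound over the at most $t$ coordinates then gives $\Pr[\text{some } j \text{ is singular}] \le t \cdot 1/(nt) = 1/n \le 1/2$, and by the first observation this implies that the minimizer is unique with probability at least $1/2$. The only non-routine ingredient is the concavity-based bound on the number of breakpoints of the lower envelope, which is the step where the coefficient restriction $\{0,\ldots,n-1\}$ is used in an essential way; the rest is definition-chasing and a union bound.
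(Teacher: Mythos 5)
Your proof is correct and follows the same overall route as the paper's: fix a coordinate, condition on the other $r_{j'}$, reduce each form to an affine function of the remaining variable, bound the probability that coordinate is ``singular,'' and finish with a union bound over the $t$ coordinates. The one step you do differently is the count of bad values of $r_j$: the paper groups the forms into $n$ slope classes, keeps the minimum intercept in each, and union-bounds over the $\binom{n}{2}$ pairwise collisions to get at most $\binom{n}{2}$ bad values (hence failure probability $\le 1/2$ after the union bound over coordinates), whereas you observe that the lower envelope of affine functions with at most $n$ distinct slopes is concave and piecewise linear with at most $n-1$ breakpoints, so there are at most $n-1$ bad values, giving the sharper failure probability $1/n$. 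Your version also cleanly handles a small imprecision in the paper's definition of \emph{singular} (which as literally stated does not require the two tied minimizers to differ in the $j$-th coefficient, making the paper's ``if and only if'' with the collision event not quite right as written); requiring the minimizers to disagree in coordinate $j$, as you do, is what makes both the reduction to the single-variable picture and the ``non-unique minimizer implies some coordinate is singular'' direction airtight. Either counting suffices for the stated bound of $1/2$.
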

\begin{proof}
We call a variable $y_i$ to be \textit{singular} under an assignment $r\in \Z^t$ if there exists two minimum linear forms in $\mathcal{C}$ under assignment $r$. Then an assignment $r$ gives unique minimum linear form if and only if no variable $y_i$ is singular under this assignment. We will calculate the probability of $y_i$ being singular under random assignment $r$ and then take union bound over every $y_i$.

For each $y_i$, fix all $r_0,\cdots,r_{i-1},r_{i+1},\cdots,r_{t-1} = a_0,\cdots,a_{i-1},a_{i+1},\cdots,a_{t-1}$ other than $r_i$. Now, every linear form $f$ under this partial assignment can be written as $a_f + b_f \cdot y_i$ with $b_f < n$. We split $\mathcal{C}$ into $n$ classes $\mathcal{C}_0,\cdots,\mathcal{C}_{n-1}$ where $\mathcal{C}_j$ contains all linear forms with $b_f=j$. Let $p_j$ be the minimum $a_f$ among all linear forms in $\mathcal{C}_j$. According to the definition of singular, $y_i$ is singular on assignment $r_i$ if and only if the minimum value in the list 
\[
\{p_0,p_1 + r_i,p_2 + 2r_i,\cdots, p_{n-1} + (n-1) \cdot r_i\}
\]
is not unique, which is upper bounded by the probability that the elements in the list has a collision. 
Since every pair of elements in the list can have at most one choice of $r_i$ such that they are equal, we have
\begin{align*}
    \Pr_{r_i}[y_i \text{~is singular} \mid r_j = a_j,~\forall j\neq i] 
    \leq & ~ \Pr_{r_i}[\exists l,m \text{~s.t.~} p_l + l\cdot r_i = p_m + m\cdot r_i]\\
    \leq & ~ \sum_{l\neq m} \Pr_{r_i}[p_l + l\cdot r_i = p_m + m\cdot r_i]\\
    \leq & ~ \sum_{l\neq m} \frac{1}{n^2t}\\
    \leq & ~ \binom{n}{2}\cdot \frac{1}{n^2t}\\
    \leq & ~ \frac{1}{2t},
\end{align*}
where the first step follows from the definition of singular variable, the second step follows from the definition of union bound, the third step follows from $m \leq \binom{n}{2}$, the fourth step follows from the assigning of $r_i$, and the last step follows from $\binom{n}{2} \leq \frac{n^2}{2}$. 

Finally, by a union bound on the events that every $y_i$ is not singular, the conclusion follows with probability at least $1-t\cdot \frac{1}{2t} = 1/2$.
\end{proof}

\section{Additional algorithms}\label{sec:additional_algorithms}

In the following subsections, we include the small space implementations of various barrier functions in Section~\ref{sec:lp_in_small_space}: 
\begin{itemize}
    \item In Section~\ref{sec:app_log_bar} we introduce the data structure \textsc{LogBarrier}, that approximate the gradient and Hessian of logarithmic barrier function in small space. 
    \item In Section \ref{sec:app_hybrid_bar}, we introduce 
 the data structure \textsc{HybridBarrier}, that approximate the gradient and Hessian of hybrid barrier function in small space.
    \item  In Section \ref{sec:app_lee_bar}, we present the algorithm (Algorithm \ref{alg:lewis_weight_small_space}), that compute the Lewis weight needed for the approximation of the Hessian and the calculation of gradient for Lee-Sidford barrier function.
\end{itemize}
All the data structures and algorithms implemented in this section are subroutines for Algorithm \ref{alg:ipm_stream}.
\subsection{Logarithmic barrier}
\label{sec:app_log_bar}
We start with the Logarithmic barrier function. By definition of $\phi(x) $(Def.~\ref{def:log_barrier_function}), we have
\begin{align*}
    \nabla \phi(x)  & ~ = - \sum_{i \in [m]} \frac{a_i}{s_i(x)} \in \R^{n}; \\
    H(x) = \nabla^2 \phi(x) & ~  = \sum_{i \in [m]} \frac{a_i a_i^{\top}}{s_i(x)^2} \in \R^{n\times n}.
\end{align*}
In procedure \textsc{ApproxGradient}, we first initialize a $0$ vector $f$ and accumulate $-\frac{a_i}{s_i(x)}$ to $f$. After $m$ times accumulation, we get $ \nabla \phi(x)$. In this streaming model, we only need $O(n)$ space to store $f \in \R^n$ and $a_i \in \R^n$.  

Similarly, for the Hessian matrix of the barrier function, by accumulating 
\begin{align*}
   H \gets H + \frac{a_i a_i^{\top}}{s_i(x)^2}
\end{align*}, where $s_i = a_i^{\top}x-b_i$ is the slack variables, we find the exact Hessian matrix. For each iteration, the space is upper bounded by the space needed for $a_i a_i^\top \in \R^{n\times n}$. Hence, the space needed for the calculation of the Hessian is $O(n^2)$. 

Given the Hessian and the gradient of the barrier function, we get $\wt{\delta}_x:= -H(x)^{-1}\nabla f_t(x)$ without any error in overall $O(n^2)$ space. Assumption~\ref{ass:input_of_ipm} holds as the Hessian and the gradient of the logarithmic barrier function can be calculated exactly. By Lemma~\ref{lem:ipm_error}, we show the correctness of our implementation.

\begin{algorithm}[!ht]
\caption{Barrier data structure for logarithmic barrier}
\label{alg:logbar_space}
\begin{algorithmic}[1]
\State {\bf data structure} \textsc{LogBarrier} \Comment{Theorem \ref{thm:nsquare_logarithmic}}
\State 
\Procedure{Init}{$A, b, x$}
\State \Return
\EndProcedure
\State 
\Procedure{ApproxGradient}{$A, b, x, \epsilon_{g},t, c$}
\State \Comment{We compute the exact gradient without using $\epsilon_{g}$, k, c}
\State $f\gets {\bf 0}_n$
\For{$i=1\to m$}
\State $s_i(x)=a_i^\top x-b_i$
\State $f\gets f-\frac{a_i}{s_i(x)}$
\EndFor
\State \Return $f$
\EndProcedure
\State 
\Procedure{ApproxHessian}{$A, b, x, \gamma, t, c$}
\State \Comment{We compute the exact Hessian without using $\gamma$}
\State $H\gets {\bf 0}_{n\times n}$
\For{$i=1\to m$}
\State $s_i(x)=a_i^\top x-b_i$
\State $H\gets H+\frac{a_ia_i^\top}{s_i^2(x)}$
\EndFor
\State \Return $H$
\EndProcedure
\State
\State {\bf end data structure}
\end{algorithmic}
\end{algorithm}
\newpage

\subsection{Hybrid barrier}
\label{sec:app_hybrid_bar}
 Instead of taking only one pass, here we need two passes each time to approximate the Hessian or gradient. In this first pass, we accumulate $\frac{a_ia_i^\top}{s_i^2(x)}$, and get $\nabla^2 \phi(x)$. For the second pass, we  accumulate $(\sigma_i(x) + n/m)\frac{a_i}{s_i(x)}$ to get the exact gradient. For the approximation of Hessian, we calculate $Q(x)$ that defined in definition~\ref{def:Q} by accumulating $\frac{\sigma_i(x)}{s_i(x)^2} a_i a_i^\top$ to $Q(x)$ and approximate $H(x)$ by using $\wt{H}(x) = \big( 5Q(x) + (n/m) \nabla^2 \phi(x) \big)$.

\begin{algorithm}[!ht]
\caption{Barrier data structure for hybrid barrier }
\label{alg:hybridbar_space}
\small
\begin{algorithmic}[1]
\State {\bf data structure} \textsc{HybridBarrier} \Comment{Theorem \ref{thm:nsquare_hybrid} 
}
\State 
\Procedure{Init}{$A, b, x$}
\State \Return
\EndProcedure
\State 
\Procedure{ApproxGradient}{$A, b, x, \epsilon_{g}, t, c$}
\State \Comment{We compute the exact gradient without using $\epsilon_{g}$}
\State $f\gets {\bf 0}_n$
\State $M\gets {\bf 0}_{n\times n}$
\For{$i=1\to m$}
\State $s_i =a_i^\top x-b_i$
\State $M\gets M+\frac{a_ia_i^\top}{s_i^2 }$
\EndFor
\For{$i=1\to m$}
\State $s_i =a_i^\top x-b_i$
\State $\sigma_i  = \frac{ a_i^{\top} M^{-1} a_i }{ s_i ^2 }$
\State $f\gets f+ (\sigma_i  + n/m)\frac{a_i}{s_i }$
\EndFor
\State \Return $tc + f$
\EndProcedure
\State 
\Procedure{ApproxHessian}{$A, b, x, \gamma, t, c$}
\State \Comment{We compute the approximate Hessian with $\gamma=\frac{1}{5}$}
\State $M\gets {\bf 0}_{n\times n}$
\State $Q\gets {\bf 0}_{n\times n}$
\For{$i=1\to m$}
\State $s_i =a_i^\top x-b_i$
\State $M\gets M+\frac{a_ia_i^\top}{s_i^2 }$
\EndFor
\For{$i=1\to m$}
\State $s_i =a_i^\top x-b_i$
\State $\sigma_i  = \frac{ a_i^{\top} M^{-1} a_i }{ s_i ^2 }$
\State $Q \gets Q +\frac{\sigma_i }{s_i ^2} a_i a_i^\top$
\EndFor
\State $ H =  5Q + (n/m) M $ 
\State \Return $H$ \Comment{The $\gamma$-approximation of Hessian}
\EndProcedure
\State
\State {\bf end data structure}
\end{algorithmic}
\end{algorithm}
\newpage

\subsection{Lee-Sidford barrier and Lewis weights}
\label{sec:app_lee_bar}

\begin{algorithm}[!ht]
\caption{Lewis Weight Computation with small space}
\label{alg:lewis_weight_small_space}
\begin{algorithmic}[1]
\Procedure{ComputeLewisWeight}{$A \in \R^{m \times n}, p\in \R, \epsilon>0$} \Comment{Theorem \ref{thm:nsquare_lsbarrier}}
\State Let $\alpha = \frac{2}{p-2}$, $\ov{\alpha}=\max(\alpha,1)$, $\wt{\epsilon} = \frac{\alpha^8\epsilon^4}{(25m(\sqrt{n}+\alpha)(\alpha + \alpha^{-1}))^{4}}$
\State $T = O(\max(\alpha^{-1},\alpha)\log(m/\wt{\epsilon}))$
\State $w_i$ = \textsc{ComputeW}($A,T,i,\frac{1}{3\ov{\alpha}}\cdot \mathbf{1}, \alpha$)
\State \Return $w_i$
\EndProcedure
\State
\Procedure{ComputeQ}{$A\in \R^{m\times n}, t\in \Z_{\geq 0}, \eta, \alpha$} \Comment{Theorem \ref{thm:nsquare_lsbarrier}}
\State $K\gets {\bf 0}_{n\times n}$
\If{$t=0$}
\For{$i=1\to m$}
\State $K\gets K+\frac{n}{m}\cdot a_ia_i^\top$
\EndFor
\Else
\For{$i=1\to m$}
\State $w_i\gets \textsc{ComputeW}(A, t, i , \eta, \alpha)$
\State $K\gets K+w_i\cdot a_ia_i^\top$
\EndFor
\EndIf
\State $Q\gets K^{-1}$
\State \Return $Q$
\EndProcedure
\State
\Procedure{ComputeW}{$A\in \R^{m\times n}, t\in \mathbb{N}, i\in [m], \eta, \alpha$} \Comment{Theorem \ref{thm:nsquare_lsbarrier}}
\State $Q\gets \textsc{ComputeQ}(A, t-1, \eta, \alpha)$
\State $\wt K\gets {\bf 0}_{n\times n}$
\For{$i=1\to m$}
\State $\wt w_i\gets \textsc{RoundI}(A, t, \alpha, Q, i, \eta, \alpha)$
\State $\wt K\gets \wt K+\wt w_i\cdot a_ia_i^\top$
\EndFor
\State $\wt Q\gets \wt K^{-1}$
\State $w_i\gets \textsc{DescentI}(A, t, \alpha, \wt Q, i, \eta, \alpha)$
\State \Return $w_i$
\EndProcedure
\end{algorithmic}
\end{algorithm}

The Lee-Sidford barrier is defined as
\begin{align*}
\psi(x) = \max_{w\in \R^m} \frac{1}{2}f(x,w).
\end{align*}
where $w_x := \arg\max_{w\in \R^m}f(x,w)$ and 
\begin{align*}
f(x,w) := \ln \det ( A_x^{\top} W^{ 1 - 2 / q } A_x ) - ( 1 - 2 / q ) \tr[W].
\end{align*}
The gradient of $\psi(x)$ is $\nabla \psi(x) = - A_x^{\top} w_x $ and approximation of the Hessian matrix is \begin{align*}
\wt{H}(x) = (1+q)A^{\top}S(x)^{-1}W_xS(x)^{-1}A.
\end{align*}
Assume the Lewis weight $w_x$ is given, other matrix computation only need $O(n^2)$ space by the accumulating the outer product as our implementation for the hybrid barrier. So here we only consider the space needed for computing Lewis weight. 
As shown in the proof of Theorem \ref{thm:nsquare_lsbarrier}, we need 
\begin{align*} 
\sigma_i(w^{(t)}) = w^{(t-1)}_i\cdot a_iQ^{(t)}a_i^{\top}.
\end{align*}
to compute $w_i^{t}$ and need 
\begin{align*}
Q^{(t)} = (K^{(t)} )^{-1} = (A^{\top}W^{(t)}A)^{-1}
\end{align*}
to compute $Q^{(t)}$. 

Hence, to compute $w_i^{(t)}$, we need to recursively compute 
\begin{align*} 
Q^{(t-1)}, w_i^{(t-1)}, Q^{(t-2)}, w_i^{(t-2)} ,\cdots, Q^{0}, w_i^{0}.
\end{align*}
For our implementation, procedure \textsc{ComputeLewisWeight} will return one entry for the Lewis weight. Procedure \textsc{ComputeQ} and procedure \textsc{ComputeW} will call each other recursively. The space needed for each iteration $t$ is upper bounded by the space of $Q \in \R^{n \times n}$ and there is $O(\max(\alpha^{-1},\alpha)\log(m/\wt{\epsilon}))$ iterations. Hence, we need to store $Q^{(0)}$ to $Q^{(T)}$ to calculate $w_i^{(T)}$, so the total space is $O(T n^2)$. We know $T = \poly ( \log m )$, so the space we need is $\wt{O} (n^2)$. 
\begin{algorithm}[!ht]
\caption{Subroutine: RoundI, DescentI}
\label{alg:lewis_weight_small_space_subroutine}
\begin{algorithmic}[1]
\Procedure{RoundI}{$A\in \R^{m\times n}, t\in \mathbb{N}, \alpha, Q \in \R^{n\times n}, i\in [m], \eta, \alpha$} \Comment{Theorem \ref{thm:nsquare_lsbarrier}}
\State $w_i\gets \textsc{ComputeW}(A, t-1, i, \eta, \alpha)$
\State $\sigma_i\gets w_i\cdot a_i^\top Q a_i$
\State $\rho_i\gets \frac{\sigma_i}{w_i^{1+\alpha}}$
\If{$\rho_i\geq 1$}
\State $w_i\gets w_i(1+\delta_i)$, where $\delta_i$ solves $\rho_i=(1+\delta_i \sigma_i)(1+\delta_i)^\alpha$
\EndIf
\State \Return $w_i$
\EndProcedure
\State 
\Procedure{DescentI}{$A, t, \alpha, \wt Q, i, \eta, \alpha$} \Comment{Theorem \ref{thm:nsquare_lsbarrier}}
\State $w_i\gets \textsc{RoundI}(A, t, \alpha, Q, i,\eta, \alpha)$
\State $\sigma_i\gets  w_i\cdot a_i^\top \wt Q a_i$
\State $\rho_i\gets \frac{\sigma_i}{ w_i^{1+\alpha}}$
\State $w_i\gets  w_i(1+\eta \cdot \frac{\rho_i-1}{\rho_i+1})$
\State \Return $w_i$
\EndProcedure
\end{algorithmic}
\end{algorithm}

\clearpage
\newpage
\section{Solving small treewidth LP in small space}
\label{sec:lp_in_small_treewidth}

In addition to general linear program, we also study the setting where the LP has small treewidth. Following the formulation of~\cite{dly21}, we define the treewidth as the treewidth of the graph induced by viewing $A\in \R^{m\times n}$ as a generalized incidence matrix. 


The key to implement the IPM in small space is to compute a space-efficient representation of the Gram matrix $A^\top HA$. As this is an $n\times n$ matrix, $O(n^2)$ space is needed as there are $O(n^2)$ parameters. When the constraint matrix $A$ is an incidence matrix for a graph, it is natural to parameterize the graph in terms of its \emph{treewidth} $\tau$.~\cite{dly21} extends this graph notion into linear program, and one of their contributions is to show that if $A$ has treewidth $\tau$, then one can compute a permutation matrix $P\in \R^n$ such that the Cholesky factorization $PA^\top HAP^\top=LL^\top$ is \emph{sparse}, i.e., $L\in \R^{n\times n}$ has column sparsity $\tau$. This motivates us to design a space-efficient algorithm to compute the Cholesky factor $L$, as it is lower triangular, it becomes much easier to solve linear system with respect to $L$. 

Computing $L$ involves two phases: 1).\ compute the permutation matrix $P$ and 2).\ compute the Cholesky factorization. Let us explain these two phases in reverse order: suppose we have already computed the permutation, then there exists a Cholesky factor that has only $O(n\tau)$ parameters, and it can be computed $O(n\tau)$ space. Unfortunately, computing the factorization requires a dependence chain of $\Omega(n)$, this means that one pass over $A$ can compute at most $O(n)$ entries. Since $L$ has $O(n\tau)$ nonzero entries, we need an extra of $O(\tau)$ passes. It is then instructive to compute $P$ in $\wt O(n\tau)$ space. To do so, we utilize the recursive algorithm of~\cite{dly21}. Though the algorithm is recursive in natural, each recursion only needs to store an ordering on a partition of vertices, thus, the total space is only $O(n)$. To make sure the ordering reflects the nonzero pattern of $L$, we then need to compute an approximate balanced separator of the graph. We leverage algorithms of~\cite{bgs21} to obtain an $\wt O(n\tau)$ space and $\wt O(\tau)$ passes implementation. Note that the space used to find such separator can be reused by other recursive calls, as each recursion only needs to store an ordering on a subset of vertices.

Now that we have a sparse Cholesky factor, we can use it to solve lower triangular system in $O(n)$ space via back substitution, enabling us to compute leverage score and Lewis weights in $\wt O(n\tau)$ space. Essentially, the problem of solving IPM in small space boils down to compute a succinct representation for the inverse Gram matrix $(A^\top A)^{-1}$. Treewidth is one of the effective parameters that can reduce the number of parameters for this Gram from $n^2$ to $n\tau$. This implies a more efficient space algorithm as long as $(A^\top A)^{-1}$ admits a compact representation.

Given an LP with treewidth $\tau$, we show that our dual-only IPM can be implemented in $\wt O(n\tau)$ space and $\wt O(\sqrt{n}\tau\log(1/\epsilon))$ passes. We want to stress that this result has significant consequences for graphs with polylogarithmic treewidth: if the graph problem can be solved with the incidence matrix, then it can be solved in $\wt O(n)$ space and $\wt O(\sqrt{n})$ passes for $\tau=\wt O(1)$. 

\begin{theorem}[Small treewidth LP, informal version of Theorem \ref{thm:lsbarrier_treewidth}]
\label{thm:lsbarrier_treewidth_informal}

Given a linear program with $m$ constraints, $n$ variables and treewidth $\tau$ in the streaming model, there exists an algorithm that outputs an $\epsilon$-approximate solution to the dual program in $\wt O(n\tau)$ space and $\wt O(\sqrt{n}\tau\log(1/\epsilon))$ passes.
\end{theorem}

In Section \ref{sec:pre_treewidth}, we introduce some related definitions about treewidth. In Section \ref{sec:space_treewidth}, we present some lemmas to show the space needed in our implementation. In Section \ref{sec:result_treewidth}, we give the main result of our implementation with hybrid barrier and Lee-Sidford barrier. 
We remark that if the LP has polylogarithmic treewidth, then we exhibit a solver that uses $\wt O(\sqrt{n}\log(1/\epsilon))$ passes in semi-streaming model, in which only $\wt O(n)$ space is allowed.
\subsection{Preliminary}
\label{sec:pre_treewidth}

To motivate the discussion the small treewidth linear program, let us consider the regime where $m\ll n^2$. This means that an algorithm uses $\wt O(n^2)$ is too much and one wonders whether it is possible to obtain an $\wt O(m)$ space implementation. We show that given a linear program with treewidth $\tau$, we can use $\wt O(n\tau)$ space and $\wt O(\sqrt{n}\tau\log(1/\epsilon))$ passes.

\begin{figure*}[!ht]
\centering
\subfloat[]{\includegraphics[width=0.3\textwidth]{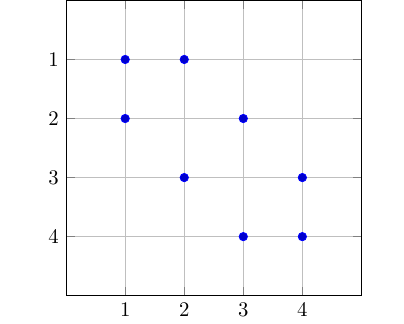}} 
\subfloat[]{\includegraphics[width=0.3\textwidth]{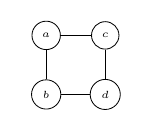}}
\subfloat[]{\includegraphics[width=0.3\textwidth]{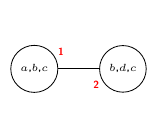}}
\caption{
Example of small treewidth matrix and it's corresponding graph and tree decomposition. (a) A $2$-sparse $4$ by $4$ data matrix $D$, where the blue dots represent the non-zero elements. 
(b) The corresponding graph $G$ of data matrix $D$. By Definition \ref{def:treewidth}, The corresponding graph of matrix $D$ will be a simple cycle of four nodes $a,b,c,d$. Here we assume node $a,b,c,d$ in $G$ represent the columns of $D$ sequentially. (c) The corresponding tree decomposition, which have two bags: $1:\{a,b,c\}$ and $2:\{b,c,d\}$. By Definition \ref{def:tree_decompose}, we know that the treewidth of matrix $D$ is $2$, which satisfies the sparsity of the columns in matrix $D$.
}
\label{fig:treewidth_example}
\end{figure*}
We start with the definition of the dual graph for a given matrix.
\begin{definition}
The generalized dual graph of the matrix $A \in \R^{m \times n}$ with block structure $m = \sum_{i=1}^k m_i$ is the graph $G_A = (V,E) $ with $V= \{1, \cdots, n\}$.

We say an edge $(i,j) \in E$ if and only if $A_{r,i} \neq 0$ and $A_{r,j} \neq 0$ for some $r$, where we use $A_{r,i}$ to mean the submatrix of $A$ in row $i$ and column block $r$.
\end{definition}

We decompose a graph into tree by using tree decomposition. A simple example is shown in Figure \ref{fig:treewidth_example}. The following is the definition of tree decomposition.
\begin{definition} \label{def:tree_decompose}
A tree decomposition is a mapping of graphs into trees. For graph $G$, the tree decomposition is defined as pair $(M,T)$, where $T$ is a tree, and $M : V(T) \rightarrow 2^{V(G)}$ is a family of subsets of $V(G)$ called bags labelling the vertices of $T$, satisfies that:
\begin{itemize}
    \item The vertices maintained by all bags is the same as those of graph $G$: $\cup_{t \in V(T)} M(t) = V(G)$.
    \item For every vertex $v \in V(G)$, the nodes $t \in V(T)$ satisfying $v \in M(t)$ is a connected subgraph of $T$, and
    \item For every edge $e = (u,v) \in V(G)$, there exist a node $t \in V(T)$ so that $u,v \in M(t)$.
\end{itemize}
where $V(\cdot)$ denote the vertex set of a graph. 

The width of a tree decomposition $(M,T)$ is $\max\{ |M(t)| - 1 : t \in T \}$. The treewidth of $G$ is the minimum width over all tree decompositions of $G$. 
\end{definition}

Next, we give the definition of the treewidth for a given matrix.
\begin{definition}[Treewidth $\tau$]\label{def:treewidth} 
Given a matrix $A \in \R^{m \times n}$, we construct its graph $G = (V,E)$ as follows: The vertex set are columns $[n]$; An edge $(i,j) \in E$ if and only if there exists $k \in [m]$ such that $A_{k,i} \ne 0, A_{k,j}\ne 0$. Then, the treewidth of the matrix $A$ is the treewidth of the constructed graph.
In particular, every column of $A$ is $\tau$-sparse.\footnote{In this paper, we use $A$ or $A^\top$ has treewidth $\tau$ interchangably.} 
\end{definition}
Then, we present the definition for Cholesky factorization.

\begin{definition}[Cholesky Factorization]\label{def:cholesky_dec}
Given a positive-definite matrix $P\in \R^{n\times n}$, there exists a unique Cholesky factorization $P = LL^\top \in \R^{n \times n}$, where $L \in \R^{n \times n}$ is a lower-triangular matrix with real and positive diagonal entries. 
\end{definition}

The following is a folklore lemma regarding the relationship between $n\tau$ and $m$.

\begin{lemma}
Let $G$ be a graph with treewidth $\tau$, then $m\leq n\tau$.
\end{lemma}

We slightly change the query model of the streaming model.

\begin{assumption}
We assume any entry of $A$ and any entry of $b$ can be queried. Once all entries of $A$ have been queried, we count as one pass.
\end{assumption}

\subsection{Small space implementation}\label{sec:space_treewidth}

We first introduce the time and space needed for the Cholesky factorization with treewidth $\tau$.

\begin{lemma}[\cite{bghk95,d06,dly21}]\label{lem:fast_cholesky}
For any positive diagonal matrix $H \in \R^{m \times m}$, for any matrix $A^\top \in \R^{n \times m}$ with treewidth $\tau$, we can compute the Cholesky factorization $A^\top H A = L L^\top \in \R^{n \times n}$ in $O(m \tau^2)$ time,  where $L \in \R^{n \times n}$ is a lower-triangular matrix with real and positive entries.
$L$ satisfies the property that every column is $\tau$-sparse.
\end{lemma}

We consider the following pseudocode for computing the Cholesky factorization, and analyze its space complexity. 

\begin{algorithm}[!ht]
\caption{Cholesky factorization}
\label{alg:cholesky}
\begin{algorithmic}[1]
\Procedure{Cholesky}{$M\in \R^{n\times n}$} \Comment{Lemma \ref{lem:fast_cholesky}}
\For{$j=1\to n$}
\State $L_{j,j}\gets \sqrt{M_{j,j}-\sum_{k=1}^{j-1} L_{j,k}^2}$
\For{$i=j+1\to n$}
\State $L_{i,j}\gets \frac{1}{L_{j,j}}\cdot (M_{i,j}-\sum_{k=1}^{j-1}L_{i,k}L_{j,k})$
\EndFor
\EndFor
\State \Return $L$
\EndProcedure
\end{algorithmic}
\end{algorithm}

\begin{lemma}
\label{lem:cholesky_space}
For any positive diagonal matrix $H\in \R^{m\times m}$ and $A^\top\in \R^{n\times m}$ with treewidth $\tau$, we can compute the Cholesky factorization $A^\top HA=LL^\top\in \R^{n\times n}$ in $O(n\tau)$ space and $O(\tau)$ space, where $L\in \R^{n\times n}$ is a lower triangular matrix with the nonzero part being positive. $L$ has column sparsity $\tau$.
\end{lemma}

\begin{proof}
We give a space-efficient implementation for two steps: 1).\ supplying the matrix $A^\top HA$ and 2).\ computing the Cholesky factors. 

First note that $L$ has $O(n\tau)$ nonzero entries, therefore, it suffices to allocate $O(n\tau)$ space to store $L$, which will be crucial for our later procedures.

In such space budget, we can also store all entries of $H$ in $m$ space.

To compute $(A^\top HA)_{i,j}$, we have $(A^\top HA)_{i,j}=\sum_{k=1}^m A_{i,k}H_{k,k}A_{k,j}$, this value can be computed as accumulating $m$ terms, and compute one term only needs $O(1)$ space, so computing one entry of $A^\top HA$ takes $O(1)$ space.

For each entry of $L$, it requires one entry of $A^\top HA$ and at most $n$ reads of prior-computed entries of $L$. Again, this value can be accumulated, so computing one entry of $L$ takes $O(1)$ space.

As $L$ only has $O(n\tau)$ nonzero entries, and each entry takes $O(1)$ space, computing  and storing $L$ takes $O(n\tau)$ space.

As computing $n$ entries of $L$ takes 1 pass over $A$, it takes $O(\tau)$ passes since $L$ has $O(n\tau)$ nonzero entries.
\end{proof}

To compute this column-sparse Cholesky factor, it is imperative to construct a corresponding elimination tree. This tree should be viewed as a permutation matrix $P\in \R^{n\times n}$ that permutes the rows of $A^\top HA$ to achieve good sparsity patterns. We recall their algorithm (Algorithm \ref{alg:perm}).

\begin{algorithm}
\caption{Computing the permutation matrix}
\label{alg:perm}
\begin{algorithmic}[1]
\Procedure{Permutation}{$G$}
\If{$|V(G)|\leq f(\tau)$}
\State let $\pi$ be a random ordering of $V(G)$
\State construct a path on $V(G)$ according to $\pi$, let $u$ be the last vertex of $\pi$
\State \Return $(\pi, u)$
\EndIf
\State $(G_1, S, G_2)\gets \textsc{ApproxBalancedSeparator}(G)$
\State $(\pi_1, v_1)\gets \textsc{Permutation}(G_1)$
\State $(\pi_2, v_2)\gets \textsc{Permutation}(G_2)$
\State $\pi\gets \text{random ordering of $S$}$
\State construct a path on $S$ according to $\pi$, let $u$ be the first vertex of $\pi$ and $v$ be the last vertex
\State set $u$ as the parent of $v_1$ and $v_2$
\State \Return $(\pi_1+\pi_2+\pi, v)$
\EndProcedure
\end{algorithmic}
\end{algorithm}

\begin{lemma}
\label{lem:elim_tree_space}
For any positive diagonal matrix $H\in \R^{m\times m}$ and $A^\top\in \R^{n\times m}$ with treewidth $\tau$, one can compute a permutation matrix $P\in \R^{n\times n}$ such that the Cholesky factor of $PA^\top HAP^\top=LL^\top$ has the property that $L$ has column sparsity $\tau$. Moreover, matrix $P$ can be computed in $\wt O(n\tau)$ space and $\wt O(\tau)$ passes.
\end{lemma}

\begin{proof}
First note that the permutation matrix can be specified by the ordering $\pi$, which can be stored in $O(n)$ space. So it suffices to argue for the space consumption of Algorithm~\ref{alg:perm}. We start with the space usage \emph{without} \textsc{ApproxBalancedSeparator}.

Note that all the algorithm does is to compute an ordering of vertices on a partition of vertices. For each recursive call, it is enough to store the ordering on the subset $S$. Thus, the algorithm uses $O(n)$ space.

We need to analyze the space usage of \textsc{ApproxBalancedSeparator}. Our strategy will be simply using runtime as an upper bound of the space. Using~\cite{bgs21}, we obtain a $\wt O(\tau)$ width tree decomposition with $\wt O(n\tau)$ time, implying $\wt O(n\tau)$ space. Given this decomposition, the $2/3-$balanced separator can be found by scanning through the tree decomposition without allocating more than $\wt O(n\tau)$ space (Details see Theorem 4.17 of~\cite{dly21}). Thus, the overall space usage is $\wt O(n\tau)$.

Regarding the number of passes, as scanning through the graph takes $O(n)$ time, the \textsc{ApproxBalancedSeparator} takes at most $\wt O(\tau)$ passes. Remaining operations take $O(1)$ passes.
\end{proof}

Now that we can compute $L$ in $\wt O(n\tau)$, we show how to implement some fundamental queries with $L$, such as solving the linear system $Lx=v$.

Given a lower triangular $L\in \R^{n\times n}$ with column sparsity $\tau$, we can solve for $x$ from top to bottom. For each coordinate, we need to compute $x_j\leftarrow v_j/L_{j,j}$ in unit space, then compute $v-x_j L_{*,j}$. Each iteration can be implemented in place, in $O(n)$ space. This gives the following lemma.

\begin{lemma}[\cite{dly21}]\label{lem:fast_cholesky_compute}
For any positive diagonal matrix $H \in \R^{m \times m}$, given matrix $A^\top \in \R^{n \times m}$ with treewidth $\tau$ and the Cholesky factorization $A^\top H A = LL^\top \in \R^{n \times n}$.
\begin{itemize}
    \item for any vector $v \in \R^n$, we can compute $L^{-1} v$ in $O(n \tau)$ time and $O(n)$ space.
    \item for any vector $v \in \R^n$, we can compute $L^{-\top}v$ in $O(n \tau)$ time and $O(n)$ space. 
\end{itemize}
\end{lemma}

To implement a pass-efficient LP solver, we use the hybrid barrier-based IPM. Recall the Hessian of the hybrid barrier is in the form of $A^\top (S^{-2}\cdot \Sigma+I) A$, so we first show how to implement the matrix-vector product in the form of $(A^\top HA)^{-1}v$ for non-negative diagonal matrix $H$.

\begin{lemma}
\label{lem:hessian_inv_space}
Let $A^\top\in \R^{n\times m}$ be with treewidth $\tau$ and $H\in \R^{m\times m}$ be non-negative diagonal matrix. Let $v\in \R^n$. Then, $(A^\top HA)^{-1}v$ can be computed in $\wt O(n\tau)$ space and $\wt O(\tau)$ passes.
\end{lemma}

\begin{proof}
First, observe that we can compute the Cholesky factors $A^\top HA=LL^\top$ in $\wt O(n\tau)$ space owing to Lemma~\ref{lem:elim_tree_space} and Lemma~\ref{lem:cholesky_space}. Given these factors, it is then straightforward that $(A^\top HA)^{-1}=L^{-\top} L^{-1}$ and the matrix-vector query can be implemented as computing $L^{-1}v$ then $L^{-\top} (L^{-1}v)$. Owing to Lemma~\ref{lem:fast_cholesky_compute}, both of these solves can be implemented in $O(n)$ space. Thus, the total space usage is $\wt O(n\tau)$. The number of passes follows from Lemma~\ref{lem:cholesky_space} and~\ref{lem:elim_tree_space}.
\end{proof}

For hybrid barrier, we need to compute the leverage score of matrix $\sqrt{H}A\in \R^{m\times n}$. We show that leverage score can be computed in small space.

\begin{lemma}
\label{lem:leverage_space}
Let $A^\top\in \R^{n\times m}$ be with treewidth $\tau$ and $H\in \R^{m\times m}$ be non-negative diagonal matrix. Let $B=\sqrt{H}A$. The $i$-th leverage score is defined as $\sigma_i:=b_i^\top (B^\top B) b_i$ . For any $i\in [m]$, $\sigma_i$ can be computed in $\wt O(n\tau)$ space and $\wt O(\tau)$ passes.
\end{lemma}
\begin{proof}

Let $b_i$ denote the $i$-th row of $B$. Note $(B^\top B)^{-1}=(A^\top HA)^{-1}=L^{-\top}L^{-1}$, then $\sigma_i$ can be formulated as $(L^{-1}b_i)^\top L^{-1}b_i$. 

The proof is then similar to Lemma~\ref{lem:hessian_inv_space} and requires $\wt O(n\tau)$ space and $\wt O(\tau)$ passes.
\end{proof}

\subsection{Main result}
\label{sec:result_treewidth}

Next, we give the main results of our implementations of IPM with hybrid barrier and Lee-Sidford barrier given treewidth $\tau$.
\begin{theorem}[Hybrid barrier]
Under Assumption~\ref{ass:input_of_stream}, given any feasible linear program
\begin{align*}
\min_{x\in \R^n, Ax\geq b} c^{\top} x,
\end{align*}
where $A\in \R^{m\times n}$, $b\in \R^{m}$, and $c\in \R^n$. Suppose $A$ has treewidth $\tau$. Suppose the solution exists and let $x^*\in \R^n$ be the solution. For any $\epsilon >0$, we can outputs an $x$ which is a \emph{nearly-optimal solution}
\begin{align*}
    c^{\top} x - c^{\top} x^* \leq \epsilon.
\end{align*}
in $\wt O(n \tau)$ space and $O(\sqrt{n}\tau^{5/4}\log (1/\epsilon))$ passes.
\end{theorem}
\begin{proof}

The proof is similar to that of Theorem~\ref{thm:nsquare_hybrid}, except we need to give a new space and pass bound for $\wt \delta_x=\wt H(x)^{-1}\nabla f_t(x)$. 

We will first construct the Cholesky factor of $A^\top S(x)^{-2}A$ in $\wt O(n\tau)$ space and $\wt O(\tau)$ passes.

For $\nabla f_t(x)=tc+\sum_{i=1}^m(\sigma_i(x)+n/m)\frac{a_i}{s_i(x)}$, per Lemma~\ref{lem:leverage_space}, each $\sigma_i(x)$ can be computed in $\wt O(n\tau)$ space, and we can always reuse the $\wt O(n\tau)$ space for each coordinate. Thus, it can be computed in $\wt O(n\tau)$ space. We can then store $m$ leverage scores in $O(m)$ space. Note that we only need one pass, as the Cholesky factor has been stored.

For $\wt H(x)$, we show that it can be formulated as $A^\top HA$ for some non-negative diagonal matrix $H$. Recall that 
\begin{align*} 
\wt H(x)=5 \cdot Q(x)+(n/m) \cdot \nabla^2 \phi(x),
\end{align*}
and \begin{align*}
Q(x)=A^\top (\Sigma(x)+S^{-2}(x)) A
\end{align*}
where $\Sigma(x)$ denote the diagonal matrix for leverage score, and
\begin{align*} 
\nabla^2 \phi(x)=A^\top S^{-2}(x)A,
\end{align*}
so 
\begin{align*} 
\wt{H}(x)=A^\top (5\Sigma(x)+(5+n/m)S^{-2}(x)) A.
\end{align*}

To compute the Cholesky factor for $\wt H(x)$, we note that Lemma~\ref{lem:cholesky_space} only queries entries of the diagonal matrix on-demand and never need to store them, hence, for each entry query, we can compute the leverage score in $\wt O(n\tau)$ space. The total space consumption of this step is $\wt O(n\tau)$. As we only read through $A$ once, it takes one pass.

To compute the permutation matrix $P$ via Lemma~\ref{lem:elim_tree_space}, notice that it does not depend on the diagonal matrix, so it takes $\wt O(n\tau)$ space and $\wt O(\tau)$ passes.

Finally, to compute $\wt \delta_x$, we invoke Lemma~\ref{lem:hessian_inv_space} and it requires $\wt O(n\tau)$ space. This completes the proof of space.

Regarding the number of passes, hybrid barrier requires $\wt O((nm)^{1/4}\log(1/\epsilon))$ iterations, and we need extra $\wt O(\tau)$ passes to compute the two Cholesky factors per iteration. Thus, the total number of passes is
\begin{align*}
    & ~ \wt O(\sqrt{n}\tau^{5/4} \log(1/\epsilon))
\end{align*}
as $m\leq n\tau$.
\end{proof} 

Take a step further, we show how to compute Lee-Sidford barrier in $\wt O(n\tau)$ space and therefore improve the passes for small treewidth LP.

\begin{theorem}[Lee-Sidford barrier, formal version of Theorem \ref{thm:lsbarrier_treewidth_informal}] \label{thm:lsbarrier_treewidth}
Under Assumption~\ref{ass:input_of_stream}, given any feasible linear program
\begin{align*}
\min_{x\in \R^n, Ax\geq b} c^{\top} x,
\end{align*}
where $A\in \R^{m\times n}$, $b\in \R^{m}$, and $c\in \R^n$. Suppose $A$ has treewidth $\tau$. 
Suppose the solution exists and let $x^*\in \R^n$ be the solution. For any $\epsilon >0$, we can outputs an $x$ which is a \emph{nearly-optimal solution}
\begin{align*}
    c^{\top} x - c^{\top} x^* \leq \epsilon.
\end{align*}
in $\wt O(n \tau)$ space and $\wt O(\sqrt{n}\tau\log (1/\epsilon))$ passes.
\end{theorem}

\begin{proof}
We follow the proof strategy of Theorem~\ref{thm:nsquare_lsbarrier}. The algorithm start with an initial weight $w^{(0)}=\frac{n}{m}{\bf 1}$. At each iteration, we need to compute the leverage score for two matrices $W^{1/2}A$ and $\wt W^{1/2}A$, then perform coordinate-wise updates. The key is to store necessary information to recover the leverage score.

Define $K_t=A^\top W^{(t)}A$, note that
\begin{align*}
\sigma_i(w^{(t)})=w^{(t)}_i\cdot a_i^\top K_t^{-1} a_i,
\end{align*}
 so it suffices to show how to implement the quadratic form in $\wt O(n\tau)$ space. We prove it via induction on the number of iterations.

When $t=0$, $w^{(0)}_i=\frac{n}{m}$ and the weight can be stored in unit space.

Assume up to $t-1$ iterations, we maintain Cholesky factors for $K_1,\ldots, K_{t-1}$ as $L_1, \ldots, L_{t-1}$. Additionally, suppose we have access to $w^{(t-1)}$. We show how to compute $w^{(t)}$ and the $t$-th Cholesky factor $L_t$ in $\wt O(n\tau)$ space.

In subroutine \textsc{Round} (Algorithm~\ref{alg:round}), we can compute $\sigma_i(w^{(t-1)})=w^{(t-1)}_i\cdot a_i^\top K_{t-1}^{-1} a_i$ in $O(n)$ space by Lemma~\ref{lem:leverage_space} and thus compute $\rho_i(w^{(t-1)})$. We can then proceed to $\wt w^{(t)}$ in $O(1)$ space. Then, we call \textsc{Descent} (Algorithm~\ref{alg:descent}) and can compute $\rho_i(\wt w_i^{(t)})$ in $O(n)$ space, yielding $w_i^{(t+1)}$. Since we can supply the $i$-th diagonal of $W^{(t+1)}$, we can then compute $L_t$ in $\wt O(n\tau)$ space owing to Lemma~\ref{lem:cholesky_space}. We need $\wt O(\tau)$ passes over $A$ to construct compute $w^{(t+1)}$.

Our argument assumes we always have access to $w^{(t)}\in \R^m$ which requires $O(m)$ space. To resolve this issue, whenever we need $w^{(t)}$, we recursively compute $w^{(t-1)}, w^{(t-2)},\ldots$ all the way back to $w^{(0)}_i=\frac{n}{m}$. Each computation takes $\wt O(\tau)$ pass over $A$, so to compute $w^{(t+1)}$ we need $\wt O(t\cdot \tau)$ passes in total but only $\wt O(n\tau)$ space. 

As the Lewis weights iteration only proceeds for $T=\poly(\log m)$ rounds, the total number of passes is at most $\wt O(T^2\cdot \tau)$. The IPM needs $O(\sqrt{n}\log(1/\epsilon))$ iterations, so the number of passes is $\wt O(\sqrt{n}\tau\log(1/\epsilon))$ in total.
\end{proof}

\begin{remark}
Let us instantiate the Lee-Sidford barrier result in terms of graph. Let $A\in \R^{m\times n}$ be the signed vertex-edge incidence matrix of a graph, note that our definition of treewidth captures the treewidth of the corresponding graph $G$. 

If the treewidth of $G$ is only polylogarithmic in terms of $n$ and $m$, then any graph problem that can be solved with our dual-only LP will require $\wt O(n)$ space and $\wt O(\sqrt{n}\log(1/\epsilon))$ passes. For the minimum vertex cover problem and exact maximum weight bipartite matching problem, we obtain $\wt O(\sqrt{n})$ passes algorithm in semi-streaming model, closing the gap between this problem and reachability, single source shortest path for graphs with polylogarithmic treewidth.
\end{remark}
\else

\fi

\end{document}